\tikzstyle{main node}=[draw,circle,inner sep=1,outer sep=2,thick,minimum size=12pt]
\newtheorem{lemma}{Lemma}
\newtheorem{theorem}{Theorem}
\newtheorem{remark}{Remark}
\newcommand{\EM}[1]{{\it\textcolor{Maroon}{#1}}}
\newcommand{\B}{\{0,1\}}
\newcommand{\ONE}{\mathbf{1}}
\newcommand{\ZERO}{\mathbf{0}}
\newcommand{\cst}{\mathrm{cst}}
\newcommand{\id}{\mathrm{id}}
\def\S{\mathcal{S}}
\def\A{\mathcal{A}}
\def\P{\mathcal{P}}
\def\S{\mathcal{S}}
\def\K{\mathcal{K}}
\def\L{\mathcal{L}}
\def\calB{\mathcal{B}}
\def\FP{\mathrm{FP}}
\def\fp{\mathrm{fp}}
\def\Pr{\mathrm{Pr}}
\title{Asynchronous dynamics of isomorphic Boolean networks}
\author{
Florian Bridoux\footnote{ Universit\'e Côte d’Azur, CNRS, I3S, Sophia Antipolis, France. ({\tt  bridoux@i3s.unice.fr})},
Aymeric Picard Marchetto\footnote{Universit\'e Côte d’Azur, CNRS, I3S, Sophia Antipolis, France. ({\tt  picard@i3s.unice.fr})},
Adrien Richard\footnote{Universit\'e Côte d’Azur, CNRS, I3S, Sophia Antipolis, France. ({\tt  adrien.richard@cnrs.fr})},
}
\begin{document}

\maketitle

\begin{abstract}
A Boolean network is a function $f:\B^n\to\B^n$ from which several dynamics can be derived, depending on the context. The most classical ones are the synchronous and asynchronous dynamics. Both are digraphs on $\B^n$, but the synchronous dynamics $\S(f)$ has an arc from $x$ to $f(x)$ while the asynchronous dynamics $\A(f)$ has an arc from $x$ to $x+e_i$ whenever $x_i\neq f_i(x)$. Clearly, $\S(f)$ and $\A(f)$ share the same information, but what can be said about these objects up to isomorphism? We prove that if $\A(f)$ is only known up to isomorphism then, with high probability, $\S(f)$ can be fully reconstructed up to isomorphism. We then show that the converse direction is far from being true. In particular, if $\S(f)$ is only known up to isomorphism, very little can be said on the attractors of $\A(f)$. For instance, if $f$ has $p$ fixed points, then $\A(f)$ has at least $\max(1,p)$ attractors, and we prove that this trivial lower bound is tight: there always exists $h$ with $\S(h)\sim \S(f)$ such that $\A(h)$ has exactly $\max(1,p)$ attractors. But $\A(f)$ may often have many more attractors since we prove that, with high probability, there exists $h$ with $\S(h)\sim\S(f)$ such that $\A(h)$ has $\Omega(2^n)$ attractors.
\end{abstract}

\section{Introduction}

A \EM{Boolean network} with $n$ components is a function 
\[
f:\B^n\to \B^n,\quad x=(x_1,\dots,x_n)\mapsto f(x)=(f_1(x),\dots,f_n(x)).
\]
Many dynamics can be derived from $f$, the most two common ones being the \EM{synchronous} and the \EM{asynchronous} dynamics, denoted \EM{$\S(f)$} and \EM{$\A(f)$}, respectively. Both are digraphs with vertex set $\B^n$ but they differ in their arcs: $\S(f)$ has an arc from $x$ to $f(x)$ for every $x\in \B^n$ (and is usually identified with $f$),  while $\A(f)$ has an arc from $x$ to $y$ whenever $x$ and $y$ only differ in component $i$ and $f_i(x) \neq x_i$. See Figure \ref{fig:S_and_A} for an illustration. 

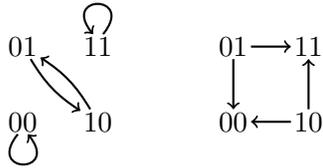
\begin{figure}[t]
\[
\begin{array}{ccc}
\begin{array}{c}
\begin{tikzpicture}
\node[inner sep=1] (00) at (0,0){$00$};
\node[inner sep=1] (01) at (0,1){$01$};
\node[inner sep=1] (10) at (1,0){$10$};
\node[inner sep=1] (11) at (1,1){$11$};
\path[thick,->]
(01) edge[bend right=13] (10)
(10) edge[bend right=13] (01)
;
\end{tikzpicture}
\end{array}
&&
\begin{array}{c}
\begin{tikzpicture}
\node[inner sep=1] (00) at (0,0){$00$};
\node[inner sep=1] (01) at (0,1){$01$};
\node[inner sep=1] (10) at (1,0){$10$};
\node[inner sep=1] (11) at (1,1){$11$};
\path[thick,->]
(01) edge (00)
(01) edge (11)
(10) edge (00)
(10) edge (11)
;
\end{tikzpicture}
\end{array}
\end{array}
\]
{\caption{\label{fig:S_and_A} Synchronous (left) and asynchronous (right) dynamics of the shift $f(x_1,x_2)=(x_2,x_1)$. 
}}
\end{figure}

\medskip
The choice of dynamics depends on the context, and these are multiple : Boolean networks have many real-life applications, including gene networks \cite{J02,K69,T73,TK01}, neural networks \cite{G85,H82,MP43}, reaction systems \cite{ehrenfeucht2007reaction}, social interactions \cite{GT83,PS83}, argumentation frameworks \cite{azpeitia2024bridging} and more \cite{GM90,TA90}. Notably, Boolean networks were introduced as models for the dynamics of gene networks in the seminal works of Kaufman \cite{K69} and Thomas \cite{T73}; but Kaufman proposed to use the synchronous dynamics while Thomas argued that the asynchronous dynamics is more realistic. Since then, both dynamics are used to model the same object, which is disturbing since the two dynamics often describe radically different behaviours. It then seems natural to compare these dynamics, see e.g.  \cite{GDX08,GS08,GS10,GN12,NS17,R95}. 

\medskip
Many important dynamical parameters are invariant by isomorphism such as the number of and size of attractors\footnote{The \EM{attractors} of a digraph are its terminal strong components; see Section \ref{sec:def} for a formal definition.}, the transient length and so on. In other words, one often consider that the important part of the behaviors described by $\S(f)$ or $\A(f)$ is contained in the structure of the state transitions rather than in state labels. This leads us to study the relationships between $\S(f)$ and $\A(f)$ up to isomorphism; the isomorphic relation is denoted \EM{$\sim$}, and \EM{$h\sim f$} means $\S(f)\sim\S(h)$. In other words, given two Boolean networks $f,h$, we are interested in the following two questions:
\begin{itemize}
\item 
What can be said about $\S(h)$ when $\A(h)\sim\A(f)$?
\item 
What can be said about $\A(h)$ when $\S(h)\sim \S(f)$?
\end{itemize}

\medskip
Our first result shows that $\A(h)\sim \A(f)$ almost always implies $\S(h)\sim \S(f)$: the unlabelled synchronous dynamics can almost always be fully reconstructed from the unlabelled asynchronous dynamics. 

\begin{theorem}\label{thm:A_to_S}
Taking $f$ uniformly at random, the probability that $\A(h)\sim \A(f)$ implies $h\sim f$ for every $h$ tends to $1$ as $n\to\infty$.
\end{theorem}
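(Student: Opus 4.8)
The plan is to reduce the theorem to a rigidity property of the $n$-cube and then to establish that property by reconstructing the cube from $\A(f)$. First I would record the basic correspondence: the map $f\mapsto\A(f)$ is a bijection between Boolean networks and the digraphs on $\B^n$ all of whose arcs join two vertices at Hamming distance $1$ (given such a digraph $D$, set $f_i(x)=1-x_i$ exactly when $D$ has the arc $x\to x+e_i$). Write $Q_n$ for the $n$-cube and $G_f\subseteq Q_n$ for the underlying undirected graph of $\A(f)$. An edge $\{x,x+e_i\}$ lies in $G_f$ iff at least one of its two possible arcs is present, and since the relevant pairs of values of $f$ are disjoint across edges, the edges of $G_f$ are present independently, each with probability $3/4$; thus $G_f$ is a dense random subgraph of $Q_n$. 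If $\phi$ is an isomorphism $\A(f)\to\A(h)$ it maps arcs to arcs, hence maps $G_f$ onto the underlying graph $G_h\subseteq Q_n$; conversely, for $\phi\in\mathrm{Aut}(Q_n)$ one checks $\phi(\A(f))=\A(\phi f\phi^{-1})$, so $h=\phi f\phi^{-1}\sim f$. Consequently it suffices to prove the rigidity statement: with high probability, every bijection $\phi$ of $\B^n$ with $\phi(G_f)\subseteq Q_n$ lies in $\mathrm{Aut}(Q_n)$.

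To prove rigidity I would show that $G_f$ canonically determines $Q_n$ up to automorphism, so that an edge-preserving $\phi$ is forced to respect the cube. The robust tool is the common-neighbour relation: in $Q_n$ two distinct vertices have a common neighbour iff they are at Hamming distance exactly $2$, and because $G_f\subseteq Q_n$, a common neighbour seen in $G_f$ still certifies distance $2$ with no false positives, while each distance-$2$ pair keeps a common neighbour with probability $1-(7/16)^2>0.8$. This recovers a large canonical portion of the distance-$2$ graph; I would then bootstrap from distance $2$ to distance $1$, using that the two parity classes of $Q_n$ carry halved-cube structures that must be glued along the short edges actually present in $G_f$, thereby recovering the adjacency of $Q_n$ up to automorphism. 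Since $\phi(G_f)\subseteq Q_n$ forces $\phi$ to preserve edges and common-neighbour pairs, it preserves this reconstructed structure and is therefore an automorphism. Once the cube is pinned down, $f$ is read off directly, since $f_i(x)\ne x_i$ iff $\A(f)$ contains the arc $x\to x+e_i$, which recovers $f$ up to isomorphism.

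The hard part will be making the rigidity step quantitative under the constant fraction of missing data: gluing the two halved cubes into $Q_n$ while tolerating the $25\%$ of deleted edges and the $\sim19\%$ of distance-$2$ pairs with no surviving common neighbour requires that the surviving auxiliary graphs remain connected and expanding on the cube, i.e. stay above the relevant percolation thresholds. I expect this to be the genuine obstacle because the tempting shortcut — a direct first-moment bound $\sum_{\phi\notin\mathrm{Aut}(Q_n)}(1/4)^{b(\phi)}$, where $b(\phi)$ counts the cube edges sent by $\phi$ to non-edges — is exactly on the boundary: $\log_2|\mathrm{Sym}(\B^n)|\sim n2^n$ matches the total destruction budget $2\,|E(Q_n)|=n2^n$, so the first moment alone does not converge. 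Overcoming this forces one either to commit to the constructive reconstruction above, or to stratify permutations by their distance to the nearest automorphism and argue that near-automorphisms, though few, still destroy proportionally many edges while far permutations are killed by their enormous defect; controlling this borderline regime is the crux.
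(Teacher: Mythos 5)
Your reduction is sound and coincides exactly with the paper's: $U\A(f)$ is a spanning subgraph of $Q_n$ with each edge present independently with probability $3/4$; an isomorphism $\A(f)\to\A(h)$ is an embedding of $U\A(f)$ into $Q_n$; and an embedding that happens to be a cube automorphism conjugates $f$ to $h$ (this is the paper's Lemma~\ref{lem:isometry}, proved via the affine form $\pi(x)=\sigma(x)+a$ of isometries, and your bijectivity of $f\mapsto\A(f)$ plus $\phi(\A(f))=\A(\phi f\phi^{-1})$ gives the same conclusion). So, as you say, everything hinges on the rigidity statement that with high probability every embedding of $G_f$ into $Q_n$ is an automorphism --- what the paper calls $G_f$ being \emph{solid} (Theorem~\ref{thm:solid}, proved for any $p\geq 0.72$).

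The genuine gap is that you never prove this rigidity. The common-neighbour observation only recovers part of the distance-$2$ relation, and the proposed bootstrap (``halved-cube structures that must be glued along the short edges actually present'') is a hope, not an argument: even a fully known distance-$2$ graph on each parity class does not by itself force the gluing, and you explicitly concede that the quantitative version under $25\%$ edge deletion is unresolved (``controlling this borderline regime is the crux''). Your diagnosis that the naive first moment over all $(2^n)!$ bijections is borderline-divergent is correct, but the stratification you gesture at is never carried out. The paper's missing idea is a purely local certificate: a \emph{staple} on an edge $xy$ is one of the $n-1$ pairwise disjoint paths of length $3$ in $Q_n$ between $x$ and $y$, and if at least $4$ staples survive then every embedding $\pi$ must have $d(\pi(x),\pi(y))=1$, since $Q_n$ admits only $3$ edge-disjoint length-$3$ paths between vertices at distance $3$ and parity excludes distance $2$ (Lemma~\ref{lem:staples}). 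Crucially, staples need only survive in the \emph{solid closure} $\tilde G$, so solidity propagates edge by edge in increasing weight order: for an edge $x^iy^i$, each of the $w(x^i)$ ``downward'' staples has two of its three edges already certified solid, leaving a single random edge to check, and a correlation inequality (Lemma~\ref{lem:conditional_proba}) legitimises the conditioning on $A_{i-1}\subseteq\tilde G$. The resulting product bound $\prod_w(1-q^n\ell^w)^{(n-w)\binom{n}{w}}\to 1$ closes the proof without ever enumerating permutations. Without some such local-to-global mechanism (or a completed stratification argument), your proposal establishes the correct framework but not the theorem.
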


Our second result shows that, in the other direction, the unlabelled asynchronous dynamics cannot be fully reconstructed from the unlabelled synchronous dynamics, unless the latter corresponds to a constant or the identity. 

\begin{theorem}\label{thm:S_to_A}
If $f\neq\cst,\id$ and $n\geq 3$,  there exists $h\sim f$ such that $\A(h)\not\sim \A(f)$.
\end{theorem}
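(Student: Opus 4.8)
The plan is to show that the isomorphism type of $\A(h)$ is \emph{not} constant as $h$ ranges over the $\sim$-class of $f$; any $h$ in that class with $\A(h)\not\sim\A(f)$ then proves the statement. Encode the functional graph of $f$ as an undirected multigraph $G$ on $\B^n$ having an edge $\{x,f(x)\}$ for each non-fixed point $x$ (loops discarded). Every $h\sim f$ has the form $h=\pi f\pi^{-1}$ for a permutation $\pi$ of $\B^n$, and then the out-degree of $\pi(x)$ in $\A(h)$ is exactly $d(\pi(x),\pi(f(x)))$, the Hamming length of the corresponding edge of $G$ after the re-embedding $\pi$. Thus the question reduces to: can one re-embed $G$ into the hypercube so as to change an isomorphism invariant of $\A$? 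I would split on whether $G$ has a \emph{universal} vertex, i.e.\ one adjacent to all $2^n-1$ others.

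In the main case, where $G$ has no universal vertex, I would track the maximum out-degree of $\A(h)$, which equals the maximum edge length of $G$ under the embedding. Some conjugate attains the value $n$: since $f\neq\id$ there is an edge, and placing its endpoints at an antipodal pair of $\B^n$ gives length $n$. I claim some conjugate has maximum out-degree $<n$, i.e.\ there is an embedding with no edge mapped to an antipodal pair. Identifying the antipodal pairs of $\B^n$ with a perfect matching $M^{*}$, such an embedding exists iff the complement $\overline G$ has a perfect matching (pair the matching edges of $\overline G$ with those of $M^{*}$, then assign endpoints). Because $G$ underlies a functional graph it has at most $2^n$ edges, hence at most four vertices of degree $\ge 2^{n-1}$; combined with the absence of a universal vertex and $2^n\ge 8$, a short Tutte/greedy argument yields a perfect matching of $\overline G$. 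Then the maximum out-degree takes two distinct values over the class, so not all $\A(h)$ are isomorphic.

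In the remaining case $G$ has a universal vertex $c$. Since $f(c)$ is a single value, the edge $\{c,y\}$ arises from $f(c)=y$ for at most one $y$, forcing $f(x)=c$ for all $x\notin\{c,d\}$, where $d:=f(c)$. If $d=c$ then $f=\cst$, excluded. If $d\neq c$, I would branch on $f(d)$: when $f(d)\neq d$, the graph $G$ is a spanning star at $c$ plus one extra edge, so the number of arcs of $\A(h)$ equals $n2^{n-1}+d(\pi(d),\pi(f(d)))$, which is manifestly non-constant; when $f(d)=d$, $G$ is exactly the uniform spanning star and both the arc count and the out-degree sequence are forced. In this last subcase I would use instead that $\A(h)$ has a unique sink (the image of $d$) whose in-degree computes to $n$ if $d(\pi(c),\pi(d))=1$ and to $n-d(\pi(c),\pi(d))$ otherwise; realizing distances $1$ and $2$ (possible as $n\ge 2$) gives two different sink in-degrees, hence non-isomorphic $\A(h)$.

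The main obstacle is the perfect-matching claim of the main case: verifying that the complement of a sparse, non-dominating graph on $2^n$ vertices has a perfect matching, by controlling the few high-degree vertices and matching the rest inside the near-complete complement. This is exactly where $n\ge 3$ is needed, and the bound is sharp: for $n=2$ the claim genuinely fails, since the $3$-cycle-plus-fixed-point map has $G$ equal to a triangle plus an isolated vertex, whose complement is $K_{1,3}$ with no perfect matching — and indeed one checks directly that every conjugate of this $f$ yields isomorphic asynchronous dynamics. Once this matching lemma is established for $n\ge 3$, the universal-vertex analysis is routine.
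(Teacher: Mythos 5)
Your proposal is correct, and it takes a genuinely different route from the paper. The paper works with the invariants $|\Delta^+|$ and $|\Delta^-|$ (the number of vertices of out-degree, resp.\ in-degree, $n$ in the asynchronous graph) and splits on whether $\S(f)$ contains $2P_1$, handling the degenerate cases by explicit transposition surgery on a handful of configurations (Lemmas \ref{lem:2P_1}--\ref{lem:no_2P_1_3}). You instead split on whether the transition multigraph $G$ has a universal vertex, which isolates exactly the near-star maps $f(x)=c$ for $x\neq c,d$, $f(c)=d$ (the same family as the paper's Lemma \ref{lem:no_2P_1_3}); there your arc-count formula $n2^{n-1}+d(\pi(d),\pi(f(d)))$, the cancellation showing the out-degree sequence is forced when $f(d)=d$, and the sink in-degree computation ($n$ if $d(\pi(c),\pi(d))=1$, else $n-d(\pi(c),\pi(d))$) are all correct, and the sink in-degree plays the role of the paper's $\Delta^-$ comparison. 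In the main case your invariant is the maximum out-degree, i.e.\ whether $\Delta^+$ is empty, and the reduction to a perfect matching of $\overline{G}$ (avoiding the antipodal pairing) is valid. Notably, your matching argument absorbs uniformly the paper's intermediate case (no $2P_1$ but at least two periodic configurations, Lemma \ref{lem:no_2P_2}, e.g.\ a single $3$-cycle plus fixed points), which the paper must treat by bespoke case analysis; the price is that you must prove the matching lemma, which the paper never needs. Your $n=2$ exception (the $3$-cycle plus fixed point) matches the paper's Remark \ref{rem:counter_example_A_to_S}.

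The one substantive step you leave as a sketch is the matching lemma, and as stated your sketch is slightly off: ``at most four vertices of degree $\geq 2^{n-1}$'' plus no universal vertex does not by itself feed a greedy-then-Dirac argument (after greedily matching up to four exceptional vertices, the minimum degree of the remainder misses the Dirac threshold by a constant, and one must also rule out two low-complement-degree vertices sharing their unique complement-neighbor). The lemma is nonetheless true, and the clean completion is via Tutte's theorem using the edge bound $|E(G)|\leq 2^n$ directly: if $o(\overline{G}-S)>|S|$, then all pairs between distinct components of $\overline{G}-S$ are edges of $G$, and a short count shows this forces either a universal vertex (the case $S=\emptyset$ with a singleton component) or more than $2^n$ edges whenever $2^n\geq 8$; for instance, with $|S|=1$ two singleton components would already give $2(2^n-2)-1>2^n$ edges, and $|S|\geq 2$ requires at least $|S|+2$ components by parity, whose cross-pairs exceed $2^n$ for all $n\geq 3$ (the boundary case $2^n=8$ checks by hand). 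With that lemma nailed down, your proof is complete.
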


However, we may ask if at least some important parameters of the unlabelled asynchronous dynamics can be reconstructed from the unlabelled synchronous dynamics. Arguably, one of the most important parameters is the number of attractors in $\A(f)$ \cite{HMM13,R09,RT23,ZYL13}, and we thus focus on this parameter. A basic observation is that $f(x)=x$ if and only if $x$ is of out-degree $0$ in $\A(f)$. In other words, fixed points of $f$ correspond to attractors of $\A(f)$ of size one. Since $\A(f)$ has always at least one attractor, we deduce that 
\begin{equation}\label{eq:trivial_bound}
\textrm{$\A(f)$ has at least $\max(1,\fp(f))$ attractors,}
\end{equation}
where $\fp(f)$ is the number of fixed points of $f$. This is thus a lower bound on the number of asynchronous attractors that only depends on the unlabelled synchronous dynamics. But can we say something stronger? We provide a negative answer, showing that this trivial lower bound is always tight. 

\begin{theorem}\label{thm:small_att}
If $f$ has a fixed point, there exists $h\sim f$ such that $\A(h)$ has exactly $\fp(f)$ attractors. Otherwise, there exists $h\sim f$ such that $\A(h)$ has a unique attractor, which is of size $\leq 4$. 
\end{theorem}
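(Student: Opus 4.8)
The plan is to exploit the full freedom in the conjugacy $h=\pi\circ f\circ\pi^{-1}$ over all permutations $\pi$ of $\B^n$: up to isomorphism I may choose \emph{any} $h$ whose functional graph $\S(h)$ has the same isomorphism type as $\S(f)$, i.e.\ the same multiset of cycle lengths together with the same rooted trees hanging on the cycles. The $p=\fp(f)$ fixed points are forced (they are the loops, hence the size-$1$ attractors of $\A(h)$), so by \eqref{eq:trivial_bound} it suffices to produce one $h$ in which \emph{no other attractor appears}. The engine of the whole argument is an elementary reduction: if $T\subseteq\B^n$ is a union of attractors of $\A(h)$ and every state of $\A(h)$ has a directed path to $T$, then the attractors of $\A(h)$ are \emph{exactly} those contained in $T$. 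Indeed a terminal strong component $C$ must meet $T$ (any $x\in C$ reaches $T$ but cannot leave $C$), and since the elements of $T$ already lie in terminal components, $C$ is one of them. Thus the problem reduces to choosing $\pi$ so that every state of $\A(h)$ reaches a prescribed target $T$: the set of fixed points when $p\geq 1$, and a single small terminal component when $p=0$.

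For the case $p=0$ I would first build the gadget. Here $\A(h)$ has no sink, so the smallest attractors are of size $2$ (a $2$-cycle of $h$ on two Hamming-adjacent states) or, failing that, a size-$4$ ``snake'' $a,b,c,d$ with $b=a+e_i$, $c=b+e_j$, $d=c+e_k$ on which one sets $h(a)=b$, $h(d)=c$, while $h(b)$ and $h(c)$ differ from $b$ and $c$ in exactly the two coordinates leading back into the snake. A direct check shows $\{a,b,c,d\}$ is strongly connected and terminal in $\A(h)$; crucially it constrains $h$ on only four states and so can be grafted onto one chosen cycle of $f$. A size-$3$ terminal component would require two states to map to their common middle neighbour, which is impossible once $h$ is injective on the relevant part (the permutation case), and this is exactly what pins the bound at $4$; when $f$ has a usable $2$-cycle one takes the size-$2$ gadget instead.

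It remains to choose the rest of $\pi$ so that every state flows into $T$, and the clean way to force reachability is to realise in $\A(h)$ a spanning in-forest of the \emph{hypercube} rooted at $T$. Concretely I run a multi-source breadth-first search from the root set $T$; this assigns to each non-root $x$ a parent $x+e_{i(x)}$, and to make this the $\A(h)$-arc $x\to x+e_{i(x)}$ I only need that $h(x)$ differs from $x$ in the single coordinate $i(x)$ (the value in that coordinate is then determined). Following parents from any state reaches a root in $T$, which yields the required reachability; note that the roots of $T$ must be precisely the fixed points (and the gadget when $p=0$), since a fixed point has out-degree $0$ and cannot carry a parent-arc, and the multi-source forest is exactly what puts each fixed point at a root of its own tree. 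The essential subtlety is that these parent-arcs are allowed to \emph{ascend} in Hamming weight: any ``monotone'' scheme is provably infeasible, because forcing each non-fixed $x$ to satisfy $h(x)\not\geq x$ (or even just to descend on its lowest set bit) makes $\ONE$ impossible to enter and forces $\ZERO$ to be the only fixed point, so a permutation with a single fixed point (a $(2^n{-}1)$-cycle plus one loop) would be unreachable to $T$; by contrast an ascending forest edge imposes no such obstruction. A cycle of $h$, recurrent under $\S(h)$, becomes transient under $\A(h)$ precisely because the acyclic forest must break it at some vertex, supplying a ``leak'' arc off the cycle toward $T$.

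The hard part is the final realisation step, which I expect to be the main technical lemma: the existence of a function $h$ of the prescribed functional-graph isomorphism type that simultaneously respects the one forced coordinate of $h(x)$ at every non-root state, the full specification $h(x)=x$ at the fixed points, and the gadget specification when $p=0$. Since $n-1$ coordinates of each value remain free, there is ample room, but assembling \emph{all} components of $f$ into the single shared cube while keeping $T$ terminal and creating no stray terminal component is delicate, especially in the extreme permutation case where every state lies on a cycle. I would attack this either by an explicit induction on $n$ (splitting $\B^n$ on the last coordinate, sending the upper subcube down into the lower one carrying the BFS roots, and recursing), or by a Hall-type matching of the cycles and trees of $f$ to addresses compatible with the pinned coordinates, using the leak arcs to absorb the unavoidable ascents. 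Verifying terminality of $T$ together with this constrained realisation is where essentially all the difficulty lies.
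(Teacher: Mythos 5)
Your reduction (attractors are exactly the components of $T$ once every state reaches $T$), your size-$\leq 4$ gadgets, your observation that size $3$ is impossible in the permutation case (this is the paper's Remark~\ref{rem:4}), and your correct diagnosis that purely descending schemes are infeasible for a permutation with one fixed point (the paper's case 2.2 in the proof of Theorem~\ref{thm:FP} needs exactly one increasing arc for this reason) are all sound and aligned with the paper. But the proof stops where the theorem actually lives: you explicitly defer the ``final realisation step'' --- constructing $h$ of the prescribed isomorphism type that simultaneously respects one pinned coordinate of $h(x)$ at every non-root state, fixes the fixed points, and keeps the gadget terminal --- to an unspecified ``induction on $n$'' or ``Hall-type matching.'' That step is not a routine verification; it is essentially all of Sections~\ref{sec:small_att} of the paper: the good-order construction (Lemma~\ref{lem:good}), the monotone-permutation-plus-swap-correction argument of Lemma~\ref{decreasing_lemma} with its counting hypothesis of ${n-1\choose \ell-1}+1$ configurations of each weight $\ell$ in $\bar X$, the catalogue of sixteen patterns with the unavoidability case analysis on cycle lengths (Lemma~\ref{lem:unavoidable}), the open-pattern correction (Lemma~\ref{lem:pattern_converging}), and the separate treatment of $f^2=\id$ (Lemma~\ref{lem:f^2=id}). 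Without a proof of your realisation lemma, the argument is a plausible program, not a proof.

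Moreover, your specific formulation is harder to realise than the paper's, not easier. An arbitrary multi-source BFS forest on $Q_n$ pins one coordinate of $h(x)$ at every non-root $x$, including on every cycle of $h$; embedding all cycles of $f$ so that each consecutive pair along each cycle flips the forest-prescribed bit is a global constraint with no obvious inductive or Hall-type structure. The paper avoids this by never prescribing arcs on cycles in advance: it orders $\bar X$ so that cycles are non-intertwined intervals in a weight-monotone order, which guarantees a decreasing arc from every non-fixed state \emph{except} possibly states $x$ with $x<h(x)$, and it repairs each such defect by a single weight-preserving transposition --- this is where the counting condition enters, guaranteeing a swap partner $y$ of weight $w(h(x))$ with $x\not\leq y$. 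Your ``ample room'' heuristic ($n-1$ free coordinates per state) is also demonstrably too coarse at the boundary: for $n=4$ the paper's method fails on four exceptional isomorphism types ($\P_5+2C_5$, $2C_5+C_6$, $C_4+2C_6$, $4C_4$), which require ad hoc explicit constructions (Appendix~\ref{sec:n4}), so any uniform counting or matching argument of the kind you sketch would have to be false as stated for small $n$ and would need the same kind of delicate case analysis you are hoping to avoid.
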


Even if we cannot extract a non-trivial lower bound on the number of asynchronous attractors from the unlabelled synchronous dynamics, we may ask if a non-trivial upper bound could be obtained. However, we will prove that, most often, any such upper-bound is in $\Omega(2^n)$. Indeed, we will prove that if $f^2$ has $d$ images, then there exists $h\sim f$ such that $\A(h)$ has at least $\lfloor d/10\rfloor$ attractors, each of size $\leq 4$ (Lemma \ref{lem:many_att}). Since the expected value of $d$ is asymptotically $(1-e^{-1+e^{-1}})2^n\simeq 0.468\cdot 2^n$ \cite{FO89}, and since the limit distribution of $d$ is a Gaussian distribution with variance $c\cdot 2^n$ for some constant $c>0$ \cite{DS97}, we obtain the following.  

\begin{theorem}\label{thm:many_att}
Taking $f$ uniformly at random, the probability that there exists $h\sim f$ such that $\A(h)$ has at least $0.046\cdot 2^n$ attractors, each of size $\leq 4$, tends to $1$ as $n\to\infty$.
\end{theorem}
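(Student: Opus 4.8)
The plan is to reduce the theorem to a concentration statement about a single integer parameter and then read off the result from the two cited facts on random mappings. Write $d=d(f):=|f^2(\B^n)|$ for the number of images of $f^2$. Lemma \ref{lem:many_att} does all the combinatorial work: for \emph{every} $f$ it produces some $h\sim f$ whose asynchronous dynamics $\A(h)$ has at least $\lfloor d/10\rfloor$ attractors, each of size $\leq 4$. So the random network $f$ never enters the argument except through $d$, and it suffices to prove that $\lfloor d/10\rfloor\geq 0.046\cdot 2^n$ holds with probability tending to $1$. Absorbing the floor and the factor $10$, this reduces to showing $d\geq 0.461\cdot 2^n$ with high probability: when $d\geq 0.461\cdot 2^n$ we get $\lfloor d/10\rfloor\geq 0.0461\cdot 2^n-1\geq 0.046\cdot 2^n$ as soon as $n$ is large enough that $10^{-4}\cdot 2^n\geq 1$.

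For the concentration I would use a second-moment (Chebyshev) argument fed by the two references. By \cite{FO89} the mean is $\mathbb{E}[d]=(1-e^{-1+e^{-1}}+o(1))\,2^n$, and since $1-e^{-1+e^{-1}}\approx 0.468$, for all large $n$ we have $\mathbb{E}[d]\geq 0.467\cdot 2^n$. By \cite{DS97}, $d$ is asymptotically Gaussian with variance of order $2^n$, so in particular $\mathrm{Var}[d]=O(2^n)=o(2^{2n})$. With the fixed gap $t=0.006\cdot 2^n$, Chebyshev's inequality gives
\[
\Pr\!\left[\,d< \mathbb{E}[d]-t\,\right]\;\leq\;\frac{\mathrm{Var}[d]}{t^2}\;=\;\frac{O(2^n)}{(0.006)^2\,2^{2n}}\;=\;O(2^{-n})\;\longrightarrow\;0 .
\]
Since $\mathbb{E}[d]-t\geq 0.467\cdot 2^n-0.006\cdot 2^n=0.461\cdot 2^n$ for large $n$, the event $\{d\geq 0.461\cdot 2^n\}$ has probability tending to $1$.

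Combining the two steps closes the argument: on the high-probability event $\{d\geq 0.461\cdot 2^n\}$, Lemma \ref{lem:many_att} supplies an $h\sim f$ with at least $\lfloor d/10\rfloor\geq 0.046\cdot 2^n$ attractors, each of size $\leq 4$, which is exactly the asserted event. The only genuinely delicate ingredient is Lemma \ref{lem:many_att} itself---the explicit relabelling that converts a constant fraction of the images of $f^2$ into distinct small attractors of $\A(h)$; granting it, the remaining work above is routine. The one point I would double-check is that the numerical slack is real, i.e.\ that $0.046<(1-e^{-1+e^{-1}})/10\approx 0.0468$, which indeed holds and leaves room to absorb the floor, the $o(1)$ error term in the mean, and the Chebyshev gap $t$.
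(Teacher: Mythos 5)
Your proposal is correct and takes essentially the same route as the paper: reduce to Lemma \ref{lem:many_att} and then use the concentration of $d=|f^2(\B^n)|$ around its mean $(1-e^{-1+e^{-1}})2^n\simeq 0.468\cdot 2^n$ via \cite{FO89,DS97}, with the numerical slack $0.046<0.0468$ absorbing the floor and the factor $10$. The only (immaterial) difference is that the paper invokes the Gaussian limit law directly to get $\Pr[d\geq c\cdot 2^n]\to 1$ for any $c<1-e^{-1+e^{-1}}$, while you route the deviation bound through Chebyshev using the variance asymptotics $\mathrm{Var}[d]\sim c_2\cdot 2^n$, which \cite{DS97} indeed supplies.
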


Concerning attractor sizes, the upper bound $4$ in Theorems \ref{thm:small_att} and \ref{thm:many_att} is tight, in that if $f$ is a permutation without limit cycle of length $\leq 2$, then the attractors of $\A(f)$ are all of size $\geq 4$ (Remark~\ref{rem:4}). In the opposite direction, if $f$ is a permutation without fixed point (derangement), then $\A(f)$ may contain a unique attractor spanning the $2^n$ configurations. 

\begin{theorem}\label{thm:strong}
If $f$ is a derangement, there exists $h\sim f$ such that $\A(h)$ is strongly connected.
\end{theorem}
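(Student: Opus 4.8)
My plan is to exploit the fact that $h\sim f$ holds exactly when $h$ is a conjugate of $f$ by some permutation of $\B^n$, so that $h$ ranges over \emph{all} Boolean networks with the same cycle type as $f$; since $f$ is a derangement, this cycle type is a partition of $N:=2^n$ into parts of size $\geq 2$. It therefore suffices to show that for every such partition one can place the corresponding cycles on $\B^n$ so that $\A(h)$ is strongly connected. The engine is the trivial remark that a directed Hamiltonian cycle inside $\A(h)$ already forces strong connectivity. So I fix once and for all a cyclic Gray code $v^0,v^1,\dots,v^{N-1},v^0$ of $\B^n$ (consecutive words differing in a single bit), and let $c_j$ be the coordinate flipped from $v^j$ to $v^{j+1}$. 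If $h$ satisfies $h(v^j)_{c_j}\neq v^j_{c_j}$ for every $j$, then $\A(h)$ contains the arc $v^j\to v^{j+1}$ for every $j$, hence the whole directed cycle $v^0\to v^1\to\cdots\to v^{N-1}\to v^0$, and is strongly connected.

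Writing $h(v^j)=v^{\pi(j)}$ for a permutation $\pi$ of $\mathbb{Z}_N$, this turns the theorem into a purely combinatorial statement: for the prescribed cycle type there is a permutation $\pi$ with that cycle type such that $\pi(j)\in T_j$ for all $j$, where $T_j:=\{\,l:\ v^l_{c_j}\neq v^j_{c_j}\,\}$ is the set of positions whose $c_j$-th coordinate disagrees with that of $v^j$. Note that $|T_j|=2^{n-1}$, that $j\notin T_j$ (so any admissible $\pi$ is automatically a derangement, matching our parts $\geq 2$), and that $j+1\in T_j$. Two sanity checks frame the construction. The shift $\pi_0(j)=j+1$ is admissible and realizes the single $N$-cycle, with $\A(h)$ literally equal to the Gray cycle. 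The fixed-point-free involution is realized by the antipodal map $x\mapsto\bar x$, for which $\bar x$ disagrees with $x$ in every coordinate, so $\A(h)$ is the fully bidirected hypercube; this illustrates that long jumps ($d(x,h(x))\gg 1$) are exactly what inject connectivity, the phenomenon I harness below.

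To realize an arbitrary cycle type I would build $\pi$ incrementally, one cycle at a time, always choosing successors inside the abundant admissible sets $T_j$. To grow a cycle of prescribed length $k$, I start at an unused position, repeatedly pick an unused admissible successor (always possible as long as fewer than $2^{n-1}$ positions have been used, since $|T_j|=2^{n-1}$), and after $k-1$ steps close the cycle by mapping the last position back to the start. The closing step is the only delicate one, because it requires the chosen start to lie in the admissible set of the last vertex. I would handle this by processing the cycles longest first, so that the heavily constrained closings happen while the pool of free positions is still large, and by reserving for each cycle a start whose $c_\cdot$-pattern keeps it admissible for a broad family of potential last vertices.

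The main obstacle is precisely this simultaneous control of the exact cycle type and of membership in the sets $T_j$: extending a cycle is easy because the $T_j$ are half-full, but closing a cycle at a prescribed length imposes a constraint that can fail if the free positions happen to avoid $T_{\mathrm{last}}$, and one must rule out any global Hall-type obstruction. I expect to resolve this with a counting/augmenting argument showing that at each closing step an admissible start remains available, since the number of inadmissible candidates is controlled by the number of already-used positions, which stays below $2^{n-1}$ until the very last cycle; the final one or two cycles, together with the small cases $n\leq 2$ (where a $4$-cycle is handled by the Gray code and two $2$-cycles by the antipodal map), I would verify directly.
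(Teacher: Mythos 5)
Your reduction is set up correctly as far as it goes: $h\sim f$ does range exactly over the permutations of $\B^n$ with the same cycle type as $f$, a directed Hamiltonian cycle in $\A(h)$ does force strong connectivity, and the facts $|T_j|=2^{n-1}$, $j\notin T_j$, $j+1\in T_j$ are all right, as are your two sanity checks. But the proof then reduces to the claim that \emph{every} partition of $N=2^n$ into parts $\geq 2$ is the cycle type of some permutation $\pi$ of $\mathbb{Z}_N$ with $\pi(j)\in T_j$ for all $j$, and this claim is never proved. Note that it is strictly stronger than Theorem~\ref{thm:strong} itself (strong connectivity of $\A(h)$ does not give a Hamiltonian cycle, let alone the specific Gray one), so nothing can be imported; it is exactly where the proposal stops being a proof. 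Worse, the counting strategy you sketch is provably insufficient at this level of generality: a digraph on $N$ vertices in which every out-neighbourhood and every in-neighbourhood has size exactly $N/2$ can contain no odd directed cycle at all (take all arcs between the two sides of a bipartition of $\mathbb{Z}_N$ and none inside; this even has a Hamiltonian ordering with $j\to j+1$ and no loops, so it satisfies every abstract feature you use). Since cycle types with odd parts exist for all $n\geq 2$, any correct argument must exploit the specific structure of the sets $T_j$ coming from the Gray code, which your proposal nowhere does.

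The endgame is where the greedy concretely breaks. Your guarantee ``an unused admissible successor exists'' is valid only while fewer than $2^{n-1}$ indices are used; the last cycles must be assembled inside a residual set whose interaction with the $T_j$'s is an uncontrolled byproduct of all earlier choices, and ``verify directly'' is not available because this residual instance is unbounded in size and history-dependent (with many $2$-cycles processed last, each remaining pair $\{j,l\}$ must satisfy the mutual condition $l\in T_j$ and $j\in T_l$, which nothing enforces). Even the mid-game closing step needs an unused element of $T_{s_{k-1}}\cap U_{s_1}$, where $U_{s_1}=\{l: s_1\in T_l\}$; both sets have size $2^{n-1}$, but you give no lower bound on their intersection, and no augmenting/swap mechanism with a termination proof is described. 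For contrast, the paper sidesteps Hamiltonicity entirely: it shows (Lemma~\ref{lem:coloring}) that every disjoint union of cycles of length $\geq 2$ on $2^n$ vertices admits a balanced $H_4$-coloring, embeds the four colour classes into the quadrants determined by $(x_1,x_2)$, and uses the free coordinates $3,\dots,n$ so that colour-$0$ configurations flip every high coordinate; strong connectivity (indeed a path of length $\leq d(x,y)+4$ between any $x,y$) then follows from local $4$-cycles plus one coordinate-fixing arc per step, so all constraints are resolved cycle-by-cycle with a constant-size gadget, precisely avoiding the global closing constraints your construction runs into.
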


All these results indicate that, from the unlabelled synchronous dynamics, almost nothing can be said on the number and the size of the asynchronous attractors. 

\medskip
The paper is organized as follows. The main definitions are introduced in Section \ref{sec:def}. Then, Theorems \ref{thm:A_to_S}, \ref{thm:S_to_A}, \ref{thm:small_att}, \ref{thm:many_att} and \ref{thm:strong} are proved in Sections \ref{sec:A_to_S}, \ref{sec:S_to_A}, \ref{sec:small_att}, \ref{sec:many_att} and \ref{sec:strong} respectively. The most technical proofs are those of Theorems  \ref{thm:A_to_S} and \ref{thm:small_att}. A conclusion and some perspectives are given in Section~\ref{sec:conclusion}. 

\section{Definitions}\label{sec:def}

In the following, $n$ is always a positive integer, and  $\EM{[n]}=\{1,\dots,n\}$. Elements of $\B^n$ are called \EM{configurations}, and those of $[n]$ are called \EM{components}. Let $x,y\in \B^n$. The addition $x+y$ is computed component-wise modulo~$2$. We denote by \EM{$\ZERO$} (resp. \EM{$\ONE$}) the configuration $x$ such that $x_i = 0$ (resp. $1$) for all $i \in [n]$. We set $\overline{x_i}=x_i+1$ and $\EM{\overline{x}}=x+\ONE$. For $i \in [n]$ we denote by \EM{$e_i$} the configuration $z$ such that $z_i=1$ and $z_j=0$ for all $j\neq i$. More generally, given distinct $i_1,\dots,i_k\in [n]$, we set $\EM{e_{i_1,\dots,i_k}}=e_{i_1}+\cdots+e_{i_k}$. The \EM{weight} \EM{$w(x)$} of $x$ is the number of $i\in [n]$ with $x_i = 1$. The \EM{distance} \EM{$d(x,y)$} between $x$ and $y$ is the number of $i\in [n]$ such that $x_i\neq y_i$. Hence, $d(x,y)=w(x+y)$. The partial order \EM{$\leq$} on $\B^n$ is defined by  $x \leq y$ if and only if $x_i \leq y_i$ for all $i \in [n]$. Given $X\subseteq \B^n$, we set $\EM{\overline{X}}=\B^n\setminus X$. Given $a,b\in \B^n$, we denote by \EM{$(a\leftrightarrow b)$} be the permutation of $\B^n$ that transposes $a$ and $b$, that is, $(a\leftrightarrow b)(a)=b$, $(a\leftrightarrow b)(b)=a$ and $(a\leftrightarrow b)(x)=x$ for all $x\neq a,b$.

\medskip
Given a graph (resp. digraph) $G$, we denote by $\EM{V(G)}$ its vertex set and $\EM{E(G)}$ its edges (resp. arcs). If $G$ is undirected and has an edge between $u$ and $v$ we write $uv\in E(G)$ or $vu\in E(G)$ indifferently. If $G$ is directed and has an edge from $u$ to $v$ we write $uv\in E(G)$ or $u\to v\in E(G)$. Given two digraphs $G,V$, we write $G\,\EM{\sim}\, H$ to mean that that $G$ and $H$ are isomorphic, that is, there exits a bijection $\pi:V(G)\to V(H)$ such that, for all $u,v\in V(G)$, we have $uv\in E(G)$ if and only if $\pi(u)\pi(v)\in E(H)$; then $\pi$ is an isomorphism from $G$ to $H$. We denote by $\EM{C_\ell}$ the (unlabelled) directed cycle of length $\ell$, and $\EM{P_{\ell}}$ the (unlabelled) directed path of length $\ell$ (with $\ell+1$ vertices). Given an (unlabelled) digraph $G$ and $k\geq 1$, we denote by $\EM{kG}$ the (unlabelled) digraph which consists of $k$ vertex-disjoint copies of $G$. Given (unlabelled) digraphs $G$ and $H$, the disjoint union of $G$ and $H$ is denoted \EM{$G+H$}. Given $U\subseteq V(G)$, we denote by $\EM{G[U]}$ the subgraph of $G$ induced by $U$. An \EM{attractor} of $G$ is the vertex set of a terminal strongly connected component; equivalently, this is an inclusion-minimal non-empty set $X\subseteq V(G)$ such that there is no arc from $X$ to $V(G)\setminus X$

\medskip
We denote by \EM{$F(n)$} the set of functions $f:\B^n\to\B^n$. Let $f\in F(n)$. The \EM{synchronous graph} of $f$ is the directed graph $\S(f)$ on $\B^n$ with an arc from $x$ to $f(x)$ for all $x\in\B^n$. A \EM{limit cycle} of $f$ is a cycle of $\S(f)$. A \EM{periodic configuration} of $f$ of period $p$ is a configuration that belongs to a cycle of $\S(f)$ of length $p$. A \EM{fixed point} of $f$ is a configuration $x$ such that $f(x)=x$. Equivalently, it is a periodic configuration of period $1$. We denote by $\EM{\FP(f)}$ the set of fixed points of $f$ and $\EM{\fp(f)}=|\FP(f)|$. The \EM{asynchronous graph} of $f\in F(n)$ is the digraph $\A(f)$ on $\B^n$ with an arc from $x$ to $x+e_i$ for all $x\in\B^n$ and  $i\in [n]$ with $f_i(x)\neq x_i$. Given two functions $f,h\in F(n)$, we write \EM{$f\sim h$} to means that $f$ and $h$ are isomorphic, that is, $\S(f)\sim \S(h)$.

\section{Synchronous reconstruction}\label{sec:A_to_S}

In this section, we prove Theorem~\ref{thm:A_to_S}, that for almost all Boolean networks $f$, if $\A(h)\sim \A(f)$ then $h\sim f$: the unlabelled synchronous dynamics can be fully reconstructed from the unlabelled asynchronous dynamics.

\medskip
First, remark that this is not true for \emph{all} Boolean networks as shown in Figure \ref{fig:counter_A_to_S}. In this example, where $n=3$, the asynchronous dynamics only contains $3$ arcs among the $n2^n=24$ possible arcs, and is thus very sparse. Usual asynchronous dynamics are not sparse: they contain $n2^{n-1}$ arcs in average. We will prove that isomorphisms between such usual asynchronous dynamics preserve the distance between configurations, which allows the reconstruction of the synchronous dynamics without possible ambiguity (up to isomorphism).  

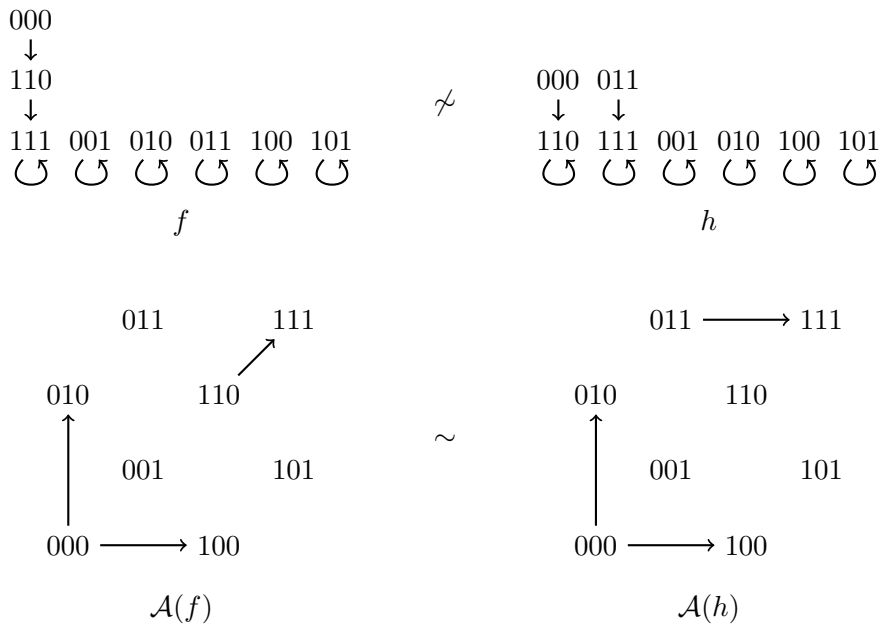
\begin{figure}[h]
\[
\arraycolsep=8pt
\begin{array}{ccc}
\begin{array}{c}
\def\sep{0.8}
  \begin{tikzpicture}
  \node (001) at ({1*\sep},0){$001$};
  \node (010) at ({2*\sep},0){$010$};
  \node (011) at ({3*\sep},0){$011$};
  \node (100) at ({4*\sep},0){$100$};
  \node (101) at ({5*\sep},0){$101$};  
  \node (111) at ({0*\sep},0){$111$};
  \node (000) at ({0*\sep},{2*\sep}){$000$};
  \node (110) at ({0*\sep},{1*\sep}){$110$};  
        ;
  \draw[->,thick]
  (000) edge (110)
  (110) edge (111)
  ;
  \draw[->,thick] (111.-112) .. controls ({0*\sep-0.5},{0-0.7}) and ({{0*\sep+0.5}},{0-0.7}) .. (111.-68);
  \draw[->,thick] (001.-112) .. controls ({1*\sep-0.5},{0-0.7}) and ({{1*\sep+0.5}},{0-0.7}) .. (001.-68);
  \draw[->,thick] (010.-112) .. controls ({2*\sep-0.5},{0-0.7}) and ({{2*\sep+0.5}},{0-0.7}) .. (010.-68);
  \draw[->,thick] (011.-112) .. controls ({3*\sep-0.5},{0-0.7}) and ({{3*\sep+0.5}},{0-0.7}) .. (011.-68);
  \draw[->,thick] (100.-112) .. controls ({4*\sep-0.5},{0-0.7}) and ({{4*\sep+0.5}},{0-0.7}) .. (100.-68);
  \draw[->,thick] (101.-112) .. controls ({5*\sep-0.5},{0-0.7}) and ({{5*\sep+0.5}},{0-0.7}) .. (101.-68);
  \end{tikzpicture}
\end{array}
&\not\sim &
\begin{array}{c}
\def\sep{0.8}
  \begin{tikzpicture}
  \node (110) at ({0*\sep},0){$110$};
  \node (111) at ({1*\sep},0){$111$};
  \node (001) at ({2*\sep},0){$001$};
  \node (010) at ({3*\sep},0){$010$};
  \node (100) at ({4*\sep},0){$100$};  
  \node (101) at ({5*\sep},0){$101$};
  \node (000) at ({0*\sep},{1*\sep}){$000$};
  \node (011) at ({1*\sep},{1*\sep}){$011$}; 
  \node (fack) at ({0*\sep},{2*\sep}){\textcolor{white}{$000$}};  
        ;
  \draw[->,thick]
  (000) edge (110)
  (011) edge (111)
  ;
  \draw[->,thick] (110.-112) .. controls ({0*\sep-0.5},{0-0.7}) and ({{0*\sep+0.5}},{0-0.7}) .. (110.-68);
  \draw[->,thick] (111.-112) .. controls ({1*\sep-0.5},{0-0.7}) and ({{1*\sep+0.5}},{0-0.7}) .. (111.-68);
  \draw[->,thick] (001.-112) .. controls ({2*\sep-0.5},{0-0.7}) and ({{2*\sep+0.5}},{0-0.7}) .. (001.-68);
  \draw[->,thick] (010.-112) .. controls ({3*\sep-0.5},{0-0.7}) and ({{3*\sep+0.5}},{0-0.7}) .. (010.-68);
  \draw[->,thick] (100.-112) .. controls ({4*\sep-0.5},{0-0.7}) and ({{4*\sep+0.5}},{0-0.7}) .. (100.-68);
  \draw[->,thick] (101.-112) .. controls ({5*\sep-0.5},{0-0.7}) and ({{5*\sep+0.5}},{0-0.7}) .. (101.-68);
  \end{tikzpicture}
\end{array}
\\[5mm]
f && h
\\[8mm]
  \begin{array}{c}
  \begin{tikzpicture}
  \pgfmathparse{1}
  \node (000) at (0,0){$000$};
  \node (001) at (1,1){$001$};
  \node (010) at (0,2){$010$};
  \node (011) at (1,3){$011$};
  \node (100) at (2,0){$100$};
  \node (101) at (3,1){$101$};
  \node (110) at (2,2){$110$};
  \node (111) at (3,3){$111$};
  \path[thick,->,draw]
  (000) edge (100)
  (000) edge (010)
  (110) edge (111)
  ;
  \end{tikzpicture}
  \end{array}
&
\sim
&
  \begin{array}{c}
  \begin{tikzpicture}
  \pgfmathparse{1}
  \node (000) at (0,0){$000$};
  \node (001) at (1,1){$001$};
  \node (010) at (0,2){$010$};
  \node (011) at (1,3){$011$};
  \node (100) at (2,0){$100$};
  \node (101) at (3,1){$101$};
  \node (110) at (2,2){$110$};
  \node (111) at (3,3){$111$};
  \path[thick,->,draw]
  (000) edge (100)
  (000) edge (010)
  (011) edge (111)
  ;
  \end{tikzpicture}
  \end{array}
\\[18mm]
\A(f) && \A(h)
\end{array}
\]
{\caption{\label{fig:counter_A_to_S} Example of non-isomorphic Boolean networks with isomorphic asynchronous graphs. 
}}
\end{figure}

\medskip
To proceed to the details, we need some definitions. Let \EM{$Q_n$} denote the \EM{$n$-cube}, that is, the graph on $\B^n$ where two configurations $x,y$ are adjacent iff $d(x,y)=1$. An \EM{isometry} is an automorphism of $Q_n$, that is, a permutation $\pi$ of $\B^n$ such that $\pi(x)\pi(y)\in E(Q_n)$ for all $xy\in E(Q_n)$. We say that two graphs or digraphs $G,H$ on $\B^n$ are \EM{isometric} if there exists an isomorphism between $G$ and $H$ which is an isometry.    

\begin{lemma}\label{lem:isometry}
Let $f,h\in F(n)$ and suppose that $\A(f)$ and $\A(h)$ are isometric. Then $f\sim h$.
\end{lemma}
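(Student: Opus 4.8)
The plan is to exploit the rigid structure of cube isometries. Recall that the automorphism group of $Q_n$ is the hyperoctahedral group: every isometry $\pi$ of $\B^n$ can be written as $\pi(x)=\sigma(x)+a$, where $a\in\B^n$ is a fixed translation vector and $\sigma\in S_n$ acts by permuting coordinates, say $\sigma(x)_k=x_{\sigma^{-1}(k)}$. First I would record the single consequence of this description that drives the whole argument: for every configuration $x$ and every component $i$,
\[
\pi(x+e_i)=\pi(x)+e_{\sigma(i)}.
\]
In other words, flipping component $i$ and then applying $\pi$ is the same as applying $\pi$ and then flipping component $\sigma(i)$; crucially, the target direction $\sigma(i)$ depends only on $i$, not on $x$. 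This global consistency of the direction map is exactly the property that a general digraph isomorphism would lack, and is the reason the hypothesis that $\pi$ is an isometry (rather than an arbitrary isomorphism) is indispensable.

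Let $\pi$ be an isometry that is also an isomorphism from $\A(f)$ to $\A(h)$, written $\pi(x)=\sigma(x)+a$ as above. The next step is to translate the arc-preservation property of $\pi$ into a statement about $f$ and $h$. By definition $x\to x+e_i$ is an arc of $\A(f)$ iff $f_i(x)\neq x_i$, and $\pi(x)\to\pi(x)+e_{\sigma(i)}$ is an arc of $\A(h)$ iff $h_{\sigma(i)}(\pi(x))\neq\pi(x)_{\sigma(i)}$. Since $\pi$ sends the arcs of $\A(f)$ exactly onto the arcs of $\A(h)$ and, by the displayed identity, sends $x\to x+e_i$ to $\pi(x)\to\pi(x)+e_{\sigma(i)}$, these two conditions are equivalent:
\[
f_i(x)\neq x_i \quad\Longleftrightarrow\quad h_{\sigma(i)}(\pi(x))\neq\pi(x)_{\sigma(i)}.
\]

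Finally, I would show that $\pi$ is itself an isomorphism from $\S(f)$ to $\S(h)$ by checking $h(\pi(x))=\pi(f(x))$ coordinatewise. Fix $i$ and write $j=\sigma(i)$; from $\pi(x)=\sigma(x)+a$ one computes $\pi(x)_j=x_i+a_j$ and $\pi(f(x))_j=f_i(x)+a_j$. If $f_i(x)=x_i$, the equivalence gives $h_j(\pi(x))=\pi(x)_j=x_i+a_j=f_i(x)+a_j$; if $f_i(x)\neq x_i$, then $h_j(\pi(x))\neq\pi(x)_j$ forces, over $\B$, $h_j(\pi(x))=x_i+a_j+1=f_i(x)+a_j$. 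In both cases $h_j(\pi(x))=\pi(f(x))_j$, so $h\circ\pi=\pi\circ f$. Hence $\pi$ maps each arc $x\to f(x)$ of $\S(f)$ to the arc $\pi(x)\to h(\pi(x))$ of $\S(h)$, giving $\S(f)\sim\S(h)$, i.e. $f\sim h$. The only genuinely non-routine ingredient is the structural description of cube isometries in the first paragraph; once the direction map $i\mapsto\sigma(i)$ is known to be independent of $x$, the remainder is a two-line Boolean case analysis.
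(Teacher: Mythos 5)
Your proof is correct and follows essentially the same route as the paper: both rest on the affine description $\pi(x)=\sigma(x)+a$ of cube isometries and conclude by verifying the conjugation identity $h\circ\pi=\pi\circ f$, which gives $\S(f)\sim\S(h)$. The only difference is in the verification step: the paper writes $f(x)=x+\sum_{y\in\A_f(x)}(x+y)$ and pushes $\pi$ through this odd-length sum using $\pi(x+y)=\pi(x)+\pi(y)+a$, whereas you exploit the $x$-independent direction map via $\pi(x+e_i)=\pi(x)+e_{\sigma(i)}$ and do a coordinatewise two-case Boolean check --- an equivalent and, if anything, slightly more elementary piece of bookkeeping.
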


\begin{proof}
Given $x\in\B^n$, let $\A_f(x)$ be the out-neighbors of $x$ in $\A(f)$. For all $i\in [n]$, we have $f_i(x)\neq x_i$ iff $\A(f)$ has an arc from $x$ to $x+e_i$. Hence, 
\begin{eqnarray}
f(x) &=&x+\sum_{x+e_i \in\A_f(x)}e_i \\ &=&x+\sum_{y\in\A_f(x)}(x+y).\label{eq:A_to_f}
\end{eqnarray}

Suppose that some isomorphism $\pi$ from $\A(f)$ to $\A(h)$ is an isometry. Then, there exists a permutation $\sigma$ of $[n]$ and a configuration $a\in\B^n$ such that $\pi(x)=\sigma(x)+a$ for all $x\in\B^n$, where $\sigma(x)=(x_{\sigma(1)},\dots,x_{\sigma(n)})$; see \cite{FMR21,H00,Ru2017} for instance. Given $x,y\in\B^n$, we have $\sigma(x+y)=\sigma(x)+\sigma(y)$. Thus, $\pi(x+y)= \sigma(x+y) + a = \sigma(x) +\sigma(y) +a = \pi(x) + a +\pi(y) +a +a = \pi(x) +\pi(y) +a$. We deduce that, for every odd integer $k$ and $x^1,\dots,x^k\in\B^n$, we have $\pi(x^1+\dots+x^k)=\pi(x^1)+\dots+\pi(x^k)$. Since \eqref{eq:A_to_f} is a sum of $2 |\A_f(x)|+1$ terms (in which the term $x$ appears $|\A_f(x)|+1$ times), and since $y\in\A_f(x)$ iff $\pi(y)\in\A_h(\pi(x))$ (because $\pi$ is an isomorphism from $\A(f)$ to $\A(h)$), we have 
\[
\pi(f(x))=\pi(x)+\sum_{y\in \A_f(x)}(\pi(x)+\pi(y))=\pi(x)+\sum_{\pi(y)\in \A_h(\pi(x))}(\pi(x)+\pi(y))=h(\pi(x)). 
\]
Thus, $\pi$ is an isomorphism from $f$ to $h$.
\end{proof}

The above lemma allows us to translate Theorem~\ref{thm:A_to_S} into a problem concerning spanning subgraphs of $Q_n$ (that is, subgraphs of $Q_n$ obtained by deleting edges only). So let \EM{$\Omega_n$} be the set of spanning subgraphs of $Q_n$ and $G\in\Omega_n$. An \EM{embedding} of $G$ is a permutation $\pi$ of $\B^n$ such that $\pi(x)\pi(y)\in E(Q_n)$ for all $xy\in E(G)$. An edge $xy\in E(Q_n)$ is \EM{solid} in $G$ if $\pi(x)\pi(y)\in E(Q_n)$ for all embeddings $\pi$ of $G$. We denote by \EM{$\tilde G$} the graph obtained from $G$ by adding all the edges of $Q_n$ that are solid in $G$. Clearly, every edge of $G$ is solid, that is, $G\subseteq \tilde G$. We say that $G$ is \EM{solid} if $\tilde G=Q_n$; equivalently, all the embeddings of $G$ are isometries. 

\medskip
Let $U\A(f)$ be the undirected version of $\A(f)$ that is, the graph in $\Omega_n$ with an edge between $x$ and $y$ if and only if $\A(f)$ has an arc from $x$ to $y$ or from $y$ to $x$. Our interest for solid spanning subgraphs of $Q_n$ comes from the following property, which is an easy consequence of Lemma~\ref{lem:isometry}. It shows that if $U\A(f)$ is solid, then the unlabelled synchronous dynamics of $f$ can be reconstructed from its unlabelled asynchronous dynamics. 

\begin{lemma}
For every $f,h\in F(n)$, if $U\A(f)$ is solid then 
\[
\A(f)\sim \A(h)~\Longrightarrow~ f\sim h.
\] 
\end{lemma}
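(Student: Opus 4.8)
The plan is to reduce the statement to Lemma~\ref{lem:isometry} by showing that \emph{any} isomorphism $\pi$ from $\A(f)$ to $\A(h)$ is automatically an isometry as soon as $U\A(f)$ is solid. Once this is in hand, $\A(f)$ and $\A(h)$ are isometric, and Lemma~\ref{lem:isometry} immediately delivers $f\sim h$.

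Concretely, I would assume $\A(f)\sim\A(h)$ and fix an isomorphism $\pi$ from $\A(f)$ to $\A(h)$. The first observation is that $\pi$ is simultaneously an isomorphism between the undirected versions $U\A(f)$ and $U\A(h)$: an edge $xy\in E(U\A(f))$ exists precisely when $\A(f)$ has an arc $x\to y$ or $y\to x$, and since $\pi$ preserves arcs in both directions, $\pi(x)\pi(y)\in E(U\A(h))$, and conversely. Moreover $U\A(f)$ and $U\A(h)$ both lie in $\Omega_n$, i.e.\ are spanning subgraphs of $Q_n$, because every arc of an asynchronous graph joins two configurations at Hamming distance $1$.

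The key step is then to recognize that $\pi$ is an \emph{embedding} of $U\A(f)$ in the sense defined above. Indeed, for every $xy\in E(U\A(f))$ we have $\pi(x)\pi(y)\in E(U\A(h))\subseteq E(Q_n)$, which is exactly the embedding condition. Invoking the solidity hypothesis, every embedding of $U\A(f)$ is an isometry; in particular $\pi$ is an isometry of $Q_n$.

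Putting the pieces together, $\pi$ is an isomorphism from $\A(f)$ to $\A(h)$ that is also an isometry, so $\A(f)$ and $\A(h)$ are isometric, and Lemma~\ref{lem:isometry} yields $f\sim h$. I do not expect a genuine obstacle here: the only point demanding care is the chain of definitions—that a digraph isomorphism descends to the undirected graphs and that this descent makes $\pi$ satisfy the embedding condition—after which the result is essentially a repackaging of the definition of solidity together with Lemma~\ref{lem:isometry}.
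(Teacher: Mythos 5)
Your proposal is correct and follows the paper's proof essentially verbatim: pass from the digraph isomorphism $\pi$ to an isomorphism of the undirected graphs $U\A(f)$ and $U\A(h)$, note that the latter is a spanning subgraph of $Q_n$ so that $\pi$ is an embedding of $U\A(f)$, invoke solidity to conclude $\pi$ is an isometry, and finish with Lemma~\ref{lem:isometry}. The extra care you take in spelling out why the descent to undirected graphs yields the embedding condition is exactly the (implicit) content of the paper's argument.
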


\begin{proof}
Let $\pi$ be an isomorphism from $\A(f)$ to $\A(h)$. Then, $\pi$ is an isomorphism from $U\A(f)$ to $U\A(h)$, and thus $\pi$ is an embedding of $U\A(f)$. Since $U\A(f)$ is solid, $\pi$ is an isometry. So $\A(f)$ and $\A(h)$ are isometric and thus $f\sim h$ by Lemma~\ref{lem:isometry}.  
\end{proof}

Consequently, to prove Theorem~\ref{thm:A_to_S} it is sufficient to prove that, taking $f\in F(n)$ uniformly at random, the probability that $U\A(f)$ is solid tends to $1$ as $n\to\infty$. Given $xy\in E(Q_n)$ and $f\in F(n)$ taken uniformly at random, the probability that $\A(f)$ has no arc from $x$ to $y$ {\em and} no arc from $y$ to $x$ is $1/4$. Hence, for all $xy\in E(Q_n)$, the probability that $xy$ is an edge of $U\A(f)$ is $3/4$ and these events are pairwise independent. Hence, to prove Theorem~\ref{thm:A_to_S} it is sufficient to prove that if $G$ is a random spanning subgraph of $Q_n$ obtained by selecting each edge of $Q_n$ with probability $p=3/4$ then the probability that $G$ is solid ($\tilde G=Q_n$) tends to $1$ as $n\to\infty$. We will actually prove something slightly stronger, Theorem~\ref{thm:solid} below, that $p\geq 0.72$ is enough for this property.

\medskip
Let us first fix some notations. If $A\subseteq E(Q_n)$ and $G\in\Omega_n$, we abusively write $A\subseteq G$ instead of $A\subseteq E(G)$, and we denote by $|G|$ the number of edges in $G$ (thus $|Q_n|=n2^{n-1}$).  Given an underlying fixed probability $0\leq p\leq 1$ of edge selection in $Q_n$, we denote by $\Pr$ the probability function over the sample space $\Omega_n$ where the probability of an outcome $H\in\Omega_n$~is 
\[
\Pr[G=H]=p^{|H|}(1-p)^{|Q_n|-|H|}.
\]

\begin{theorem}\label{thm:solid}
For $0.72\leq p\leq 1$ we have 
\[
\lim_{n\to\infty} \Pr[\tilde G=Q_n]= 1.
\]
\end{theorem}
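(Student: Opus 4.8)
The plan is to prove the equivalent formulation recorded just before the theorem: with high probability every embedding of $G$ is an isometry, which is exactly the event $\tilde G=Q_n$. Fix an embedding $\pi$, i.e.\ a permutation of $\B^n$ sending each edge of $G$ to an edge of $Q_n$. To every present edge $xy\in E(G)$ I attach its \emph{displacement direction} $\delta_\pi(xy)\in[n]$, the unique coordinate in which $\pi(x)$ and $\pi(y)$ differ (well defined since $d(\pi(x),\pi(y))=1$). Injectivity of $\pi$ forces the present edges incident to a common vertex to receive pairwise distinct displacement directions. The entire proof then amounts to showing that, with high probability and uniformly over all embeddings $\pi$, the map $\delta_\pi$ depends only on the $Q_n$-direction of the edge: once this holds, $\delta_\pi$ induces a map $[n]\to[n]$ and $\pi$ is forced to be affine.

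First I would isolate the local forcing mechanism. If the four edges of a $2$-face $\{x,\,x+e_i,\,x+e_j,\,x+e_i+e_j\}$ all belong to $G$, then under any embedding this $4$-cycle maps to a closed walk of length $4$ through genuine edges of $Q_n$; the four corresponding basis vectors sum to $\ZERO$ and consecutive displacement directions are distinct (injectivity), which forces the alternating pattern. Hence the two direction-$i$ sides of such a \emph{full square} share one displacement direction and the two direction-$j$ sides share another. This is the certificate that transports a consistent coordinate frame from one region of the cube to a neighbouring one.

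Granting that these certificates glue into a single consistent frame across $\B^n$, the conclusion is immediate: $\delta_\pi$ becomes a function $\sigma:[n]\to[n]$ with $\pi(x+e_i)=\pi(x)+e_{\sigma(i)}$ along every present edge; using any present corner $x,\,x+e_i,\,x+e_i+e_j$ together with injectivity shows $\sigma$ is injective, hence a permutation; and since $G$ is connected with high probability (as $p>1/2$), propagating along $G$ gives $\pi(x)=\pi(\ZERO)+\sigma(x)$, an isometry. So the theorem reduces to the purely probabilistic statement that, with high probability, the full-square certificates merge all present edges of each $Q_n$-direction into one coherent class.

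The hard part is exactly this merging, and it is where the constant $0.72$ is paid for. The naive route --- ask that the present direction-$i$ edges be connected in the auxiliary graph on $Q_{n-1}$ whose links are opposite sides of full squares --- is too lossy: such a link costs three extra present edges, so isolated present edges survive down to only $p\approx 0.88$, and below that the auxiliary graph is genuinely disconnected. The resolution is that an edge carrying no square certificate is nonetheless rigidly placed, because its endpoint is over-determined by the remaining present edges of its neighbourhood; the statement to prove is therefore that no nontrivial set of present direction-$i$ edges can be displaced coherently in a \emph{different} direction while respecting all constraints imposed by $G$. I would phrase this as the non-existence of a ``defect'' set and bound its probability by an isoperimetric union bound over $Q_{n-1}$, using that each edge sees about $pn$ candidate squares; optimising over the size of a hypothetical defect set yields the admissible range $p\ge 0.72$. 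Controlling the medium-scale defect sets --- neither a single edge nor a constant fraction of the cube --- is the delicate point. It is also why a plain first-moment count of non-isometric embeddings will not do: that expectation is dominated by permutations breaking almost every edge and tends to $0$ only for $p>3/4$, so genuinely beating $3/4$ down to $0.72$ requires the structural rigidity argument above rather than a union bound over maps.
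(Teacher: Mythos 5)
Your structural reduction is fine as far as it goes: the equivalence between $\tilde G=Q_n$ and ``every embedding is an isometry'' is exactly the paper's setup, the displacement map $\delta_\pi$ is well defined, and the full-square observation (all four edges of a $2$-face present force the alternating pattern $k,l,k,l$ on displacement directions, by injectivity and the fact that the four image steps sum to $\ZERO$) is correct. Connectivity of $G$ for constant $p>1/2$ is also fine. But the theorem's entire quantitative content is concentrated in the step you only name: ``the non-existence of a defect set \ldots bound its probability by an isoperimetric union bound over $Q_{n-1}$ \ldots optimising over the size of a hypothetical defect set yields the admissible range $p\ge 0.72$.'' You never define a defect set, never state the isoperimetric inequality you would invoke, never perform the union bound, and never derive the constant $0.72$ --- it is asserted, not computed. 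You yourself flag the medium-scale defect regime as ``the delicate point,'' and that is precisely the regime your sketch leaves open. As written, this is a proof outline whose hard kernel is missing, and there is no evidence the proposed optimisation would in fact land at $0.72$ rather than at some other constant (indeed even your side computation is shaky: with certificates requiring three extra present edges, isolated present edges die out once $2(1-p^3)<1$, i.e.\ $p>2^{-1/3}\approx 0.794$, not $\approx 0.88$).

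It is worth noting how the paper sidesteps the global defect analysis you would need. Its certificate is not a fully present square but a \emph{staple} (a length-$3$ path) lying in $\tilde G$ rather than in $G$: four staples on $xy$ inside $\tilde G$ force $xy$ solid (Lemma~\ref{lem:staples}), and edges are processed in increasing weight order so that for a staple pointing ``downward'' the first two edges are already solid by induction and only the single fresh edge $yy^i$ must be present --- cost $1-p$ instead of your $1-p^3$ (or $1-p^4$ for a full square). This bootstrap, made legitimate by the negative-correlation Lemma~\ref{lem:conditional_proba} to handle the conditioning on $A_{i-1}\subseteq\tilde G$, is what buys the range down to $0.72$ via the condition $qe^{\ell}<1$ with $q=1-p^3+0.01$ and $\ell=(1-p)/(1-p^3)$; no isoperimetry over defect sets and no analysis of non-affine embeddings is ever needed. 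Your ``over-determined endpoint'' rigidity intuition gestures at the same phenomenon, but without a mechanism like the paper's solid-edge bootstrap your square certificates alone are strictly too weak, and the missing defect-set argument would have to be built from scratch. The gap is therefore genuine: the local forcing and the affine endgame are sound, but the probabilistic heart of the theorem is absent.
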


We begin with a simple observation. 

\begin{lemma}\label{lem:subgraph_tilde}
For every $G,H\in\Omega_n$, if $G\subseteq H$ then $\tilde G\subseteq \tilde H$.
\end{lemma}

\begin{proof}
Let $xy\in E(\tilde G)$ and let $\pi$ be an embedding of $H$. Since $G\subseteq H$, $\pi$ is also an embedding of $G$, and since $xy$ is solid in $G$, we have $\pi(x)\pi(y)\in E(Q_n)$, and thus $xy\in E(\tilde H)$. 
\end{proof}

For $A,B\subseteq E(Q_n)$, it is rather intuitive that $\Pr[B\subseteq G\mid A\subseteq \tilde G]\geq \Pr[B\subseteq G]$: the fact that a set $A$ of edges is solid in $G$ cannot decrease the probability that a set $B$ of edges is contained in $G$.
It is equivalently intuitive that $\Pr[B\not\subseteq G\mid A\subseteq \tilde G]\leq \Pr[B\not\subseteq G]$ as the event $B\not\subseteq G$ is simply the negation of the event $B\subseteq G$. We prove this below in a stronger form. 

\begin{lemma}\label{lem:conditional_proba}
Let $A\subseteq E(Q_n)$ and disjoint sets $B_1,\dots,B_k\subseteq E(Q_n)$. Let $P$ be the set of $G\in\Omega_n$ with  $B_i\not\subseteq G$ for all $1\leq i\leq k$, and suppose that $A\subseteq\tilde G$ for some $G\in P$. Then,  
\[
\Pr[G\in P\mid A\subseteq \tilde G]\leq \Pr[G\in P].
\] 
\end{lemma}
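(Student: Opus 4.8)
The plan is to recast the claim as a negative-correlation (Harris/FKG) statement for the product measure $\Pr$ on $\Omega_n$, under which each edge of $Q_n$ is present independently with probability $p$. Write $\mathcal U$ for the event $\{A\subseteq\tilde G\}$ that $A$ is solid in $G$, and recall that $P=\bigcap_{i=1}^k\{B_i\not\subseteq G\}$. After clearing the positive denominator $\Pr[\mathcal U]$, the desired bound $\Pr[G\in P\mid A\subseteq\tilde G]\le\Pr[G\in P]$ is equivalent to
\[
\Pr[\mathcal U\cap P]\le\Pr[\mathcal U]\,\Pr[P],
\]
i.e.\ to the assertion that $\mathcal U$ and $P$ are negatively correlated. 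So it suffices to check that $\mathcal U$ is an increasing (upward-closed) event and $P$ a decreasing (downward-closed) one, and then to invoke the Harris inequality.

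The monotonicity is exactly where Lemma~\ref{lem:subgraph_tilde} enters. For $\mathcal U$: if $G\in\mathcal U$ and $G\subseteq H$, then $\tilde G\subseteq\tilde H$ by Lemma~\ref{lem:subgraph_tilde}, so $A\subseteq\tilde G\subseteq\tilde H$ and hence $H\in\mathcal U$; thus $\mathcal U$ is increasing. For $P$: each event $\{B_i\not\subseteq G\}$ is clearly decreasing, since deleting edges from $G$ cannot produce the inclusion $B_i\subseteq G$, and an intersection of decreasing events is decreasing, so $P$ is decreasing. The hypothesis that $A\subseteq\tilde G$ for some $G\in P$ serves only to make the statement non-vacuous; since $Q_n$ is solid ($\tilde{Q_n}=Q_n$) we always have $\Pr[\mathcal U]\ge p^{|Q_n|}>0$, so the conditional probability is well defined regardless.

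With both monotonicity properties established, the conclusion follows at once from the Harris inequality: for a product measure, an increasing and a decreasing event satisfy $\Pr[\mathcal U\cap P]\le\Pr[\mathcal U]\Pr[P]$. If one prefers to avoid citing Harris, I would prove this correlation bound by the usual induction on the number of edges $|E(Q_n)|$: the base case is the one-variable inequality $(\mathbf 1_{\mathcal U}(1)-\mathbf 1_{\mathcal U}(0))(\mathbf 1_P(1)-\mathbf 1_P(0))\le 0$ coming from the opposite monotonicities, and the inductive step conditions on a single edge and applies the one-variable estimate to the conditional expectations, which inherit monotonicity. In either route the disjointness of the $B_i$ is not needed for the inequality itself; it is convenient only afterwards, since it makes the events $\{B_i\subseteq G\}$ independent and hence gives the explicit value $\Pr[P]=\prod_{i=1}^k(1-p^{|B_i|})$. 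The one genuinely paper-specific point I would be careful with is the monotonicity of $\mathcal U$: $\tilde G$ depends on $G$ globally and non-locally through its embeddings, so monotonicity is not obvious a priori — but Lemma~\ref{lem:subgraph_tilde} packages precisely the fact we need, so no direct reasoning about embeddings is required here.
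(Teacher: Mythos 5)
Your proof is correct, but it takes a genuinely different route from the paper. You identify the statement as a negative-correlation inequality between the increasing event $\{A\subseteq\tilde G\}$ (increasing precisely by Lemma~\ref{lem:subgraph_tilde}, which you rightly single out as the only paper-specific ingredient) and the decreasing event $P$, and then invoke Harris's inequality for the product measure; all three verifications (monotonicity of the two events, positivity of $\Pr[A\subseteq\tilde G]$ via $\tilde Q_n=Q_n$, for $p>0$) are sound, and the reduction from the conditional form to $\Pr[\mathcal U\cap P]\le\Pr[\mathcal U]\Pr[P]$ is immediate. The paper instead gives a self-contained argument: it factorizes $\Pr[G\in P\mid A\subseteq\tilde G]$ sequentially over $B_1,\dots,B_k$, then over the edges of each $B_i$, and proves the per-edge bound $p_e\ge p$ by an explicit injection $H\mapsto H\cup\{e\}$ from $R_0$ to $R_1$ together with the weight identity $\Pr[G=H]=\frac{1-p}{p}\Pr[G=H']$ --- in effect a hand-rolled special case of the standard coupling proof of Harris. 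Your observation that the disjointness of the $B_i$ is not needed for the inequality itself is accurate and shows your version is strictly more general: in the paper, disjointness is what keeps the event $G\in P_{i-1}$ stable under the injection (since $e\in B_i$ cannot complete any $B_j$ with $j<i$), and it also yields the independence $\Pr[G\in P]=\prod_i(1-p^{|B_i|})$ used downstream in the proof of Theorem~\ref{thm:solid}; in your version it plays no role in the lemma at all. You are likewise right that the hypothesis ``$A\subseteq\tilde G$ for some $G\in P$'' is only needed in the paper to keep the chain of conditionals well defined, whereas in your argument well-definedness of the single conditional is automatic. In short: the paper buys self-containedness at the cost of a longer bookkeeping argument; you buy brevity and generality at the cost of citing (or re-proving by the standard edge-by-edge induction, as you sketch) a classical correlation inequality.
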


\begin{proof}
Let $P_0=\Omega_n$ and, for all $1\leq i\leq k$, let $P_i$ be the set of $G\in\Omega_n$ with  $B_j\not\subseteq G$ for $1\leq j\leq i$. We have,
\begin{eqnarray}\label{eq:P}
\Pr[G\in P\mid A\subseteq \tilde G]
&=&\prod_{i=1}^k \Pr[B_i\not\subseteq G\mid A\subseteq \tilde G\land G\in P_{i-1}]\nonumber\\
&=&\prod_{i=1}^k 1-\Pr[B_i\subseteq G\mid A\subseteq \tilde G\land G\in P_{i-1}].
\end{eqnarray}

\medskip
Fix $1\leq i\leq k$ and let us prove, by induction on $|B_i|$, that
\begin{eqnarray}\label{eq:B_i}
\Pr[B_i\subseteq G\mid A\subseteq \tilde G\land G\in P_{i-1}]\geq \Pr[B_i\subseteq G].
\end{eqnarray}
For $|B_i|=0$ this is obvious because $\emptyset \subsetneq G$ is tautologically true. So suppose that $|B_i|>0$. Let $e\in B_i$ and $X:=B_i\setminus e$. Let also
\begin{eqnarray*}
p_X&:=&\Pr[X\subseteq G\mid A\subseteq \tilde G\land G\in P_{i-1}],\\
p_e&:=&\Pr[e\in G\mid A\subseteq \tilde G\land G\in P_{i-1}\land  X\subseteq G].
\end{eqnarray*}
Hence, \eqref{eq:B_i} says 
\[
p_X\cdot p_e\geq \Pr[B_i\subseteq G]=\Pr[X\subseteq G]\cdot\Pr[e\in G].
\]
By induction, $p_X\geq \Pr[X\subseteq G]$ thus it is sufficient to prove that $p_e\geq \Pr[e\in G]=p$. Let $R$ be the set of $H\in P_{i-1}$ such that $A\subseteq \tilde H$ and $X \subseteq H$. Let $R_0$ be the set of $H\in R$ with $e\not\in H$ and $R_1:=R\setminus R_0$. Hence, 
\[
p_e=\Pr[G\in R_1\mid G\in R].
\]
If $R_0$ is empty, then $R=R_1$ thus $p_e=1$ and we are done. So suppose that $R_0$ is not empty. For every $H\in R_0$, let $H'$ be obtained from $H$ by adding $e$. We have $X \subseteq H\subseteq H'$ and thus $A\subseteq \tilde H\subseteq \tilde H'$ by Lemma~\ref{lem:subgraph_tilde}. Furthermore, if $i=1$ then $H' \in P_{i-1} = P_0 = \Omega_n$ and otherwise, since $e\not\in B_j$ for $1\leq j\leq i-1$, we have $H'\in P_{i-1}$. Thus, $H'\in R_1$. Hence, $H\mapsto H'$ is an injection from $R_0$ to~$R_1$.  For every $H\in R_0$, 
\[
\Pr[G=H]=p^{|H|}(1-p)^{|Q_n|-|H|}=\left(\frac{1-p}{p}\right)p^{|H'|}(1-p)^{|Q_n|-|H'|}= \left(\frac{1-p}{p}\right) \Pr[G=H'].  
\]
Hence, since $H\mapsto H'$ is an injection from $R_0$ to~$R_1$, 
\begin{eqnarray*}
\Pr[G\in R_0]&=&\sum_{H\in R_0} \Pr[G=H]\\
&=&\left(\frac{1-p}{p}\right)\sum_{H\in R_0}  \Pr[G=H']\\
&\leq& \left(\frac{1-p}{p}\right)\sum_{F\in R_1} \Pr[G=F]\\
&=& \left(\frac{1-p}{p}\right)\Pr[G\in R_1].
\end{eqnarray*}
Consequently, setting $r_0:=\Pr[G\in R_0]$ and $r_1:=\Pr[G\in R_1]$, we have 
\[
p_e=\Pr[G\in R_1\mid G\in R]=\frac{r_1}{r_0+r_1}
\geq  \frac{r_1}{\left(\frac{1-p}{p}\right)r_1+r_1}=p.
\]
This proves \eqref{eq:B_i}. Plugging \eqref{eq:B_i} into \eqref{eq:P} we obtain 
\begin{eqnarray*}
\Pr[G\in P\mid A\subseteq \tilde G]
&\leq &\prod_{i=1}^k 1-\Pr[B_i\subseteq G]\\
&= &\prod_{i=1}^k \Pr[B_i\not\subseteq G]\\
&=&\Pr[G\in P].
\end{eqnarray*}
\end{proof}

The fact that an edge is solid in $G$ is somewhat enigmatic. This leads us to find a concrete sufficient condition for an edge to be solid. We need some definitions. A \EM{staple} $S$ on an edge $xy$ of $Q_n$ is a path of $Q_n$ of length $3$ between $x$ and $y$, that is, three edges in $Q_n$ of the form $\{xx',x'y',y'y\}$. Hence, there are $n-1$ staples on $xy$, and they are pairwise disjoint; see Figure \ref{fig:staples}. The crucial observation is that if $G$ contains at least $4$ staples on $xy$, then $xy$ is solid. Actually, the weaker condition that $\tilde G$ contains at least $4$ staples on $xy$ is still sufficient for $xy$ to be solid:

\begin{figure}[h]
\[
  \begin{tikzpicture}
  \pgfmathparse{1}
  \node (0000) at (0,0){$0000$};
  \node (1000) at (3,0){$1000$};
  \node (0100) at (-0.9,1){$0100$};
  \node (0010) at (0,3){$0010$};
  \node (0001) at (0.9,2){$0001$};
  \node (1100) at (2.1,1){$1100$};
  \node (1010) at (3,3){$1010$};
  \node (1001) at (3.9,2){$1001$};
  \path[thick,draw]
  (0000) edge (0100) 
  (0100) edge (1100)
  (1100) edge (1000)
  (0000) edge (0010) 
  (0010) edge (1010)
  (1010) edge (1000)
  (0000) edge (0001) 
  (0001) edge (1001)
  (1001) edge (1000)
  ;
  \end{tikzpicture}
\]
{\caption{\label{fig:staples} The three staples on $\ZERO e_1$ for $n=4$.}

}
\end{figure}
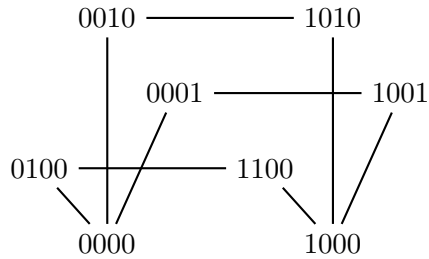

\begin{lemma}\label{lem:staples}
Let $G\in\Omega_n$ and $xy\in E(Q_n)$. If $\tilde G$ contains at least $4$ staples on $xy$ then $xy\in E(\tilde G)$. 
\end{lemma}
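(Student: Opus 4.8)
The plan is to prove that $xy$ is solid in $G$, i.e. that $\pi(x)\pi(y)\in E(Q_n)$ for \emph{every} embedding $\pi$ of $G$, which is exactly the assertion $xy\in E(\tilde G)$. The first step is a reduction: every embedding $\pi$ of $G$ is automatically an embedding of $\tilde G$. Indeed, if $uv$ is solid in $G$ then $\pi(u)\pi(v)\in E(Q_n)$ by the very definition of solidity, so $\pi$ sends every edge of $\tilde G$ (which is obtained from $G$ by adding solid edges) to an edge of $Q_n$. Consequently, after fixing an arbitrary embedding $\pi$ of $G$, all three edges of each of the four staples on $xy$ contained in $\tilde G$ are mapped by $\pi$ to edges of $Q_n$.

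Next I would record the geometry of a staple. Writing $y=x+e_i$, a staple on $xy$ is precisely the complementary path of a $2$-face of the cube: for some coordinate $a\neq i$ it is the path $x,\,x+e_a,\,x+e_a+e_i,\,y$, and distinct staples correspond to distinct coordinates $a$. Set $u=\pi(x)$ and $v=\pi(y)$; these are distinct since $\pi$ is a bijection and $x\neq y$. For each of the four staples the images $u,\,\pi(x+e_a),\,\pi(x+e_a+e_i),\,v$ form a walk of length $3$ in $Q_n$, so $d(u,v)$ is odd and at most $3$; hence $d(u,v)\in\{1,3\}$, and the goal reduces to excluding $d(u,v)=3$.

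The heart of the argument is a pigeonhole step. Suppose $d(u,v)=3$, and let $\{p,q,r\}$ be the three coordinates on which $u$ and $v$ differ. A walk of length $3$ between two vertices at distance $3$ must be a geodesic, each step decreasing the distance to $v$ by exactly one. Hence for each staple with coordinate $a$ the vertex $\pi(x+e_a)$ is a neighbour of $u$ lying at distance $2$ from $v$. A neighbour $u+e_k$ of $u$ satisfies $d(u+e_k,v)=w(e_k+e_p+e_q+e_r)$, which equals $2$ if $k\in\{p,q,r\}$ and $4$ otherwise; so $\pi(x+e_a)=u+e_{s_a}$ for some $s_a\in\{p,q,r\}$. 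Since $\pi$ is injective and the configurations $x+e_a$ are distinct for distinct staples, the points $u+e_{s_a}$ are distinct, so $a\mapsto s_a$ is injective. But then four distinct staples inject into the three-element set $\{p,q,r\}$, a contradiction. Therefore $d(u,v)=1$, i.e. $\pi(x)\pi(y)\in E(Q_n)$, and since $\pi$ was arbitrary, $xy$ is solid, that is $xy\in E(\tilde G)$.

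I expect the main (though mild) obstacle to be making the local bookkeeping rigorous: checking that a staple is exactly the complementary path of a $2$-face, that a length-$3$ walk joining vertices at distance $3$ is forced to be a shortest path (so its first image vertex lies at distance exactly $2$ from $v$), and that the only neighbours of $u$ at distance $2$ from $v$ are $u+e_p,\,u+e_q,\,u+e_r$. Once these facts are in place, the pigeonhole step is immediate, and the hypothesis of \emph{four} staples is used precisely to beat the bound $3=d(u,v)$.
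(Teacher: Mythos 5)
Your proposal is correct and follows essentially the same route as the paper: fix an embedding $\pi$ of $G$, note that the four staples in $\tilde G$ map to length-$3$ paths between $\pi(x)$ and $\pi(y)$ (so the distance is odd and at most $3$), and rule out distance $3$ because at most three such paths can be pairwise disjoint. Your version merely makes explicit two points the paper leaves implicit --- that any embedding of $G$ is automatically an embedding of $\tilde G$, and the pigeonhole computation showing the first vertex of a geodesic of length $3$ must be one of only three neighbours --- which is exactly the justification behind the paper's claim that $Q_n$ has only $3$ edge-disjoint length-$3$ paths between vertices at distance $3$.
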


\begin{proof}
Let $\pi$ be an embedding of $G$. Suppose that $\tilde G$ contains $4$ distinct staples on $xy$, say $S_1,\dots,S_4$, and let $S_i=\{xx^i,x^iy^i,y^iy\}$. Then, $P_i:=\pi(x)$-$\pi(x^i)$-$\pi(y^i)$-$\pi(y)$ is a path of length $3$ in $Q_n$ between $\pi(x)$ and $\pi(y)$, and thus $d(\pi(x),\pi(y))\leq 3$ and $d(\pi(x),\pi(y))$ is odd. If $d(\pi(x),\pi(y))=3$ then $Q_n$ has exactly $3$ edges-disjoint paths of length $3$ between $\pi(x)$ and $\pi(y)$, which is a contradiction since $P_1,P_2,P_3,P_4$ are edges-disjoint paths of length $3$ between $\pi(x)$ and $\pi(y)$. Thus, $d(\pi(x),\pi(y))<3$ and this forces $d(\pi(x),\pi(y))=1$. Consequently, $xy\in\tilde G$.
\end{proof}

We are now in position to prove Theorem \ref{thm:solid}.

\begin{proof}[{\bf Proof of Theorem \ref{thm:solid}}]
If $p=1$ the result is obvious so suppose that $0.72\leq p<1$. Let $x^1y^1,\dots,x^my^m$ be an enumeration of the $m=n2^{n-1}$ edges of $Q_n$ in such a way that for all $1\leq i\leq j\leq m$ we have $w(x^i)<w(y^i)$ and $w(x^i)\leq w(x^j)$. Let $A_0=\emptyset$ and, for $1\leq i\leq m$, let $A_i$ be the set of edges $x^jy^j$ with $1\leq j\leq i$. We have
\begin{eqnarray}\label{eq:pro_solid}
\Pr[\tilde G=Q_n]&=&\prod_{i=1}^m \Pr[x^iy^i\in \tilde G\mid A_{i-1}\subseteq \tilde G]\nonumber\\
&=&\prod_{i=1}^m 1-\Pr[x^iy^i\not\in \tilde G\mid A_{i-1}\subseteq \tilde G].
\end{eqnarray}

\medskip
Let 
\[
q=1-p^3+0.01,\qquad \ell=\frac{1-p}{1-p^3},
\]
and note that $q<1$ since $p\geq 0.72$. In order to lower bound the probability that $\tilde G=Q_n$, we will give an upper bound on the probability that $x^iy^i\not\in\tilde G$ under the condition $A_{i-1}\subseteq \tilde G$, which only depends on $n$ and $w(x^i)$: for $n$ large enough, 
\begin{eqnarray}\label{eq:bound_no_solid}
\Pr[x^iy^i\not\in\tilde G\mid A_{i-1} \subseteq \tilde G ]\leq q^n\ell^{w(x^i)}. 
\end{eqnarray}

\medskip
Let $\S^+$ be the set of staples $\{x^ix,xy,yy^i\}$ on $x^iy^i$ with $w(x^i)<w(x)$, and $\S^-$ be the set of staples $\{x^ix,xy,yy^i\}$ on $x^iy^i$ with $w(x)<w(x^i)$. We have $|\mathcal S^+| = n - w(y^i) = n - w(x^i) - 1$ and $|\S^-|=w(x^i)$. Given a staple $S=\{x^ix,xy,yy^i\}$, we set $S'=\{yy^i\}$ if $S\in\S^-$ and $S'=S$ otherwise. The crucial observation is that if $S\in\S^-$ then $x^ix,xy\in A_{i-1}$ so
\begin{eqnarray}\label{eq:obs}
A_{i-1}\subseteq \tilde G\qquad\Longrightarrow \qquad\left(S\subseteq \tilde G\iff S'\subseteq \tilde G\quad\forall S\in\S^+\cup\S^-\right). 
\end{eqnarray}

\medskip
Given $\calB\subseteq \S^+\cup\S^-$, let $P(\calB)$ be the set of $G\in\Omega_n$ such that $S'\not\subseteq G$ for all $S\in \calB$.  For $S\in\calB$, the probability that $S'\subseteq G$ is $p^3$ if $S\in\S^+$ and $p$ otherwise, so 
\begin{eqnarray*}
\Pr[G\in P(\cal B)]&=&(1-p^3)^{|\calB\cap\S^+|} (1-p)^{|\calB\cap\S^-|}.
\end{eqnarray*}

\medskip
By Lemma~\ref{lem:staples}, if $x^iy^i\not\in\tilde G$ then $\tilde G$ contains at most $3$ staples on $x^iy^i$. Equivalently, there exists a set $\calB\subseteq\S^+\cup\S^-$ of size at least $n-4$ such that $S\not\subseteq \tilde G$ for all $S\in\calB$; by \eqref{eq:obs}, under the condition $A_{i-1}\subseteq \tilde G$, this is equivalent to $\tilde G\in P(\calB)$, which implies $G\in P(\calB)$. Consequently, denoting by $\Lambda$ the set of $\calB\subseteq\S^+\cup\S^-$ of size at least $n-4$, we deduce, using the union bound for the first inequality and Lemma \ref{lem:conditional_proba} for the second, that 
\begin{eqnarray*}
\Pr[x^iy^i\not\in\tilde G\mid A_{i-1}\subseteq\tilde G]
&\leq& 
\sum_{\calB\in\Lambda}\Pr[G\in P(\calB)\mid A_{i-1}\subseteq \tilde G]\\
&\leq &
\sum_{\calB\in\Lambda}\Pr[G\in P(\calB)]\\
&=&\sum_{\calB\in\Lambda}(1-p^3)^{|\calB\cap\S^+|} (1-p)^{|\calB\cap\S^-|}.
\end{eqnarray*}
Since $|\calB|\geq n-4$ we have $|\calB\cap \S^+|\geq n-w(x^i)-4$ and $|\calB\cap\S^-|\geq w(x^i)-3$. Furthermore, $|\Lambda|=\sum_{k=0}^3{n-1\choose n-1-k}\leq n^3$. We deduce that  
\begin{eqnarray*}
\Pr[x^iy^i\not\in\tilde G\mid A_{i-1}\subseteq\tilde G]
&\leq &
n^3(1-p^3)^{n-w(x^i)-4}(1-p)^{w(x^i)-3}.
\end{eqnarray*}
Since $1-p^3<q$, for $n$ large enough we have $n^3(1-p^3)^{n-4}(1-p)^{-3}\leq q^n$ and thus 
\begin{eqnarray*}
\Pr[x^iy^i\not\in\tilde G\mid A_{i-1}\subseteq\tilde G]
&\leq &
q^n(1-p^3)^{-w(x^i)}(1-p)^{w(x^i)}~=~q^n\ell^{w(x^i)}.
\end{eqnarray*}
This proves \eqref{eq:bound_no_solid}.

\medskip
Pulgging \eqref{eq:bound_no_solid} in \eqref{eq:pro_solid} and noting that, for any integer $0\leq w\leq n-1$ there are $(n-w){n\choose w}$ edges $x^iy^i$ with $w(x^i)=w$, we obtain, using the rough bound  $(n-w){n\choose w}\leq n(en/w)^w$, 
\begin{eqnarray}\label{eq:GQ_n}
\Pr[\tilde G=Q_n]
&\geq& \prod_{i=1}^m 1-q^n\ell^{w(x^i)}\nonumber\\
&= &\prod_{w=0}^{n-1}\left(1-q^n\ell^{w}\right)^{(n-w){n\choose w}}\nonumber\\ 
&\geq &(1-q^n)^n\prod_{w=1}^{n-1}\left(1-q^n \ell^{w}\right)^{n(en/w)^w}.
\end{eqnarray}


\medskip
Since $1-x\geq e^{-2x}$ for any $0\leq x\leq 1/2$, and since $0\leq q^n\ell^w\leq q^n\leq 1/2$ for $n$ large enough, we have $1-q^n\ell^w\geq e^{-2q^n\ell^w}$ and $1-q^n\geq e^{-2q^n}$. Combining this with \eqref{eq:GQ_n} we obtain 
\begin{eqnarray*}
\Pr[\tilde G=Q_n]&\geq &e^{-2nq^n}
\prod_{w=1}^{n-1} e^{-2q^n\ell^wn(en/w)^w}.
\end{eqnarray*}

\medskip
The derivative of the real function $x\mapsto \ell^x (en/x)^x$ is $x\mapsto (e\ell n/x)^x\ln(\ell n/x)$. Hence, this derivative has a unique root, namely $\ell n$, and is positive for $x<\ell n$ and negative for $x>\ell n$. Consequently, the maximum of $\ell^x (en/x)^x$ is obtained for $x=\ell n$. We deduce that 
\begin{eqnarray*}
\Pr[\tilde G=Q_n]&\geq &e^{-2nq^n}\prod_{w=1}^{n-1} e^{-2q^n\ell^{\ell n}n(en/{\ell n})^{\ell n}}\\
&= &e^{-2nq^n}\prod_{w=1}^{n-1} e^{-2n(qe^\ell)^n}\\
&\geq &e^{-2[n^2(qe^\ell)^n+nq^n]}.\\
\end{eqnarray*}

\medskip
Since $p\geq 0.72$ we have $qe^\ell<1$. Hence, $n^2(qe^\ell)^n+nq^n$ tends to $0$ as $n\to\infty$ and thus
\[
\lim_{n\to\infty} \Pr[\tilde G=Q_n]=1.
\]
\end{proof}

\section{Asynchronous reconstruction}\label{sec:S_to_A}

In this section, we prove Theorem~\ref{thm:S_to_A}, that if $f$ is neither constant nor the identity and $n\geq 3$, then there exists $h\sim f$ with $\A(h)\not\sim \A(f)$: the unlabelled asynchronous dynamics cannot be fully reconstructed from the unlabelled synchronous dynamics. 

\begin{remark}
It is necessary to exclude the case where $f$ is constant or the identity. Firstly, if $f=\id$ and $h\sim f$, then $h=\id$ thus $\A(f)=\A(h)$. Secondly, if $f$ always returns some configuration $a$ and $h\sim f$, then $h$ always returns some configuration $b$, and we easily check that $x\mapsto x+a+b$ is an isomorphism from $\A(f)$ to $\A(h)$. Furthermore, we will show that for $n\leq 2$, there are, up to isomorphism, exactly two Boolean networks which do not satisfy the conclusion of Theorem~\ref{thm:S_to_A} (cf. Remark \ref{rem:counter_example_A_to_S}). 
\end{remark}

For every $f\in F(n)$, we denote by \EM{$\Delta^+(f)$} the set of configurations $x$ such that $f(x)=\overline{x}$; equivalently, $\Delta^+(f)$ is the set of configurations with out-degree $n$ in $\A(f)$. We say that $f$ contains $2P_1$ if its synchronous graph contains two vertex-disjoint paths of length $1$; equivalently, there exist configurations $x,y$ such that $x,y,f(x),f(y)$ are all distinct. The next lemma shows that the presence of $2P_1$ is sufficient for the existence of $h\sim f$ such that $|\Delta^+(h)|\neq|\Delta^+(f)|$, which implies  $\A(h)\not\sim \A(f)$.

\begin{lemma}\label{lem:2P_1}
If $f\in F(n)$ contains $2P_1$, then there exists $h\sim f$ with $|\Delta^+(h)|\neq |\Delta^+(f)|$.
\end{lemma}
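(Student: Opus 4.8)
The plan is to translate the statement into a question about how many points a conjugate of $f$ can match ``antipodally''. Recall that $h\sim f$ means precisely that $h=\pi f\pi^{-1}$ for some permutation $\pi$ of $\B^n$, and that $x\in\Delta^+(f)$ iff $f(x)=\bar x$, i.e.\ iff $f$ agrees at $x$ with the antipodal involution $\nu:x\mapsto\bar x$. First I would rewrite $|\Delta^+(h)|$ for $h=\pi f\pi^{-1}$: substituting $z=\pi(w)$ gives
\[
|\Delta^+(h)|=\bigl|\{\,w\in\B^n : f(w)=\mu_\pi(w)\,\}\bigr|,\qquad \mu_\pi(w):=\pi^{-1}(\overline{\pi(w)}).
\]
A short check shows $\mu_\pi$ is a fixed-point-free involution, and that $\mu_\pi=\nu$ when $\pi=\id$, recovering $|\Delta^+(f)|$ as one particular value.

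The next step is to observe that as $\pi$ ranges over all permutations of $\B^n$, the involution $\mu_\pi$ ranges over \emph{all} fixed-point-free involutions: indeed, solving $\mu_\pi=\mu$ amounts to finding $\pi$ with $\pi\mu\pi^{-1}=\nu$, which is possible precisely because $\mu$ and $\nu$ have the same cycle type (each is a product of $2^{n-1}$ transpositions). Hence
\[
\{\,|\Delta^+(h)| : h\sim f\,\}=\bigl\{\,|\{w:f(w)=\mu(w)\}| : \mu \text{ a fixed-point-free involution}\,\bigr\},
\]
and since $|\Delta^+(f)|$ is the value at $\mu=\nu$, it suffices to show that the ``agreement count'' $\mu\mapsto|\{w:f(w)=\mu(w)\}|$ is \emph{not constant} over fixed-point-free involutions. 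Viewing $\mu$ as a perfect matching $M$ of $\B^n$, this count splits as a sum over the edges of $M$, each edge $\{u,v\}$ contributing $[f(u)=v]+[f(v)=u]$, so I only need to alter $M$ locally so as to change this sum.

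Here the hypothesis enters. Since $f$ contains $2P_1$ there are four distinct configurations $a,a',b,b'$ with $f(a)=a'$ and $f(b)=b'$. I would fix an arbitrary matching $M_0$ of $\B^n\setminus\{a,a',b,b'\}$ (possible since $2^n-4$ is even) and compare the three perfect matchings that equal $M_0$ off these four points and pair them as $\{a,a'\},\{b,b'\}$, or $\{a,b\},\{a',b'\}$, or $\{a,b'\},\{a',b\}$. Their agreement counts differ only through the local contributions
\[
c_1=2+[f(a')=a]+[f(b')=b],\quad c_2=[f(a')=b']+[f(b')=a'],\quad c_3=[f(b')=a]+[f(a')=b],
\]
where the would-be extra terms in $c_2,c_3$ vanish because $a,a',b,b'$ are pairwise distinct (and the same distinctness forces the $2$ in $c_1$). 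Since $c_1\ge 2$ while $c_2,c_3\le 2$, an elementary case check finishes: if $c_1\ge3$ then $c_1>c_2,c_3$; and $c_1=c_2=c_3$ is impossible, since $c_2=c_3=2$ would require $f(a')=b'$ and $f(a')=b$ with $b\neq b'$. Thus $c_1$ differs from $c_2$ or from $c_3$, so the agreement count is non-constant, some $\mu$ gives a value $\neq|\Delta^+(f)|$, and the corresponding $h\sim f$ is the desired one.

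The main obstacle is the bookkeeping in the reduction, namely verifying carefully that $\mu_\pi$ is a fixed-point-free involution and that every such involution is realized as some $\mu_\pi$, and ensuring that switching among the three local matchings changes exactly the four relevant edge contributions and nothing else; once that is pinned down, the concluding case analysis is short. The key conceptual point to keep in mind is that $2P_1$ is exactly the hypothesis supplying four \emph{distinct} points $a,a',b,b'$, which is what both kills the unwanted terms in $c_2,c_3$ and guarantees $c_1\ge 2$.
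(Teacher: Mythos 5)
Your proof is correct. The reduction is sound: $|\Delta^+(\pi f\pi^{-1})|=|\{w:f(w)=\mu_\pi(w)\}|$ with $\mu_\pi=\pi^{-1}\circ\nu\circ\pi$ checks out, $\mu_\pi$ is indeed a fixed-point-free involution, and every fixed-point-free involution of $\B^n$ is realized because all such involutions are conjugate to $\nu$ (same cycle type, $2^{n-1}$ transpositions). The local rematching of $\{a,a',b,b'\}$ and the case analysis on $c_1,c_2,c_3$ are also correct: $c_1\geq 2$, the cross terms in $c_2,c_3$ vanish by distinctness, and $c_2=c_3=2$ would force $f(a')=b'$ and $f(a')=b$ simultaneously.

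The route differs from the paper's in packaging, and instructively so. The paper works concretely inside the cube: it picks $x,y$ with $0<d(x,y)<n$ so that $x,y,\bar x,\bar y$ are distinct, conjugates $f$ by explicit transpositions to place $a,b,f(a),f(b)$ on these four configurations (giving $h^1$), then conjugates further by $(y\leftrightarrow\bar x)$ and $(\bar x\leftrightarrow\bar y)$ to get $h^2,h^3$, and compares $|\Delta^+(h^i)|$ by splitting $\B^n$ into $X=\{x,y,\bar x,\bar y\}$ and its complement $Y$, verifying that $\Delta^i\cap Y$ is the same for all $i$. Your reformulation eliminates exactly this bookkeeping: once $|\Delta^+(h)|$ is identified with the agreement count of $f$ against an arbitrary fixed-point-free involution, the invariance off $\{a,a',b,b'\}$ is automatic (the matchings share $M_0$), and no placement in the cube is needed. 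In fact your three matchings $M_1,M_2,M_3$ are precisely the involutions underlying the paper's $h^1,h^2,h^3$ (pairing $\{a,a'\},\{b,b'\}$, then $\{a,b\},\{a',b'\}$, then $\{a,b'\},\{a',b\}$), so the combinatorial core is identical; what your abstraction buys is a cleaner separation of the group-theoretic step (conjugacy of fixed-point-free involutions) from the counting step, at the cost of invoking that conjugacy fact rather than exhibiting explicit permutations. Your concluding case split ($c_1\geq 3$ versus $c_1=2$ with $c_2=c_3=2$ impossible) is a mild variant of the paper's (which shows $c_2\leq 1$ or $c_3=0$), and both are fine.
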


\begin{proof}
Suppose that $f$ contains $2P_1$. Hence, there are configurations $a,b$ such that $a,b,f(a),f(b)$ are all distinct. Let $x,y \in \B^n$ with $0<d(x,y)<n$ (so $x,y,\overline{x}$ and $\overline{y}$ are all distinct). Let 
\[
\pi = (a \leftrightarrow x)\circ (b \leftrightarrow y) \circ(f(a) \leftrightarrow \overline{x})\circ(f(b) \leftrightarrow \overline{y})
\]
and 
\[
h^1 = \pi \circ f \circ \pi^{-1},\quad h^2 = (y \leftrightarrow \overline{x}) \circ h^1 \circ (y \leftrightarrow \overline{x}), \quad h^3 = (\overline{x} \leftrightarrow \overline{y}) \circ h^1 \circ (\overline{x} \leftrightarrow \overline{y}).
\]
See Figure \ref{fig:2P_1} for an illustration. 

\medskip
For all $i\in [3]$, let $\Delta^i=\Delta^+(h^i)$. We will prove that either $|\Delta^2|<|\Delta^1|$ or $|\Delta^3|<|\Delta^1|$, which proves the lemma (because $f\sim h^1 \sim h^2 \sim h^3$). Let $X=\{x,y,\overline{x},\overline{y}\}$ and $Y=\B^n\setminus X$; note that since $z\in X\iff \overline{z}\in X$, we have $z\in Y \iff \overline{z}\in Y$. For all $i,j\in [3]$, if $z \in \Delta^i\cap Y$ then $h^i(z) = \overline{z}\in Y$  thus $h^i(z)=h^j(z)$ since $(y \leftrightarrow \overline{x})$ and $(\overline{x} \leftrightarrow \overline{y})$ act as the identity on $Y$, so that $z\in \Delta^j\cap Y$. Consequently, $\Delta^i\cap Y\subseteq \Delta^j\cap Y$ and thus
\[
\Delta^1\cap Y= \Delta^2 \cap Y=\Delta^3\cap Y. 
\]
Hence, to prove that $|\Delta^2|<|\Delta^1|$ or $|\Delta^3|<|\Delta^1|$, we have to prove that $|\Delta^2\cap X|<|\Delta^1\cap X|$ or $|\Delta^3\cap X|<|\Delta^1\cap X|$. Since $|\Delta^1\cap X|\geq 2$ (because $h^1(x) = \overline{x}$ and $h^1(y) = \overline{y}$), it is actually sufficient to prove that $|\Delta^2\cap X|\leq 1$ or $|\Delta^3\cap X|=0$. Suppose first that $h^1(\overline{x}) \neq \overline{y}$ or $h^1(\overline{y}) \neq \overline{x}$. Then $h^2(y) \neq \overline{y}$ or $h^2(\overline{y}) \neq y$, and since $h^2(x) = y \neq \overline{x}, h^2(\overline{x}) = \overline{y} \neq x$ we have $|\Delta^2\cap X|\leq 1$. Suppose now that $h^1(\overline{x}) = \overline{y}$ and $h^1(\overline{y}) = \overline{x}$. Then, $h^3(\overline{x}) = \overline{y} \neq x$ and $h^3(\overline{y}) = \overline{x} \neq y$, and since $h^3(x) = \overline{y} \neq \overline{x}, h^3(y) = \overline{x} \neq \overline{y}$ we have $|\Delta^3\cap X|=0$.
\end{proof}

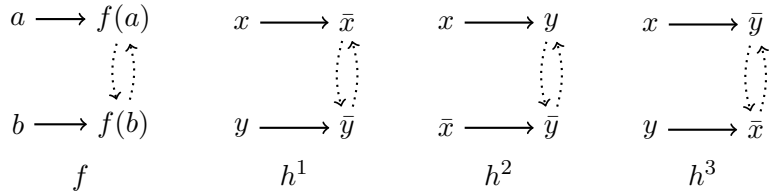
\begin{figure}[h]
	\def\sep{1.4}
\[
\arraycolsep=12pt
\begin{array}{cccc}
    \begin{tikzpicture}
        \node (a) at (0,0){$a$};
        \node (fa) at (\sep,0){$f(a)$};
        \node (b) at (0,-\sep){$b$};
        \node (fb) at (\sep,-\sep){$f(b)$};
        \path[thick,->,draw,black]
        (a) edge (fa)
        (b) edge (fb)
        ;
    \end{tikzpicture}
    &
    \begin{tikzpicture}
        \node (a) at (0,0){$x$};
        \node (fa) at (\sep,0){$\overline{x}$};
        \node (b) at (0,-\sep){$y$};
        \node (fb) at (\sep,-\sep){$\overline{y}$};
        \path[thick,->,draw,black]
        (a) edge (fa)
        (b) edge (fb)
        ;
    \end{tikzpicture}
    &
    \begin{tikzpicture}
        \node (a) at (0,0){$x$};
        \node (fa) at (\sep,0){$y$};
        \node (b) at (0,-\sep){$\overline{x}$};
        \node (fb) at (\sep,-\sep){$\overline{y}$};
        \path[thick,->,draw,black]
        (a) edge (fa)
        (b) edge (fb)
        ;
    \end{tikzpicture}
    &
    \begin{tikzpicture}
        \node (a) at (0,0){$x$};
        \node (fa) at (\sep,0){$\overline{y}$};
        \node (b) at (0,-\sep){$y$};
        \node (fb) at (\sep,-\sep){$\overline{x}$};
        \path[thick,->,draw,black]
        (a) edge (fa)
        (b) edge (fb)
        ;
    \end{tikzpicture}
    \\
    \textrm{$f$}
    &\textrm{$h^1$}
    &\textrm{$h^2$}
    &\textrm{$h^3$}
\end{array}
\]
  \caption{\label{fig:2P_1}
Illustration for Lemma \ref{lem:2P_1}. 
}
\end{figure}

We now give another sufficient condition for the existence of $h\sim f$ with $|\Delta^+(h)|\neq|\Delta^+(f)|$.

\begin{lemma}\label{lem:no_2P_2}
Let $f\in F(n)$ with $f\neq\id$ and $n\geq 2$. Suppose that $f$ does not contain $2P_1$ and has at least $2$ periodic configurations. Suppose also that $n\geq 3$ if $f$ has a limit cycle of length $\geq 3$. Then there exists $h \sim f$ with $|\Delta^+(h)|\neq |\Delta^+(f)|$. 
\end{lemma}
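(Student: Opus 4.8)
The plan is to aim at the same invariant as in Lemma~\ref{lem:2P_1}: since $\Delta^+(h)$ is exactly the set of configurations of out-degree $n$ in $\A(h)$, its cardinality is an isomorphism invariant of $\A(h)$, so it suffices to produce $h\sim f$ with $|\Delta^+(h)|\neq|\Delta^+(f)|$. First I would establish a structure theorem for $f$. Since $f$ contains no $2P_1$, any two non-loop arcs $x\to f(x)$ and $y\to f(y)$ of $\S(f)$ (those with $f(x)\neq x$ and $f(y)\neq y$) must share an endpoint; viewing them as undirected edges, the non-loop arcs form a pairwise-intersecting family of $2$-sets, which is either contained in a triangle or is a star. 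Because $\S(f)$ has out-degree $1$ everywhere, the triangle case means $f$ has a limit cycle of length $3$ and is the identity on the remaining $2^n-3$ configurations, while the star case means there is a centre $c$ incident to every non-loop arc. Refining the star case by the behaviour of $c$ gives three shapes: $c$ is fixed with some in-neighbours mapping to it; $c$ lies on a $2$-cycle $c\leftrightarrow d$ with in-neighbours of $c$; or $c$ maps to a fixed point $d$ with in-neighbours of $c$ (in each case all other configurations are fixed). A useful consequence is that $\Delta^+(f)$, the set of antipodal non-loop arcs (the $x$ with $f(x)=\bar x$), is contained in $\{c,\bar c\}$ in the star case and has size at most $1$ in the triangle case, so $|\Delta^+(f)|\leq 2$ throughout.

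Next I would exploit that $h\sim f$ may be taken to be an arbitrary relabelling of $\S(f)$ in $\B^n$, together with the fact that a fixed point never lies in $\Delta^+$ (as $x\neq\bar x$): thus $|\Delta^+(h)|$ is simply the number of non-loop arcs that the relabelling places onto antipodal pairs $\{z,\bar z\}$. The goal becomes to realise at least two distinct such counts with functions isomorphic to $f$, after which any value differing from $|\Delta^+(f)|$ yields the desired $h$. I would treat the four shapes in turn, each time exhibiting one embedding with no antipodal non-loop arc and one with at least one. For the $3$-cycle, realising it as $u\to v\to w\to u$ with $v=\bar u$ and $w\notin\{u,v\}$ gives exactly one antipodal arc, whereas spreading $u,v,w$ over three distinct antipodal pairs gives none; the latter needs $2^{n-1}\geq 3$, i.e.\ $n\geq 3$. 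For the star shapes, the only relevant antipodal pair is $\{c,\bar c\}$: putting $\bar c$ on a spare fixed point kills every antipodal arc, while making $\bar c$ an in-neighbour of $c$ (or, in the $2$-cycle shape, taking $d=\bar c$) creates one (or two).

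Here the hypothesis that $f$ has at least two periodic configurations does the essential work: in the shapes where $c$ is fixed or maps to a fixed point $d$, it guarantees a fixed point distinct from $c$ (resp.\ from $d$) that can be relabelled onto $\bar c$, which is precisely what permits the ``no antipodal arc'' embedding. Dropping this hypothesis forces every non-central configuration to be an in-neighbour of $c$, i.e.\ $f=\cst$, which is exactly the case excluded in Theorem~\ref{thm:S_to_A}. Finally, since in every case two distinct values of $|\Delta^+|$ are attained by functions $\sim f$, at least one differs from $|\Delta^+(f)|$, giving $h\sim f$ with $|\Delta^+(h)|\neq|\Delta^+(f)|$ and hence $\A(h)\not\sim\A(f)$.

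The main obstacle I anticipate is the bookkeeping of the structural case analysis rather than any single hard idea: justifying the intersecting-family dichotomy in the directed, out-degree-$1$ setting, and handling the degenerate sub-case of a $2$-cycle with no further fixed points, where $|\Delta^+|$ can only equal $1$ or $2$ and one must check that both remain realisable (using $n\geq 3$ to guarantee at least one in-neighbour). Verifying that each proposed relabelling is a genuine bijection of $\B^n$ preserving the isomorphism type of $\S(f)$, and that enough antipodal pairs and spare fixed points are available once $n\geq 3$, is where the care lies.
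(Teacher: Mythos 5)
Your proof is correct, and it reaches the same four underlying configurations as the paper but by a genuinely different route. The paper first normalizes: it picks $a$ with $f(a)$ periodic, conjugates by $(\bar a\leftrightarrow f(a))$ to get $g\sim f$ with $g(a)=\bar a$, hence $|\Delta^+(g)|\geq 1$, and then in each case (all periodic configurations fixed with $g(X)=\{a\}$ or $g(X)=\{\bar a\}$; a $2$-cycle through $a,\bar a$; a $3$-cycle) exhibits an explicit transposition producing $h\sim g$ with strictly smaller $|\Delta^+|$. You instead prove an up-front classification: the no-$2P_1$ condition makes the non-loop arcs a pairwise-intersecting family of $2$-sets, so by the standard star/triangle dichotomy (plus the out-degree-$1$ constraint) $\S(f)$ is a $3$-cycle with all else fixed, or one of three star shapes centered at $c$; then, since $\Delta^+(h)$ records exactly which non-loop arcs land on antipodal pairs and is confined to $\{c,\bar c\}$ (resp.\ has size $\leq 1$ in the triangle case), you realize two distinct values of $|\Delta^+|$ within each isomorphism class and take whichever differs from $|\Delta^+(f)|$. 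Your treatment of the delicate points is sound: the hypothesis of $\geq 2$ periodic configurations is exactly what supplies the spare fixed point for the ``no antipodal arc'' embedding (its failure being $f=\cst$), the degenerate $2$-cycle with no extra fixed points correctly gives $|\Delta^+|\in\{1,2\}$ with both values attainable, and $n\geq 3$ is used where the paper uses it (three antipodal pairs for the $3$-cycle; cf.\ Remark~\ref{rem:counter_example_A_to_S} for $n=2$). One cosmetic repair: in the shape where $c$ maps to a fixed point $d$ with \emph{no} in-neighbours of $c$, creating an antipodal arc requires taking $d=\bar c$, a move you mention only for the $2$-cycle shape but which works verbatim here. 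Comparing the two: your classification lemma is more systematic and makes the role of each hypothesis transparent, and the ``realize two values'' scheme avoids both the initial boosting conjugation and any need to compute $|\Delta^+(f)|$; the paper's argument avoids proving a classification and gets by with a handful of explicit transpositions, at the cost of a more ad hoc case analysis.
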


\begin{proof}
Suppose that $f$ satisfies the conditions of the statement. Since $f\neq\id$, there exists a configuration $a$ such that $a\neq f(a)$ and such that $f(a)$ is a periodic configuration; if $f$ has a limit cycle of length at least $2$, we chose $a$ in this limit cycle so that both $a$ and $f(a)$ are periodic configurations. Let $g=(\overline{a} \leftrightarrow f(a))\circ f\circ (\overline{a} \leftrightarrow f(a))$. Then, $g(a)=\overline{a}$ and $|\Delta^+(g)| \geq 1$. We prove below that there exists $h\sim g$ with $|\Delta^+(h)|<|\Delta^+(g)|$, which implies the lemma.  

\medskip
Suppose first that all the periodic configurations of $g$ are fixed points (thus, $a$ is a non-periodic configuration). Let $X$ be the non-periodic configurations of $g$ distinct from $a$. Then, $\overline{a}$ is a fixed point of $g$ and since $g$ does not contain $2P_1$, either $g(X)=\{a\}$ or $g(X)=\{\overline{a}\}$.  Furthermore, since $g$ has at least two periodic configurations, it has a fixed point $b\neq \overline{a}$. We then consider two cases, illustrated in Figure \ref{fig:no_2P_1_2}. 
\begin{enumerate}
\item
First, suppose that $g(X)=\{a\}$ (see Figure \ref{fig:no_2P_1_2}(a)). Let $h= (\overline{a} \leftrightarrow b) \circ g \circ (\overline{a} \leftrightarrow b)$. If $x \neq h(x)$ and $x\neq a$ then $h(x)=a$ and $x\neq \overline{a}$ (since $\overline{a}$ is a fixed point of $h$). Since $h(a)=b\neq\overline{a}$ this proves that $|\Delta^+(h)|=0$. 
\item 
Second, suppose that $g(X)=\{\overline{a}\}$ (see Figure \ref{fig:no_2P_1_2}(b)). Let $h= (a \leftrightarrow b) \circ g\circ (a \leftrightarrow b)$. If $x\neq h(x)$ then $h(x)=\overline{a}$ and $x\neq a$ (since $a$ is a fixed point of $h$). Thus, $|\Delta^+(h)|=0$. 
\end{enumerate}

\medskip
Suppose now that $g$ has a limit cycle $L$ of length $2$.  Since we chose $a$ so that both $a$ and $f(a)$ are periodic configurations of $f$, we have that $a$ and $g(a)=\overline{a}$ are periodic configurations of $g$, so the configurations of $L$ are $a$ and $\overline{a}$, which implies $|\Delta^+(g)| \geq 2$ (see Figure \ref{fig:no_2P_1_2}(c)). Furthermore, since $g$ does not contain $2P_1$, all the periodic configurations $x\neq a,\overline{a}$ are fixed points. Let $X$ be the set of non-periodic configurations of $g$. Since $g$ does not contain $2P_1$, either $g(X)=\{a\}$ or $g(X)=\{\overline{a}\}$. The two cases being symmetric, we assume, without loss, that $g(X)=\{a\}$. Let $b \neq a,\overline{a}$ and $h= (\overline{a} \leftrightarrow b) \circ g \circ (\overline{a} \leftrightarrow b)$. Since $h(a)=b$ and $h(b)=a$, we have $a,b\not\in \Delta^+(h)$. If $x\neq h(x)$ and $x\neq a,b$ then $h(x)=a$, and thus $x\in \Delta^+(h)$ if and only if $x=\overline{a}$. We deduce that $\Delta^+(h)$ is either empty or only contains $\overline{a}$. Hence, $|\Delta^+(h)|\leq 1 < |\Delta^+(g)|$.

\medskip
Suppose finally that $g$ has a limit cycle $L$ of length $\ell\geq 3$. Since $g$ does not contain $2P_1$, we have $\ell=3$ and all the configurations outside of $L$ are fixed points, so $g$ is a permutation and $L$ contains $a,\overline{a}$ and a third configuration $b$ (see Figure \ref{fig:no_2P_1_2}(d)). Let $c$ be a fixed point of $g$ distinct from $\overline{b}$, which exists since $n\geq 3$, and let $h= (\overline{a} \leftrightarrow c) \circ g \circ (\overline{a} \leftrightarrow c)$. We have $h(a) = c \neq \overline{a}$, $h(c) = b \neq \overline{c}$   and $h(b) = a \neq \overline{b}$. Furthermore, $h(x)=x\neq\overline{x}$ for all $x\neq a,b,c$, and thus, $|\Delta^+(h)|=0$.
\end{proof}

\begin{figure}[h]
\[
\tag{a}
\arraycolsep=20pt
\begin{array}{cc}
    \begin{tikzpicture}
		\node (X) at (0,2.5){$X$};
	    \node (x1) at (-0.5,2){\small $\bullet$};
        \node (x2) at (+0.5,2){\small $\bullet$};
		\node (na) at (0,0){$\overline{a}$};
        \node (a) at (0,1){$a$};
        \node (b) at (1,0){$b$};
        \node (c) at (2,0){\small$\bullet$};
        \draw [draw=black,dotted] (-0.7,1.8) rectangle (+0.7,2.2);
        \path[thick,->,draw,black]
        (a) edge (na)
		(x1) edge (a)
		(x2) edge (a)
		(x1) edge[-,dotted] (x2)
		(b) edge[-,dotted] (c)
        ;
        \draw[->,thick] (na.-112) .. controls ({0-0.5},{0-0.7}) and ({0+0.5},{0-0.7}) .. (na.-68);
        \draw[->,thick] (b.-112) .. controls ({1-0.5},{0-0.7}) and ({1+0.5},{0-0.7}) .. (b.-68);
		\draw[->,thick] (c.-112) .. controls ({2-0.5},{0-0.7}) and ({2+0.5},{0-0.7}) .. (c.-68);
    \end{tikzpicture}
&
    \begin{tikzpicture}
		\node (X) at (0,2.5){$X$};
	    \node (x1) at (-0.5,2){\small $\bullet$};
        \node (x2) at (+0.5,2){\small $\bullet$};
		\node (na) at (0,0){$b$};
        \node (a) at (0,1){$a$};
        \node (b) at (1,0){$\overline{a}$};
        \node (c) at (2,0){\small$\bullet$};
        \draw [draw=black,dotted] (-0.7,1.8) rectangle (+0.7,2.2);
        \path[thick,->,draw,black]
        (a) edge (na)
		(x1) edge (a)
		(x2) edge (a)
		(x1) edge[-,dotted] (x2)
		(b) edge[-,dotted] (c)
        ;
        \draw[->,thick] (na.-112) .. controls ({0-0.5},{0-0.7}) and ({0+0.5},{0-0.7}) .. (na.-68);
        \draw[->,thick] (b.-112) .. controls ({1-0.5},{0-0.7}) and ({1+0.5},{0-0.7}) .. (b.-68);
		\draw[->,thick] (c.-112) .. controls ({2-0.5},{0-0.7}) and ({2+0.5},{0-0.7}) .. (c.-68);
    \end{tikzpicture}
    \\
g & h
\end{array}
\]
~\\
\[
\tag{b}
\arraycolsep=20pt
\begin{array}{ccc}
    \begin{tikzpicture}
		\node (X) at (-1,1.5){$X$};
	    \node (x1) at (-1.5,1){\small $\bullet$};
        \node (x2) at (-0.5,1){\small $\bullet$};
		\node (na) at (0,0){$\overline{a}$};
        \node (a) at (0,1){$a$};
        \node (b) at (1,0){$b$};
        \node (c) at (2,0){\small$\bullet$};
        \draw [draw=black,dotted] (-1.7,0.8) rectangle (-0.3,1.2);
        \path[thick,->,draw,black]
        (a) edge (na)
		(x1) edge (na)
		(x2) edge (na)
		(x1) edge[-,dotted] (x2)
		(b) edge[-,dotted] (c)
        ;
        \draw[->,thick] (na.-112) .. controls ({0-0.5},{0-0.7}) and ({0+0.5},{0-0.7}) .. (na.-68);
        \draw[->,thick] (b.-112) .. controls ({1-0.5},{0-0.7}) and ({1+0.5},{0-0.7}) .. (b.-68);
		\draw[->,thick] (c.-112) .. controls ({2-0.5},{0-0.7}) and ({2+0.5},{0-0.7}) .. (c.-68);
    \end{tikzpicture}
&
    \begin{tikzpicture}
		\node (X) at (-1,1.5){$X$};
	    \node (x1) at (-1.5,1){\small $\bullet$};
        \node (x2) at (-0.5,1){\small $\bullet$};
		\node (na) at (0,0){$\overline{a}$};
        \node (a) at (0,1){$b$};
        \node (b) at (1,0){$a$};
        \node (c) at (2,0){\small$\bullet$};
        \draw [draw=black,dotted] (-1.7,0.8) rectangle (-0.3,1.2);
        \path[thick,->,draw,black]
        (a) edge (na)
		(x1) edge (na)
		(x2) edge (na)
		(x1) edge[-,dotted] (x2)
		(b) edge[-,dotted] (c)
        ;
        \draw[->,thick] (na.-112) .. controls ({0-0.5},{0-0.7}) and ({0+0.5},{0-0.7}) .. (na.-68);
        \draw[->,thick] (b.-112) .. controls ({1-0.5},{0-0.7}) and ({1+0.5},{0-0.7}) .. (b.-68);
		\draw[->,thick] (c.-112) .. controls ({2-0.5},{0-0.7}) and ({2+0.5},{0-0.7}) .. (c.-68);
    \end{tikzpicture}
\\
g & h 
\end{array}
\]
~\\
\[
\tag{c}
\arraycolsep=20pt
\begin{array}{cc}
    \begin{tikzpicture}
		\node (X) at (0,2.5){$X$};
	    \node (x1) at (-0.5,2){\small $\bullet$};
        \node (x2) at (+0.5,2){\small $\bullet$};
		\node (na) at (0,0){$\overline{a}$};
        \node (a) at (0,1){$a$};
        \node (b) at (1,0){$b$};
        \node (c) at (2,0){\small$\bullet$};
        \draw [draw=black,dotted] (-0.7,1.8) rectangle (+0.7,2.2);
        \path[thick,->,draw,black]
        (a) edge[bend left] (na)
        (na) edge[bend left] (a)
		(x1) edge (a)
		(x2) edge (a)
		(x1) edge[-,dotted] (x2)
		(b) edge[-,dotted] (c)
        ;
        \draw[->,thick] (b.-112) .. controls ({1-0.5},{0-0.7}) and ({1+0.5},{0-0.7}) .. (b.-68);
		\draw[->,thick] (c.-112) .. controls ({2-0.5},{0-0.7}) and ({2+0.5},{0-0.7}) .. (c.-68);
    \end{tikzpicture}
&
    \begin{tikzpicture}
		\node (X) at (0,2.5){$X$};
	    \node (x1) at (-0.5,2){\small $\bullet$};
        \node (x2) at (+0.5,2){\small $\bullet$};
		\node (na) at (0,0){$b$};
        \node (a) at (0,1){$a$};
        \node (b) at (1,0){$\overline{a}$};
        \node (c) at (2,0){\small$\bullet$};
        \draw [draw=black,dotted] (-0.7,1.8) rectangle (+0.7,2.2);
        \path[thick,->,draw,black]
        (a) edge[bend left] (na)
        (na) edge[bend left] (a)
        (x1) edge (a)
		(x2) edge (a)
		(x1) edge[-,dotted] (x2)
		(b) edge[-,dotted] (c)
        ;
        \draw[->,thick] (b.-112) .. controls ({1-0.5},{0-0.7}) and ({1+0.5},{0-0.7}) .. (b.-68);
		\draw[->,thick] (c.-112) .. controls ({2-0.5},{0-0.7}) and ({2+0.5},{0-0.7}) .. (c.-68);
    \end{tikzpicture}
    \\
g & h 
\end{array}
\]
~\\
\[
\tag{d}
\arraycolsep=20pt
\begin{array}{cc}
    \begin{tikzpicture}
        \node (a) at (-2,0){$a$};
		\node (na) at (-1,0){$\overline{a}$};
        \node (b) at (0,0){$b$};
        \node (c) at (1,0){$c$};

        \node (d) at (2,0){\small$\bullet$};
        \path[thick,->,draw,black]
        (a) edge[] (na)
        (na) edge[] (b)
        (b) edge[bend left] (a)
		(c) edge[-,dotted] (d)
        ;
        \draw[->,thick] (c.-112) .. controls ({1-0.5},{0-0.7}) and ({1+0.5},{0-0.7}) .. (c.-68);
		\draw[->,thick] (d.-112) .. controls ({2-0.5},{0-0.7}) and ({2+0.5},{0-0.7}) .. (d.-68);
    \end{tikzpicture}
&
    \begin{tikzpicture}
        \node (a) at (-2,0){$a$};
		\node (na) at (-1,0){$c$};
        \node (b) at (0,0){$b$};
        \node (c) at (1,0){$\overline{a}$};

        \node (d) at (2,0){\small$\bullet$};
        \path[thick,->,draw,black]
        (a) edge[] (na)
        (na) edge[] (b)
        (b) edge[bend left] (a)
		(c) edge[-,dotted] (d)
        ;
        \draw[->,thick] (c.-112) .. controls ({1-0.5},{0-0.7}) and ({1+0.5},{0-0.7}) .. (c.-68);
		\draw[->,thick] (d.-112) .. controls ({2-0.5},{0-0.7}) and ({2+0.5},{0-0.7}) .. (d.-68);
    \end{tikzpicture}
    \\
g & h 
\end{array}
\]
{\caption{\label{fig:no_2P_1_2} Illustration for Lemma \ref{lem:no_2P_2}.}}
\end{figure}
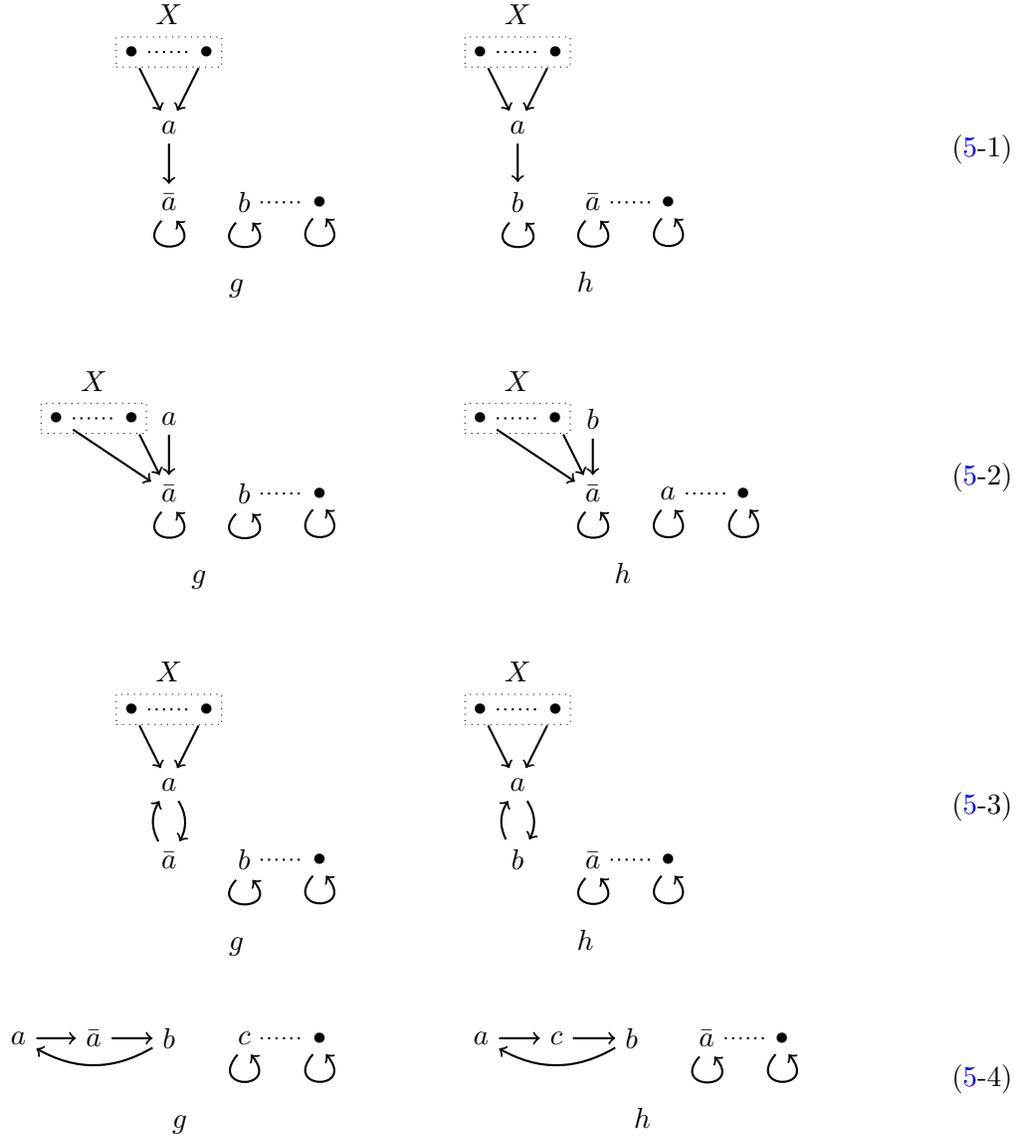

By the two previous lemmas, it remains to treat the case where $f$ does not contain $2P_1$ and has a unique periodic configuration. One can prove that these properties force $|\Delta^+(f)|=1$. However, if $f$ is not constant, there exists $h\sim f$ such that $\A(h)$ and $\A(f)$ do not have the same number of connected components, which implies  $\A(h)\not\sim \A(f)$. This finishes the proof of Theorem~\ref{thm:S_to_A}. 

\begin{lemma}\label{lem:no_2P_1_3}
Let $f\in F(n)$ with $f\neq\cst$ and $n\geq 2$. If $f$ does not contain $2P_1$ and has a unique periodic configuration, then there exists $h \sim f$ such that $\A(h)$ and $\A(f)$ do not have the same number of connected components.
\end{lemma}

\begin{proof}
Suppose that $f$ does not contain $2P_1$. Let $b$ be the unique periodic configuration of $f$, which is thus a fixed point, and let $a$ be a pre-image of $b$. Since $f\neq\cst$ and $f$ does not contain $2P_1$, for all $x\neq a,b$ we have $f(x)=a$. Let $\pi_1,\pi_2$ be permutations of $\B^n$ such that $\pi_1$ sends $a,b$ on $\ZERO,\ONE$, and $\pi_2$ sends $a,b$ on $\ZERO,e_1$; note that $e_1\neq\ONE$ since $n\geq 2$. Let $h^1 = \pi_1 \circ f \circ \pi_1^{-1}$ and $h^2 = \pi_2 \circ f \circ \pi_2^{-1}$. See Figure~\ref{fig:no_2P_1_3} for an illustration. 

\medskip
Let us prove that $\A(h^1)$ has at least two connected components. Indeed, for all $i\in [n]$, we have $f(\ONE+e_i)=\ZERO$, thus $\A(h^1)$ has no arc from $\ONE+e_1$ to $\ONE$. Since $h^1(\ONE)=\ONE$, we deduce that $\A(h^1)$ has a connected component consisting of a single element, $\ONE$. Thus $\A(h^1)$ has indeed at least two connected components. 

\medskip
We now prove that $\A(h^2)$ has a unique connected component. We first prove that $\A(h^2)$ has a path from any $x\neq \ZERO,e_1$ to $\ZERO$. To this end, it is sufficient to prove that $\A(h^2)$ has an arc from any such $x$ to some configuration $y$ with $w(y)<w(x)$. Since $x\neq \ZERO,e_1$, we have $h^2(x)=\ZERO$ and there exists $j\in [n]$ with $x_j=1$. So $w(x+e_j)<w(x)$ and $\A(h^2)$ has an arc from $x$ to $x+e_j$, so we are done. Furthermore, since $h^2(\ZERO)=e_1$, $\A(h^2)$ has an arc from $\ZERO$ to $e_1$. We conclude that $\A(h^2)$ has a path from any configuration to $e_1$, and thus it has a unique connected component.  
\end{proof}

\begin{figure}[h]
\[
\arraycolsep=20pt
\begin{array}{ccc}
    \begin{tikzpicture}
	    \node (x1) at (-0.5,2){\small $\bullet$};
        \node (x2) at (+0.5,2){\small $\bullet$};
		\node (b) at (0,0){$b$};
        \node (a) at (0,1){$a$};
        \path[thick,->,draw,black]
        (a) edge (b)
		(x1) edge (a)
		(x2) edge (a)
		(x1) edge[-,dotted] (x2)
        ;
        \draw[->,thick] (b.-112) .. controls ({0-0.5},{0-0.7}) and ({0+0.5},{0-0.7}) .. (b.-68);
    \end{tikzpicture}
&
    \begin{tikzpicture}
	    \node (x1) at (-0.5,2){\small $\bullet$};
        \node (x2) at (+0.5,2){\small $\bullet$};
		\node (b) at (0,0){$\ONE$};
        \node (a) at (0,1){$\ZERO$};
        \path[thick,->,draw,black]
        (a) edge (b)
		(x1) edge (a)
		(x2) edge (a)
		(x1) edge[-,dotted] (x2)
        ;
        \draw[->,thick] (b.-112) .. controls ({0-0.5},{0-0.7}) and ({0+0.5},{0-0.7}) .. (b.-68);
    \end{tikzpicture}
&
    \begin{tikzpicture}
	    \node (x1) at (-0.5,2){\small $\bullet$};
        \node (x2) at (+0.5,2){\small $\bullet$};
		\node (b) at (0,0){$e_{1}$};
        \node (a) at (0,1){$\ZERO$};
        \path[thick,->,draw,black]
        (a) edge (b)
		(x1) edge (a)
		(x2) edge (a)
		(x1) edge[-,dotted] (x2)
        ;
        \draw[->,thick] (b.-112) .. controls ({0-0.5},{0-0.7}) and ({0+0.5},{0-0.7}) .. (b.-68);
    \end{tikzpicture}
\\[3mm]
f & h^1 & h^2
\end{array}
\]
{\caption{\label{fig:no_2P_1_3} Illustration for Lemma \ref{lem:no_2P_1_3}.}}
\end{figure}

\begin{remark}\label{rem:counter_example_A_to_S}
For $n=1$, there is a unique Boolean network $f\neq\cst,\id$ : the negation, which thus does not satisfy the conclusion of Theorem~\ref{thm:S_to_A}. For $n=2$, there is, up to isomorphism, exactly one Boolean network $f\neq\cst,\id$ which does not satisfy the conditions of Lemmas~\ref{lem:2P_1}, \ref{lem:no_2P_2} and \ref{lem:no_2P_1_3}, namely, the one with a fixed point and a limit cycle of length 3. This Boolean network does not satisfy the conclusion of Theorem~\ref{thm:S_to_A}; see Figure~\ref{fig:counter_example_A_to_S}. Thus there are exactly two Boolean networks $f\neq\cst,\id$ which do not satisfy the conclusion of Theorem~\ref{thm:S_to_A}. 
\end{remark}

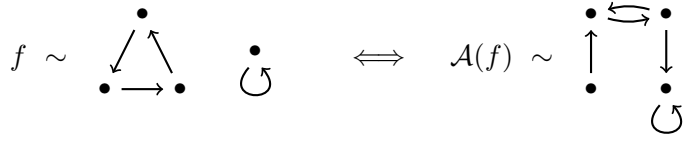
\begin{figure}[h]
\[
f~\sim 
\begin{array}{c}
    \begin{tikzpicture}
    	\draw[white] (-0.2,-0.2) rectangle (2.3,1.2);
	    \node (a) at (0,0){\small $\bullet$};
        \node (b) at (1,0){\small $\bullet$};
		\node (c) at (0.5,1){\small $\bullet$};
        \node (d) at (2,0.5){\small $\bullet$};
        \path[thick,->,draw,black]
        (a) edge (b)
		(b) edge (c)
		(c) edge (a)
        ;
        \draw[->,thick] (d.-112) .. controls ({2-0.5},{0.5-0.7}) and ({2+0.5},{0.5-0.7}) .. (d.-68);
    \end{tikzpicture}
\end{array}
\quad\iff\quad
\A(f)~\sim 
\begin{array}{c}
    \begin{tikzpicture}
        \draw[white] (-0.2,-0.7) rectangle (1.3,1.7);
	    \node (a) at (0,0){\small $\bullet$};
        \node (b) at (0,1){\small $\bullet$};
		\node (c) at (1,1){\small $\bullet$};
        \node (d) at (1,0){\small $\bullet$};
        \path[thick,->,draw,black]
        (a) edge (b)
		(b) edge[bend right=15] (c)
		(c) edge[bend right=15] (b)
		(c) edge (d)
        ;
    \end{tikzpicture}
\end{array}
\]
{\caption{\label{fig:counter_example_A_to_S} Illustration for Remark \ref{rem:counter_example_A_to_S}.}}
\end{figure}

\section{Few and small attractors}\label{sec:small_att}

\subsection{Asynchronous convergence toward fixed points}

In this subsection we prove the first assertion in Theorem \ref{thm:small_att}: if $f$ has at least one fixed point, there exists $h\sim f$ such that $\A(h)$ has $\fp(f)$ attractors. Since $\fp(f)=\fp(h)$, this is equivalent to say that $\A(h)$ has a path from any configuration to a fixed point of $h$. We actually prove something stronger; the precise statement, Theorem~\ref{thm:FP} below, needs some definitions. An arc from $x$ to $y$ in $\A(f)$ is \EM{increasing} if $x<y$ and \EM{decreasing} otherwise. A path is \EM{decreasing} if all its arcs are decreasing. Hence, a decreasing path from $x$ to $y$ is of length $d(x,y)\leq n$. A path is  \EM{almost decreasing} if all its arc, except possibly one, are decreasing. Hence, an almost-decreasing path from $x$ to $y$ is of length at most $d(x,y)+1\leq n+1$.  

\begin{theorem}\label{thm:FP}
Let $f\in F(n)$ with at least one fixed point. There exists $h\sim f$ such that $\A(h)$ has an almost decreasing path from any configuration  to $\FP(h)$. 
\end{theorem}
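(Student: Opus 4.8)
The plan is to view the construction of $h\sim f$ as a choice of relabeling: every $h\sim f$ has the form $h=\pi\circ f\circ\pi^{-1}$ for some permutation $\pi$ of $\B^n$, so I am free to decide which configuration hosts each node of the functional graph $\S(f)$. Two facts organize everything. First, the fixed points of $h$ are exactly the out-degree-$0$ vertices of $\A(h)$, i.e.\ the targets I must reach. Second, the out-arcs of a non-fixed $x$ in $\A(h)$ flip precisely the coordinates on which $h(x)$ differs from $x$; such an arc is \emph{decreasing} when it turns a $1$ of $x$ off and \emph{increasing} when it turns a $0$ of $x$ on, so $x$ has a decreasing out-arc iff $h(x)\not\geq x$. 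Let $B=\{x: h(x)>x\}$ be the configurations with only increasing out-arcs. The clean special case is $B=\emptyset$: then every non-fixed $x$ has a decreasing out-arc, along every arc the weight $w$ strictly decreases, and a greedy descent cannot stall before a sink, so $\A(h)$ already has a \emph{decreasing} path from every configuration to $\FP(h)$.

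Thus the heart of the matter is how small and how harmless I can make $B$. Note first that the node placed at $\ZERO$ must be a fixed point, since $h(\ZERO)\geq\ZERO$ always; so I will always put a fixed point at $\ZERO$, which is legitimate as $\fp(f)\geq1$, and use $\ZERO$ as a global drain. Note next that $B$ cannot always be emptied: the node placed at $\ONE$ has all its predecessors strictly below $\ONE$, so unless $\ONE$ hosts a fixed point or a preimage-free node, one of these predecessors lies in $B$. A fixed-point-poor map such as a permutation with a single fixed point offers neither a spare fixed point for $\ONE$ nor any preimage-free node, so for such $f$ some configuration is necessarily mapped strictly upward. This is exactly the phenomenon that forces the weaker ``almost decreasing'' conclusion, where one increasing step is permitted.

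The construction I would carry out embeds $\S(f)$ into the Boolean lattice $(\B^n,\leq)$ from the bottom up, scanning configurations by non-decreasing weight, placing periodic configurations and tree-roots low and stacking each transient in-tree above its root, all the while keeping every transported arc \emph{non-increasing} wherever there is room --- using the large middle antichains and incomparable steps to reroute arcs that would otherwise point up. The configurations that cannot be serviced this way (the forced maxima and, when necessary, one configuration per long limit cycle) form $B$; to each $b\in B$ I would assign a single increasing escape arc $b\to b+e_i$ with $h_i(b)=1$ landing on a configuration $y\notin B$ whose greedy descent avoids $B$ entirely and hence reaches a fixed point. Then from any configuration I first descend in the decreasing subgraph to a sink; if the sink is a fixed point I am done with no up-step, and otherwise it lies in $B$, whence one increasing step followed by a clean descent reaches $\FP(h)$ --- an almost decreasing path using exactly one up-step.

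The hard part will be the global feasibility of this layout. The weight argument makes descents automatically terminate, so the real obstacle is the bookkeeping that (i) keeps enough arcs non-increasing that the few bad configurations are isolated and high, (ii) guarantees that each bad configuration's escape lands in an already-drained region, so that no path ever needs a second up-step, and (iii) fits all transient trees simultaneously into the up-sets chosen for their roots. I expect to control (iii) by a weight-layered greedy assignment together with a defect-Hall capacity estimate on the up-sets, and to control (ii) by breaking each long cycle at its lowest-weight configuration and aiming every escape arc into the drained neighborhood of $\ZERO$; reconciling these choices across all components of $\S(f)$ is the technical crux.
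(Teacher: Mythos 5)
Your setup is right, and your framing matches the paper's in spirit: conjugation $h=\pi\circ f\circ\pi^{-1}$, the observation that $x$ has a decreasing out-arc iff $h(x)\not\geq x$, the need to place a fixed point at $\ZERO$, and the fact that the preimage of the node placed at $\ONE$ is forced upward when $f$ is a permutation with few fixed points (which is exactly why the paper, too, only gets ``almost decreasing''). But what you have is a plan, not a proof, and the parts you defer are precisely the theorem's entire difficulty. Your items (ii) and (iii) --- that each escape arc from a bad configuration $b\in B$ lands on a configuration whose subsequent descent avoids $B$ entirely, and that all transient in-trees can be simultaneously stacked above their roots with non-increasing arcs --- are asserted, with a gesture toward a ``defect-Hall capacity estimate'' that is never formulated, let alone verified. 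Note also an internal tension: the escape arc $b\to b+e_i$ necessarily goes \emph{up} (to weight $w(b)+1$), so it cannot be ``aimed into the drained neighborhood of $\ZERO$''; what you actually need is that the full decreasing cone below $b+e_i$ is $B$-free, a global condition across all components of $\S(f)$ that your bottom-up greedy layout does not control. If two bad configurations can appear on each other's descent routes, a path may require two up-steps and the argument fails.

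The paper resolves exactly this crux, and in a way that shows your concession ``one bad configuration per long limit cycle, when necessary'' is avoidable. Its Lemma~\ref{decreasing_lemma} proves that, outside a tiny down set $X$ (here $X=\{\ZERO\}$ or $\{\ZERO,e_1\}$), one can make the set of bad configurations \emph{empty}: it fixes a \emph{good order} on $\bar X$ (a topological-like order in which each limit cycle occupies an interval and cycles are not intertwined, with a preimage-free element placed last, guaranteed by $f(\bar X)\neq\bar X$), realizes it by a weight-monotone permutation, and then repairs any residual bad configuration $x<h(x)$ by swapping $h(x)$ with an equal-weight configuration $y$ incomparable to $x$ --- the counting hypothesis that $\bar X$ has at least ${n-1\choose \ell-1}+1$ configurations of each middle weight $\ell$ guarantees such $y$ exists, and an extremal argument (maximizing the minimum weight of a bad configuration over monotone permutations, using the good-order structure) shows finitely many swaps terminate with no bad configurations at all. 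The single permitted increasing arc is then confined to the bottom: in the permutation case a final transposition $(\ZERO\leftrightarrow e_1)$ --- legitimate because the lemma's conclusion is \emph{robust} under permutations acting as the identity on $\bar X$ --- produces the arc $\ZERO\to e_1$ with $e_1$ a fixed point, so every path uses at most one up-step, landing directly in $\FP(h')$. To turn your sketch into a proof you would either have to reconstruct this exchange-and-extremal machinery (at which point you are rewriting the paper's Lemma~\ref{decreasing_lemma}) or genuinely prove your Hall-type feasibility claim for the multi-escape layout, which as stated is harder than what the paper needs, since the paper never has to coordinate more than one increasing arc.
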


To illustrate the proof technique, suppose that $f$ has a unique fixed point, say $z$, and no other periodic configurations, that is $f$ is a nilpotent function. Then $\S(f)$ is an in-tree plus a loop on the root $z$. Let $\preceq$ be a topological order of this is in-tree, that is, a total order on $\B^n$ such that $f(x)\prec x$ for all $x\neq z$. There is then a permutation $\pi$ of $\B^n$ such that $x\preceq y$ implies $w(\pi(x))\leq w(\pi(y))$, we say that such a permutation is monotone. Since $z$ is the $\preceq$-minimal element, we have $\pi(z)=\ZERO$ by monotonicity, and thus $\ZERO$ is the unique fixed point of $h=\pi\circ f\circ\pi^{-1}$. Furthermore, for every $x\neq \ZERO$, we have $\pi^{-1}(x)\neq z$, thus $f(\pi^{-1}(x))\prec \pi^{-1}(x)$ so that $w(h(x))\leq w(x)$ by monotonicity. Since $x$ is not a fixed point of $h$, we deduce that $h_i(x)<x_i$ for some $i\in [n]$, that is, $\A(h)$ has a decreasing arc starting from~$x$. It follows that $\A(h)$ has a decreasing path from any configuration to $\ZERO$. See Figure~\ref{fig:nilpotent} for an illustration. 

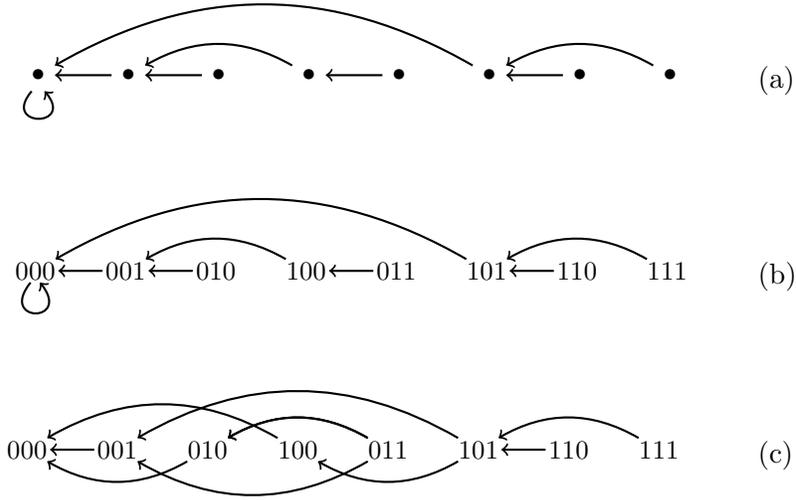
\begin{figure}
\[
\begin{array}{cl}
	\begin{array}{c}
    \begin{tikzpicture}
	    \node (000) at ({0*1.2},0){\small $\bullet$};
        \node (001) at ({1*1.2},0){\small $\bullet$};
        \node (010) at ({2*1.2},0){\small $\bullet$};
        \node (100) at ({3*1.2},0){\small $\bullet$};
        \node (011) at ({4*1.2},0){\small $\bullet$};
        \node (101) at ({5*1.2},0){\small $\bullet$};
        \node (110) at ({6*1.2},0){\small $\bullet$};
        \node (111) at ({7*1.2},0){\small $\bullet$};
        \path[thick,->,draw,black]
        (001) edge (000)
        (010) edge (001)
        (100) edge[bend right=30] (001)
        (011) edge (100)
        (101) edge[bend right=30] (000)
        (101) edge[white,bend left=30] (000)
        (110) edge (101)
        (111) edge[bend right=30] (101)
        ;
        \draw[->,thick] (000.-112) .. controls ({0-0.5},{0-0.7}) and ({0+0.5},{0-0.7}) .. (000.-68);

    \end{tikzpicture}
    \end{array}
&\quad\textrm{(a)}\\
	\begin{array}{c}
    \begin{tikzpicture}
	    \node[inner sep=1] (000) at ({0*1.2},0){\small $000$};
        \node[inner sep=1] (001) at ({1*1.2},0){\small $001$};
        \node[inner sep=1] (010) at ({2*1.2},0){\small $010$};
        \node[inner sep=1] (100) at ({3*1.2},0){\small $100$};
        \node[inner sep=1] (011) at ({4*1.2},0){\small $011$};
        \node[inner sep=1] (101) at ({5*1.2},0){\small $101$};
        \node[inner sep=1] (110) at ({6*1.2},0){\small $110$};
        \node[inner sep=1] (111) at ({7*1.2},0){\small $111$};
        \path[thick,->,draw,black]
        (001) edge (000)
        (010) edge (001)
        (100) edge[bend right=30] (001)
        (011) edge (100)
        (101) edge[bend right=30] (000)
		(101) edge[white,bend left=30] (000)
        (110) edge (101)
        (111) edge[bend right=30] (101)
        ;
        \draw[->,thick] (000.-112) .. controls ({0-0.5},{0-0.7}) and ({0+0.5},{0-0.7}) .. (000.-68);
    \end{tikzpicture}
    \end{array}
&\quad\textrm{(b)}\\
	\begin{array}{c}
    \begin{tikzpicture}
	    \node[inner sep=1] (000) at ({0*1.2},0){\small $000$};
        \node[inner sep=1] (001) at ({1*1.2},0){\small $001$};
        \node[inner sep=1] (010) at ({2*1.2},0){\small $010$};
        \node[inner sep=1] (100) at ({3*1.2},0){\small $100$};
        \node[inner sep=1] (011) at ({4*1.2},0){\small $011$};
        \node[inner sep=1] (101) at ({5*1.2},0){\small $101$};
        \node[inner sep=1] (110) at ({6*1.2},0){\small $110$};
        \node[inner sep=1] (111) at ({7*1.2},0){\small $111$};
        \path[thick,->,draw,black]
		(101) edge[white,bend left=30] (001)
        (001) edge (000)
        (010) edge[bend left=30] (000)
        (100) edge[bend right=30] (000)
        (011) edge[bend right=30] (010)
        
        (011) edge[bend right=30] (010)
        (011) edge[bend left=30] (001)
        
        (101) edge[bend left=30] (100)
        (101) edge[bend right=30] (001)

        (110) edge (101)
        (111) edge[bend right=30] (101)
        ;
    \end{tikzpicture}
    \end{array}
&\quad\textrm{(c)}
\end{array}
\]
{\caption{\label{fig:nilpotent} (a) Unlabelled synchronous dynamics of a nilpotent function $f\in F(3)$ with a topological order, increasing from left to right. (b) Function $h\sim f$ obtained with a monotone permutation. (c) Decreasing transitions of $\A(h)$, showing that there is a decreasing path from any configuration to~$000$.}}
\end{figure}

\smallskip
With a similar but more technical argument, we can obtain a more general result of convergence by decreasing paths, Lemma~\ref{decreasing_lemma} below. The statement is kept as general as possible, in order to be used later to treat the case where $f$ has no fixed point, and it involves several definitions. 

\medskip
Let $f,h\in F(n)$ and $X\subseteq \B^n$. We say that $h$ is \EM{$X$-converging (by decreasing paths)} if $\A(h)$ has a decreasing path from any configuration to $X\cup\FP(h)$. We say that $h$ is \EM{robustly $X$-converging} if $\pi\circ h\circ \pi^{-1}$ is $X$-converging for every permutation $\pi$ of $\B^n$ acting as the identity on $\overline{X}$. Taking $\pi=\id$, we see that if $h$ is robustly $X$-converging, then $h$ is $X$-converging. We write \EM{$h\sim_X f$} to means that $h\sim f$ and that, for all $x\in X$, 
\begin{itemize}
\item
$h(x)= f(x)$ if $f(x)\in X$;
\item
$h(x)\not\in X$ otherwise. 
\end{itemize}
In other words, $h\sim_X f$ if and only if $h\sim f$ and the subgraph of $\S(h)$ induced by $X$ is equal to the subgraph of $\S(f)$ induced by $X$. Finally, we say that $X$ is an \EM{down set} if $x\in X$ and $y\leq x$ implies $y\in X$; and $X$ is an \EM{upper set} if $\overline{X}$ is a down set. 

\begin{lemma}\label{decreasing_lemma}
Let $f\in F(n)$ and let $X\subseteq \B^n$ be a non-empty down set with $f(\overline{X})\neq \overline{X}$. Suppose that, for all $1<\ell<n$, $\overline{X}$ contains at least ${n-1\choose \ell-1}+1$ configurations of weight $\ell$. Then there exists a robustly $X$-converging function $h\sim_X f$. 
\end{lemma}

This lemma will be proved later (in Section \ref{sec:decreasing_lemma}); for the moment we show that it easily implies Theorem~\ref{thm:FP}. 

\begin{proof}[{\bf Proof of Theorem~\ref{thm:FP} (assuming Lemma~\ref{decreasing_lemma})}]
We consider two cases.
\begin{enumerate}
\item
Suppose that $f$ is not a permutation. Since $f$ has a fixed point, there exists $g\sim f$ such that $g(\ZERO)=\ZERO$. Let $X=\{\ZERO\}$, which is a down set. Since $g$ is not a permutation, we have $g(\overline{X})\neq \overline{X}$. Furthermore, for all $1<\ell<n$, the number of $x\in \overline{X}$ with $w(x)=\ell$ is ${n\choose \ell}>{n-1\choose \ell-1}$. Thus, by Lemma~\ref{decreasing_lemma}, there exists a robustly $X$-converging function $h\sim_X g$. Hence, $\A(h)$ has a decreasing path from any configuration to $\{\ZERO\}\cup\FP(h)$. Since $h\sim_X g$ we have $h(\ZERO)=g(\ZERO)=\ZERO$ and thus $\A(h)$ has a decreasing path from any configuration to~$\FP(h)$. 
\item 
Suppose that $f$ is a permutation. If $f=\id$ the result is obvious so suppose that $f\neq\id$. Since $f$ has a fixed point and $f\neq\id$, there exists $g\sim f$ such that $g(\ZERO)=\ZERO$ and $g_1(e_1)=0$. Let $X=\{\ZERO,e_1\}$, which is a down set. Since $g(e_1)\neq e_1$ and $g(e_1)\neq\ZERO=g(\ZERO)$ (because $g$ is a permutation), we have $g(e_1)\in\overline{X}$ and thus $g(\overline{X})\neq \overline{X}$. Furthermore, for all $1<\ell<n$, the number of $x\in \overline{X}$ with $w(x)=\ell$ is ${n\choose \ell}>{n-1\choose \ell-1}$.
Thus, by Lemma~\ref{decreasing_lemma}, there exists a robustly $X$-converging function $h\sim_X g$. Hence, $\A(h)$ has a decreasing path from any configuration to $\{\ZERO,e_1\}\cup\FP(h)$, and since $h\sim_X g$ we have $h(\ZERO)=g(\ZERO)=\ZERO$. We then consider two subcases:
\begin{enumerate}
\item[2.1.] 
Suppose that $h_1(e_1)=0$. Then $\A(h)$ has a decreasing arc from $e_1$ to $\ZERO$. Thus, $\A(h)$ has a decreasing path from any configuration to $\FP(h)$.
\item[2.2.] 
Suppose that $h_1(e_1)=1$. Then $h(e_1)\not\in X$: since $h\sim _X g$, if $g(e_1)\in X$ then $h_1(e_1)=g_1(e_1)=0$, a contradiction; so $g(e_1)\not\in X$ and thus $h(e_1)\not\in X$. Let $\pi=(\ZERO\leftrightarrow e_1)$ and $h'=\pi\circ h\circ \pi$. Since $h$ is robustly $X$-converging and $\pi$ acts as the identity on $\overline{X}$, $h'$ is $X$-converging: $\A(h')$ has a decreasing path from any configuration to $\{\ZERO,e_1\}\cup \FP(h')$. Furthermore, $h'(e_1)=\pi(h(\pi(e_1)))=\pi(h(\ZERO))=\pi(\ZERO)=e_1$ and $h'(\ZERO)=\pi(h(\pi(\ZERO)))=\pi(h(e_1))=h(e_1)$, where the last equality holds since $h(e_1)\not\in X$. So $h'_1(\ZERO)=h_1(e_1)=1$. Thus, $\A(h')$ has an increasing arc from $\ZERO$ to $e_1$ and we deduce that $\A(h')$ has an almost decreasing path from any configuration to $\FP(h')$.
\end{enumerate} 
\end{enumerate} 
\end{proof}

\subsection{Asynchronous convergence toward a unique small attractor}\label{sec:one_att}
 
In this subsection we prove the second assertion of Theorem \ref{thm:small_att} for $n\geq 5$, with additional information on the paths reaching the unique small asynchronous attractors; the case $n\leq 4$ is treated in Appendix \ref{sec:n4}. 
 
\begin{theorem}\label{thm:A4}
Let $f\in F(n)$ without fixed point, $n\geq 5$. There exists $h\sim f$ such that $\A(h)$ has an attractor $A$ of size $\leq 4$ and $\A(h)$ has an almost decreasing path from any configuration to~$A$.
\end{theorem}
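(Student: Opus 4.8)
The plan is to derive Theorem~\ref{thm:A4} from Lemma~\ref{decreasing_lemma} by engineering a small terminal strongly connected set in a low-weight corner of the cube. First I would record the consequence of the hypothesis: since $f$ has no fixed point and $h\sim f$ forces $\fp(h)=\fp(f)=0$, any robustly $X$-converging $h$ in fact satisfies that $\A(h)$ has a decreasing path from every configuration to $X$ (the term $\FP(h)$ disappears). So it suffices to produce $g\sim f$ and a non-empty down set $X$ such that (a) the induced subgraph of $\S(g)$ on $X$ creates, inside $X$, an attractor $A$ of $\A(g)$ with $|A|\le 4$ that is reached by an almost decreasing path from every vertex of $X$, and (b) $X$ meets the hypotheses of Lemma~\ref{decreasing_lemma}, namely $g(\bar X)\neq\bar X$ and, for each $1<\ell<n$, $\bar X$ contains at least $\binom{n-1}{\ell-1}+1$ configurations of weight $\ell$. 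Applying the lemma yields a robustly $X$-converging $h\sim_X g$; since $h\sim_X g$ keeps the induced dynamics on $X$ unchanged, $A$ remains an attractor of $\A(h)$ still reached inside $X$ by an almost decreasing path, while the lemma supplies decreasing paths from everywhere to $X$. Concatenating gives an almost decreasing path from any configuration to $A$, and a path from every configuration to the terminal strongly connected set $A$ forces $A$ to be the unique attractor.

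The heart of the argument is the choice of $A$. Because $f$ has no fixed point, $\S(f)$ has a limit cycle of length $\ell\ge 2$, and I would place a short piece of the cycle structure on the $2$-subcube $\{\ZERO,e_1,e_2,e_{1,2}\}$, exploiting that $g\sim f$ can realize on any $4$-subset exactly an induced subgraph of $\S(f)$. Since $Q_n$ is bipartite, a synchronous cycle embeds as an asynchronous attractor of the same size only when laid out geodesically (each synchronous step a single bit flip), which is possible precisely for even lengths: for $\ell=2$ I would take the size-$2$ attractor $A=\{\ZERO,e_1\}$ with $g(\ZERO)=e_1$ and $g(e_1)=\ZERO$, and for a cycle routable as a Hamiltonian $4$-cycle of $Q_2$ I would take $A=\{\ZERO,e_1,e_{1,2},e_2\}$ with $g$ the directed $4$-cycle $\ZERO\to e_1\to e_{1,2}\to e_2\to\ZERO$. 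When only odd short cycles are available the geodesic layout is impossible, so I would instead let $A$ be the full $2$-subcube and choose $g|_A$ so that every distance-one out-neighbour of a vertex of $A$ stays inside $A$ (for instance a $3$-cycle together with a transient vertex feeding it, or the negation map of $Q_2$ arising from two $2$-cycles); a short case check shows each such $g|_A$ makes $A$ a terminal strongly connected set of size $4$.

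Next I would verify the two defining properties of $A$ directly from $g|_A$: terminality, i.e.\ for every $x\in A$ and every $i$ with $g_i(x)\neq x_i$ one has $x+e_i\in A$ (this is exactly where the geodesic or absorbing layout is used), and strong connectivity of $\A(g)[A]$, which is immediate in each gadget. To satisfy Lemma~\ref{decreasing_lemma} I would take $X\supseteq A$ a down set and ensure $g(\bar X)\neq\bar X$: for non-permutations this is automatic since $g$ is not surjective, whereas for derangements the whole-cycle gadget has $g(A)=A$, so I would enlarge $X$ by one low-weight vertex lying on another cycle, whose image is outside $X$, giving $g(X)\neq X$ and hence $g(\bar X)\neq\bar X$. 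The weight-counting condition is comfortably met for $n\ge 5$, since $|X|$ is bounded and $\binom{n}{\ell}>\binom{n-1}{\ell-1}$ leaves ample configurations of each intermediate weight in $\bar X$. Finally I would check that every vertex of $X\setminus A$ admits an almost decreasing path into $A$, spending at most one increasing arc to enter the cyclic attractor, exactly as in case~2.2 of the proof of Theorem~\ref{thm:FP}.

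The main obstacle is the odd-cycle situation: when $f$ is a permutation whose cycle lengths are all odd and at least $3$, no limit cycle embeds geodesically and there is neither a $2$-cycle nor a $4$-cycle to reuse, so realizing a terminal strongly connected set of size at most $4$ on the $2$-subcube requires a careful choice of $g|_A$ together with a proof that all distance-one moves created by the non-geodesic jumps are absorbed into $A$. Simultaneously one must break the invariance $g(X)=X$ without destroying terminality of $A$ or the almost decreasing routing of $X\setminus A$ into $A$. Balancing these constraints---a four-element absorbing gadget realizable as an induced subgraph of an arbitrary fixed-point-free $f$ that still satisfies the hypotheses of Lemma~\ref{decreasing_lemma}---is the delicate part of the argument.
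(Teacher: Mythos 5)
Your overall architecture matches the paper's (plant a small gadget on low-weight configurations, invoke Lemma~\ref{decreasing_lemma}, use robustness to repair damage), but there are two genuine gaps, and you have correctly sensed the first without resolving it. Your gadget inventory --- a geodesic $2$-cycle, a geodesic $4$-cycle, a $3$-cycle plus a transient feeder, or two $2$-cycles realizing the negation of $Q_2$ --- requires $\S(f)$ to contain a limit cycle of length at most $4$ (the $2$-subcube gadget with $A$ absorbing forces $g(A)\subseteq A$, since any coordinate $i\geq 3$ in which $g(x)$ differs from $x\in A$ creates an arc leaving the subcube; for a permutation this means $A$ is a union of cycles of length $\leq 4$). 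So the construction fails outright for any fixed-point-free map whose limit cycles all have length $\geq 5$, e.g.\ a cyclic permutation of all $2^n$ configurations, and also for a single $6$-cycle (a geodesic hexagon gives an attractor of size $6>4$). The paper's way around this is the one ingredient your proposal lacks: patterns whose attractor is \emph{not} a subcube and whose synchronous support is a mere \emph{path} laid out non-geodesically (e.g.\ $\P^0_5$, where the path $e_2\to\ZERO\to e_{2,3}\to e_{1,3}\to e_3\to e_1$ produces the size-$4$ attractor $\{\ZERO,e_2,e_3,e_{1,3}\}$), together with a ``special arc'' $x\to 0*$ constraining only the first coordinate of the image of the tail vertex. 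Since long cycles always contain paths, this covers exactly the cases your menu misses; see the case analysis on $(\ell,L)$ in Lemma~\ref{lem:unavoidable}.

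The second gap is that your gadgets are not stable under Lemma~\ref{decreasing_lemma}, and you assert stability where it fails. The relation $h\sim_X g$ fixes only the subgraph of $\S(g)$ induced by $X$: for any $x\in X$ with $g(x)\notin X$, the lemma guarantees only $h(x)\notin X$, with no control on individual coordinates of $h(x)$. This destroys precisely the devices you rely on to satisfy $g(\bar X)\neq\bar X$ in the derangement case: your extra low-weight vertex $v$ with $g(v)\notin X$ may, after the lemma, have $h(v)$ with no decreasing arc back toward $A$, and your proposed routing (one increasing arc out of $X$, then a decreasing path back to $X$, then the in-$X$ routing) can accumulate two increasing arcs or re-enter $X$ at $v$ itself, so the ``almost decreasing'' claim does not follow. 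The paper's resolution is structural: it pairs each $0$-open pattern $\P^0_k$ with a $1$-open twin $\P^1_k$ and observes that the unique failure mode ($h_1(x)=1$ because $h(x)$ left $X$) is \emph{exactly} the special arc of $\P^1_k$; a permutation of $X$ alone (legitimate by robust $X$-convergence) then converts the broken pattern into a valid one (Lemma~\ref{lem:pattern_converging} and Figure~\ref{fig:correction}). Without an analogue of this correction step your argument is incomplete even in the cases your gadgets do cover whenever an image must leave $X$. Relatedly, the case $f^2=\id$ cannot be handled by your derangement fix for the same reason, and the paper treats it by a separate explicit construction (Lemma~\ref{lem:f^2=id}).
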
 

The proof technique follows that of Theorem~\ref{thm:FP}. To explain the analogy, let us first give a rough summary of the proof of Theorem~\ref{thm:FP}. First, we obtain $g\sim f$ by ``plugging'' in $X$ a ``pattern'', which is either a fixed point $\ZERO$  (case 1) or a fixed point $\ZERO$ and $e_1$ with $g_1(e_1)=0$ (case 2);  $g$ is chosen such that $g(\overline{X})\neq \overline{X}$ and $\A(g)$ has a decreasing path from any configuration in $X$ to a fixed point. We then use Lemma~\ref{decreasing_lemma} to obtain a robustly $X$-converging function $h\sim_X g$. If $\A(h)$ still has a decreasing path from any configuration in $X$ to a fixed point then we are done (case 1 and 2.1). Otherwise, we apply a ``correction'': we swap configurations in $X$ to obtain $h'\sim h$ so that $\A(h')$ has an almost decreasing path from any configuration in $X$ to a fixed point (case 2.2). Since $h$ is robustly $X$-converging, $h'$ is still $X$-converging and we are done. 

\medskip
To prove Theorem~\ref{thm:A4} we proceed similarly, except that we obtain $g\sim f$ by ``plugging'' in $X$ more sophisticated ``patterns''. Any such pattern forces $\A(g)$ to contain an attractor $A\subseteq X$ of size at most $4$ and an almost decreasing path from any configuration in $X$ to $A$. Furthermore, the pattern is chosen so that $g(\overline{X})\neq \overline{X}$. We then use Lemma~\ref{decreasing_lemma} to obtain a robustly $X$-converging function $h\sim_X g$. Then $A$ is still an attractor of $\A(h)$, and if $\A(h)$ still has an almost decreasing path from any configuration in $X$ to $A$ then we are done. Otherwise, we apply a ``correction'': we permute configurations in $X$ to obtain $h'\sim h$ so that $\A(h')$ has an almost decreasing path from any configuration in $X$ to an attractor $A'$ with the same size as $A$. Since $h$ is robustly $X$-converging, $h'$ is still $X$-converging and we are done. Actually, this works well unless $f^2=\id$, and for that case we have a direct proof (Lemma~\ref{lem:f^2=id}).

\medskip
There are 16 possibles patterns: patterns $\EM{\P_1},\EM{\P_2},\EM{\P_3},\EM{\P_4},\EM{\P_5},\EM{\P_6}$ are described in Figure \ref{fig:closed_patterns} and called \EM{closed patterns}; for $a=0,1$, patterns $\EM{\P^a_1},\EM{\P^a_2},\EM{\P^a_3},\EM{\P^a_4},\EM{\P^a_5}$ are described in Figures \ref{fig:0-open_patterns} and \ref{fig:1-open_patterns} and called \EM{$a$-open patterns}. A closed pattern $\P_k$ is a digraph with vertices in $\B^n$, and we say that $f$ \EM{contains} $\P_k$ if $\P_k$ is a subgraph of $\S(f)$. An $a$-open pattern $\P^a_k$ is a digraph with vertices in $\B^n$ plus a \EM{special arc}, which is $x\to a*$ for some configuration $x$. We then say that $x$ is the \EM{special configuration} of the pattern and that $\S(f)$ \EM{contains} $x\to a*$ if $f_1(x)=a$. Then $f$ \EM{contains} a pattern $\P^a_k$ if $\S(f)$  contains every arc of $\P^a_k$, including the special arc.

\medskip
Each pattern $\P$ is associated with an \EM{asynchronous pattern} $\EM{\A(\P)}$, described in Figures \ref{fig:closed_patterns}, \ref{fig:0-open_patterns} and \ref{fig:1-open_patterns}; we easily check that if $f$ contains $\P$ then $\A(\P)$ is a subgraph of $\A(f)$. This asynchronous pattern $\A(\P)$ has a unique attractor, denoted $\EM{A(\P)}$, which is of size at most $4$ (represented in blue in Figures \ref{fig:closed_patterns}, \ref{fig:0-open_patterns} and \ref{fig:1-open_patterns}). The vertex set of $\A(\P)$ (which does not include the configurations $a*$ of the $a$-open patterns) is denoted \EM{$X(\P)$}. Note that $\A(\P)$ has an almost decreasing path from any configuration in $X(\P)$ to $A(\P)$. Note also that $X(\P)$ is a down set. 

\begin{figure}[h]
\[
\def\x{1.3}
\def\z{0.7}
\def\zz{0.6}
\begin{array}{rcl}
	\P_1
	\begin{array}{c}
    \begin{tikzpicture}
     	\draw[white] (-0.5,0) rectangle (3.5,1);	
        \node[inner sep=1] (0) at (0,0.5){\small $\ZERO$};
		\node[inner sep=1] (1) at (1,0.5){\small $e_1$};
		\path[thick,->]
        (0) edge[bend left=15] (1)
        (1) edge[bend left=15] (0)
        ;
    \end{tikzpicture}
    \end{array}
    &&
    \A(\P_1)~
    \begin{array}{c}
    \begin{tikzpicture}
    	\draw[white] (-0.3,-0.3) rectangle ({\x+\z+0.3},{\x+0.3});
        \node[inner sep=1,blue] (1) at (\x,{\x/2}){\small $e_1$};
        \node[inner sep=1,blue] (0) at (0,{\x/2}){\small $\ZERO$};
        \path[thick,->]
        (1) edge[<->,blue] (0)
        ;
    \end{tikzpicture}
    \end{array}
    \\
   	\P_2
    \begin{array}{c}
    \begin{tikzpicture}
    	\draw[white] (-0.5,-0.5) rectangle (3.5,1.5);
        \node[inner sep=1] (2) at (0,1){\small $e_2$};
        \node[inner sep=1] (12) at (1,1){\small $e_{1,2}$};
        \node[inner sep=1] (1) at (1,0){\small $e_1$};
        \node[inner sep=1] (0) at (0,0){\small $0$};
        \path[thick,->]
        (0) edge (1)
        (1) edge (12)
        (12) edge (2)
        (2) edge (0)
        ;
    \end{tikzpicture}
    \end{array}
    &&
    \A(\P_2)~
    \begin{array}{c}
    \begin{tikzpicture}
    	\draw[white] (-0.3,-0.3) rectangle ({\x+\z+0.3},{\x+0.3});
        \node[inner sep=1,blue] (1) at (\x,0){\small $e_1$};
        \node[inner sep=1,blue] (2) at (0,\x){\small $e_2$};
        \node[inner sep=1,blue] (0) at (0,0){\small $\ZERO$};
        \node[inner sep=1,blue] (12) at (\x,\x){\small $e_{1,2}$};
        \path[thick,->,blue]
        (0) edge (1)
        (1) edge(12)
        (12) edge (2)
		(2) edge (0)
        ;
    \end{tikzpicture}
    \end{array}
	\\
	\P_3
    \begin{array}{c}
    \begin{tikzpicture}
 	   \draw[white] (-0.5,-0.5) rectangle (3.5,1.5);	
        \node[inner sep=1] (1) at (0,0){\small $e_1$};
        \node[inner sep=1] (2) at (0,1){\small $e_2$};
        \node[inner sep=1] (0) at (1,0.5){\small $\ZERO$};
        \node[inner sep=1] (12) at (2,0.5){\small $e_{1,2}$};
        \path[thick,->]
        (1) edge (0)
        (2) edge (0)
        (0) edge (12)
        (12) edge[bend left=30] (1)
        ;
    \end{tikzpicture}
    \end{array}
    &&
    \A(\P_3)~
    \begin{array}{c}
    \begin{tikzpicture}
    	\draw[white] (-0.3,-0.3) rectangle ({\x+\z+0.3},{\x+0.3});
        \node[inner sep=1,blue] (1) at (\x,0){\small $e_1$};
        \node[inner sep=1,blue] (2) at (0,\x){\small $e_2$};
        \node[inner sep=1,blue] (0) at (0,0){\small $\ZERO$};
        \node[inner sep=1] (12) at (\x,\x){\small $e_{1,2}$};
        \path[thick,->]
        (1) edge[<->,blue] (0)
        (2) edge[<->,blue] (0)
        (12) edge (1)
        ;
    \end{tikzpicture}
    \end{array}
    \\
   	\P_4
    \begin{array}{c}
    \begin{tikzpicture}
    	\draw[white] (-0.5,-0.5) rectangle (3.5,1.5);
        \node[inner sep=1] (2) at (0,1){\small $e_2$};
        \node[inner sep=1] (0) at (1,1){\small $\ZERO$};
        \node[inner sep=1] (12) at (2,1){\small $e_{1,2}$};
        \node[inner sep=1] (13) at (2,0){\small $e_{1,3}$};
        \node[inner sep=1] (1) at (1,0){\small $e_1$};
        \node[inner sep=1] (3) at (0,0){\small $e_3$};
        \path[thick,->]
        (2) edge (0)
        (0) edge (12)
        (12) edge[bend left=35] (2)
        (13) edge (1)
        (1) edge (3)
        (3) edge[bend left=35] (13)
        ;
    \end{tikzpicture}
    \end{array}
    &&
    \A(\P_4)~
    \begin{array}{c}
    \begin{tikzpicture}
    	\draw[white] (-0.3,-0.3) rectangle ({\x+\z+0.3},{\x+0.3});
        \node[inner sep=1,blue] (2) at (0,\x){\small $e_2$};
        \node[inner sep=1,blue] (0) at (0,0){\small $\ZERO$};
        \node[inner sep=1] (12) at (\x,\x){\small $e_{1,2}$};
        \node[inner sep=1,blue] (13) at ({\x+\z},{0+\zz}){\small $e_{1,3}$};
        \node[inner sep=1,blue] (1) at (\x,0){\small $e_1$};
        \node[inner sep=1] (3) at (\z,\zz){\small $e_3$};
        \path[thick,->]
        (2) edge[<->,blue] (0)
        (0) edge[<->,blue] (1)
        (12) edge (2)
        (13) edge[<->,blue] (1)
        (3) edge (13)
        ;
    \end{tikzpicture}
    \end{array}
    \\
	\P_5
    \begin{array}{c}
    \begin{tikzpicture}
    	\draw[white] (-0.5,-0.5) rectangle (3.5,1.5);
        \node[inner sep=1] (2) at (0,1){\small $e_2$};
        \node[inner sep=1] (0) at (1,1){\small $\ZERO$};
        \node[inner sep=1] (12) at (2,1){\small $e_{1,2}$};
        \node[inner sep=1] (13) at (2,0){\small $e_{1,3}$};
        \node[inner sep=1] (1) at (1,0){\small $e_1$};
        \node[inner sep=1] (3) at (0,0){\small $e_3$};
        \path[thick,->]
   	  	(2) edge (0)
        (0) edge (12)
        (12) edge (13)
        (13) edge (1)
        (1) edge (3)
        (3) edge (0)
        ;
       \end{tikzpicture}
    \end{array}
    &&
    \A(\P_5)~
    \begin{array}{c}
    \begin{tikzpicture}
    	\draw[white] (-0.3,-0.3) rectangle ({\x+\z+0.3},{\x+0.3});
        \node[inner sep=1,blue] (2) at (0,\x){\small $e_2$};
        \node[inner sep=1,blue] (0) at (0,0){\small $\ZERO$};
        \node[inner sep=1] (12) at (\x,\x){\small $e_{1,2}$};
        \node[inner sep=1,blue] (13) at ({\x+\z},{0+\zz}){\small $e_{1,3}$};
        \node[inner sep=1,blue] (1) at (\x,0){\small $e_1$};
        \node[inner sep=1] (3) at (\z,\zz){\small $e_3$};
        \path[thick,->]
   	  	(2) edge[<->,blue] (0)
        (0) edge[<->,blue] (1)
        (12) edge (1)
        (13) edge[<->,blue] (1)
        (1) edge[<->,blue] (0)
        (3) edge (0)
        ;
       \end{tikzpicture}
    \end{array}
	\\
	\P_6
    \begin{array}{c}
    \begin{tikzpicture}
	    \draw[white] (-0.5,-0.5) rectangle (3.5,1.5);
        \node[inner sep=1] (2) at (0,1){\small $e_2$};
        \node[inner sep=1] (0) at (1,1){\small $\ZERO$};
        \node[inner sep=1] (12) at (2,1){\small $e_{1,2}$};
        \node[inner sep=1] (13) at (2,0){\small $e_{1,3}$};
        \node[inner sep=1] (1) at (1,0){\small $e_1$};
        \node[inner sep=1] (3) at (0,0){\small $e_3$};
        \path[thick,->]
        (2) edge (0)
        (0) edge (12)
        (12) edge (13)
        (13) edge (1)
        (1) edge (3)
        (3) edge (2)
        ;
    \end{tikzpicture}
    \end{array}
    &&
    \A(\P_6)~
    \begin{array}{c}
    \begin{tikzpicture}
    	\draw[white] (-0.3,-0.3) rectangle ({\x+\z+0.3},{\x+0.3});
        \node[inner sep=1,blue] (2) at (0,\x){\small $e_2$};
        \node[inner sep=1,blue] (0) at (0,0){\small $\ZERO$};
        \node[inner sep=1] (12) at (\x,\x){\small $e_{1,2}$};
        \node[inner sep=1,blue] (13) at ({\x+\z},{0+\zz}){\small $e_{1,3}$};
        \node[inner sep=1,blue] (1) at (\x,0){\small $e_1$};
        \node[inner sep=1] (3) at (\z,\zz){\small $e_3$};
        \path[thick,->]
        (2) edge[<->,blue] (0)
        (0) edge[<->,blue] (1)
        (12) edge (1)
        (13) edge[<->,blue] (1)
        (3) edge (0)
        ;
    \end{tikzpicture}
    \end{array}
\end{array}
\]
{\caption{\label{fig:closed_patterns} Closed patterns.}}
\end{figure}
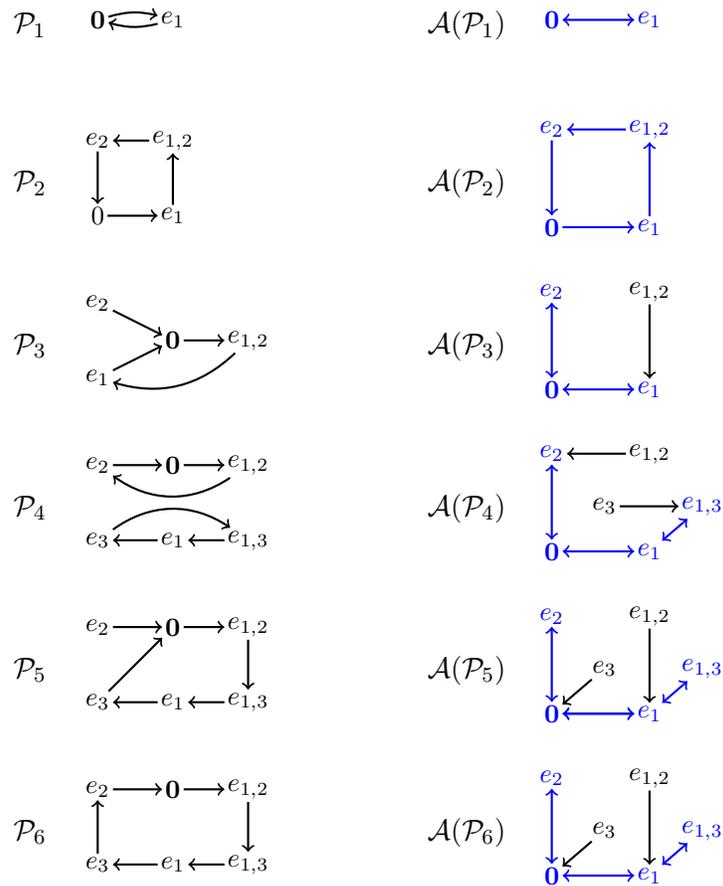

\begin{figure}[h]
\[
\def\x{1.3}
\def\z{0.7}
\def\zz{0.6}
\begin{array}{lclcl}
	\P^0_{1}
	\begin{array}{c}
    \begin{tikzpicture}
 	   \draw[white] (-0.5,-0.5) rectangle (3.5,1.5);	
        \node[inner sep=1] (0) at (0,1){\small $\ZERO$};
        \node[inner sep=1] (2) at (1,1){\small $e_2$};
        \node[inner sep=1] (12) at (0,0){\small $e_{1,2}$};
        \node[inner sep=1] (1) at (1,0){\small $e_1$};
        \node[inner sep=1] (0x) at (2,0){\small $0*$};
        \path[thick,->]
        (0) edge (2)
        (2) edge[bend left=35] (0)
        (12) edge (1)
        (1) edge (0x)
        ;
    \end{tikzpicture}
    \end{array}
    &&
    \A(\P^0_{1})~
    \begin{array}{c}
    \begin{tikzpicture}
    	\draw[white] (-0.3,-0.3) rectangle ({\x+\z+0.3},{\x+0.3});
        \node[inner sep=1] (1) at (\x,0){\small $e_1$};
        \node[inner sep=1,blue] (2) at (0,\x){\small $e_2$};
        \node[inner sep=1,blue] (0) at (0,0){\small $\ZERO$};
        \node[inner sep=1] (12) at (\x,\x){\small $e_{1,2}$};
        \path[thick,->]
        (0) edge[<->,blue] (2)
        (12) edge (1)
		(1) edge (0)
        ;
    \end{tikzpicture}
    \end{array}
    \\
   	\P^0_{2}
    \begin{array}{c}
    \begin{tikzpicture}
    	\draw[white] (-0.5,-0.5) rectangle (3.5,1.5);
        \node[inner sep=1] (13) at (0,0){\small $e_{1,3}$};
        \node[inner sep=1] (3) at (1,0){\small $e_3$};
        \node[inner sep=1] (1) at (2,0){\small $e_1$};
        \node[inner sep=1] (0x) at (3,0){\small $0*$};
        \node[inner sep=1] (2) at (0,1){\small $e_{2}$};
        \node[inner sep=1] (0) at (1,1){\small $\ZERO$};
        \node[inner sep=1] (23) at (2,1){\small $e_{2,3}$};
        \path[thick,->]
        (13) edge (3)
        (3) edge (1)
        (1) edge (0x)
        (2) edge (0)
        (0) edge (23)
        (23) edge[bend left=35] (2)
        ;
    \end{tikzpicture}
    \end{array}
    &&
    \A(\P^0_{2})~
    \begin{array}{c}
    \begin{tikzpicture}
    	\draw[white] (-0.3,-0.3) rectangle ({\x+\z+0.3},{\x+0.3});
        \node[inner sep=1,blue] (2) at (0,\x){\small $e_2$};
        \node[inner sep=1,blue] (0) at (0,0){\small $\ZERO$};
		\node[inner sep=1] (23) at (\z,{\x+\zz}){\small $e_{2,3}$};
        \node[inner sep=1,blue] (13) at ({\x+\z},{0+\zz}){\small $e_{1,3}$};
        \node[inner sep=1] (1) at (\x,0){\small $e_1$};
        \node[inner sep=1,blue] (3) at (\z,\zz){\small $e_3$};
        \path[thick,->]
        (2) edge[<->,blue] (0)
        (0) edge[<->,blue] (3)
		(3) edge[<->,blue] (13)
        (23) edge (2)
        (1) edge (0)
        ;
    \end{tikzpicture}
    \end{array}
    \\
   	\P^0_{3}
    \begin{array}{c}
    \begin{tikzpicture}
    	\draw[white] (-0.5,-0.5) rectangle (3.5,1.5);
        \node[inner sep=1] (13) at (0,0){\small $e_{1,3}$};
        \node[inner sep=1] (3) at (1,0){\small $e_3$};
        \node[inner sep=1] (1) at (2,0){\small $e_1$};
        \node[inner sep=1] (0x) at (3,0){\small $0*$};
        \node[inner sep=1] (2) at (0,1){\small $e_{2}$};
        \node[inner sep=1] (0) at (1,1){\small $\ZERO$};
        \node[inner sep=1] (23) at (2,1){\small $e_{2,3}$};
        \node[inner sep=1] (12) at (3,1){\small $e_{1,2}$};
        \path[thick,->]
        (13) edge (3)
        (3) edge (1)
        (1) edge (0x)
        (2) edge (0)
        (0) edge (23)
        (23) edge (12)
        (12) edge[bend left=35] (2)
        ;
    \end{tikzpicture}
    \end{array}
    &&
    \A(\P^0_{3})~
    \begin{array}{c}
    \begin{tikzpicture}
    	\draw[white] (-0.3,-0.3) rectangle ({\x+\z+0.3},{\x+0.3});
        \node[inner sep=1,blue] (2) at (0,\x){\small $e_2$};
        \node[inner sep=1,blue] (0) at (0,0){\small $\ZERO$};
		\node[inner sep=1] (23) at (\z,{\x+\zz}){\small $e_{2,3}$};
        \node[inner sep=1,blue] (13) at ({\x+\z},{0+\zz}){\small $e_{1,3}$};
        \node[inner sep=1] (1) at (\x,0){\small $e_1$};
        \node[inner sep=1,blue] (3) at (\z,\zz){\small $e_3$};
		\node[inner sep=1] (12) at (\x,\x){\small $e_{1,2}$};
        \path[thick,->]
        (2) edge[<->,blue] (0)
        (0) edge[<->,blue] (3)
		(3) edge[<->,blue] (13)
        (23) edge (2)
        (12) edge (2)
        (1) edge (0)
        ;
    \end{tikzpicture}
    \end{array}
    \\
   	\P^0_{4}
    \begin{array}{c}
    \begin{tikzpicture}
    	\draw[white] (-0.5,-0.5) rectangle (3.5,1.5);
        \node[inner sep=1] (13) at (0,0){\small $e_{1,3}$};
        \node[inner sep=1] (3) at (1,0){\small $e_3$};
        \node[inner sep=1] (1) at (2,0){\small $e_1$};
        \node[inner sep=1] (0x) at (3,0){\small $0*$};
        \node[inner sep=1] (2) at (0,1){\small $e_{2}$};
        \node[inner sep=1] (0) at (1,1){\small $\ZERO$};
        \node[inner sep=1] (23) at (2,1){\small $e_{2,3}$};
        \node[inner sep=1] (123) at (3,1){\small $e_{1,2,3}$};
		\node[inner sep=1] (12) at (4,1){\small $e_{1,2}$};
        \path[thick,->]
        (13) edge (3)
        (3) edge (1)
        (1) edge (0x)
        (2) edge (0)
        (0) edge (23)
        (23) edge (123)
        (123) edge (12)
        (12) edge[bend left=30] (2)
        ;
    \end{tikzpicture}
    \end{array}
    &&
    \A(\P^0_{4})~
    \begin{array}{c}
    \begin{tikzpicture}
    	\draw[white] (-0.3,-0.3) rectangle ({\x+\z+0.3},{\x+0.3});
        \node[inner sep=1,blue] (2) at (0,\x){\small $e_2$};
        \node[inner sep=1,blue] (0) at (0,0){\small $\ZERO$};
		\node[inner sep=1] (23) at (\z,{\x+\zz}){\small $e_{2,3}$};
        \node[inner sep=1,blue] (13) at ({\x+\z},{0+\zz}){\small $e_{1,3}$};
        \node[inner sep=1] (1) at (\x,0){\small $e_1$};
        \node[inner sep=1,blue] (3) at (\z,\zz){\small $e_3$};
		\node[inner sep=1] (12) at (\x,\x){\small $e_{1,2}$};
		\node[inner sep=1] (123) at ({\x+\z},{\x+\zz}){\small $e_{1,2,3}$};
        \path[thick,->]
        (2) edge[<->,blue] (0)
        (0) edge[<->,blue] (3)
		(3) edge[<->,blue] (13)
        (23) edge (123)
        (123) edge (12)
        (12) edge (2)
        (1) edge (0)
        ;
    \end{tikzpicture}
    \end{array}
    \\
	\P^0_{5}
    \begin{array}{c}
    \begin{tikzpicture}
    	\draw[white] (-0.5,-0.5) rectangle (3.5,1.5);
		\node[inner sep=1] (2) at (0,0.5){\small $e_2$};
		\node[inner sep=1] (0) at (1,0.5){\small $\ZERO$};		
		\node[inner sep=1] (23) at (2,0.5){\small $e_{2,3}$};		
		\node[inner sep=1] (13) at (3,0.5){\small $e_{1,3}$};		
		\node[inner sep=1] (3) at (4,0.5){\small $e_3$};		
		\node[inner sep=1] (1) at (5,0.5){\small $e_1$};		
        \node[inner sep=1] (0x) at (6,0.5){\small $0*$};		
        \path[thick,->]
        (2) edge (0)
        (0) edge (23)
        (23) edge (13)
        (13) edge (3)
        (3) edge (1)
        (1) edge (0x)
        ;
    \end{tikzpicture}
	\end{array}
	&&
    \A(\P^0_5)~
    \begin{array}{c}
    \begin{tikzpicture}
    	\draw[white] (-0.3,-0.3) rectangle ({\x+\z+0.3},{\x+0.3});
        \node[inner sep=1,blue] (2) at (0,\x){\small $e_2$};
        \node[inner sep=1,blue] (0) at (0,0){\small $\ZERO$};
		\node[inner sep=1] (23) at (\z,{\x+\zz}){\small $e_{2,3}$};
        \node[inner sep=1,blue] (13) at ({\x+\z},{0+\zz}){\small $e_{1,3}$};
        \node[inner sep=1] (1) at (\x,0){\small $e_1$};
        \node[inner sep=1,blue] (3) at (\z,\zz){\small $e_3$};
        \path[thick,->]
        (2) edge[<->,blue] (0)
        (0) edge[<->,blue] (3)
		(3) edge[<->,blue] (13)
        (23) edge (3)
        (1) edge (0)
        ;
    \end{tikzpicture}
    \end{array}
\end{array}
\]
{\caption{\label{fig:0-open_patterns} $0$-open patterns.}}
\end{figure}
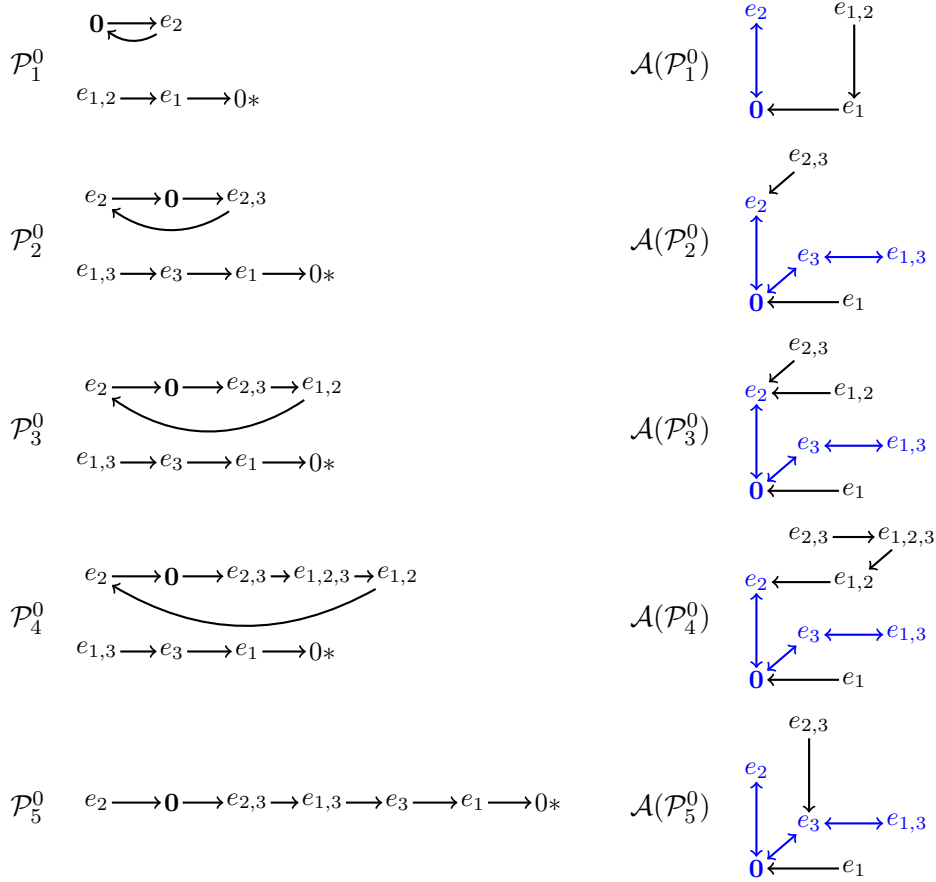

\begin{figure}[h]
\[
\def\x{1.3}
\def\z{0.7}
\def\zz{0.6}
\begin{array}{lclcl}
	\P^1_{1}
	\begin{array}{c}
    \begin{tikzpicture}
 	   \draw[white] (-0.5,-0.5) rectangle (3.5,1.5);	
        \node[inner sep=1] (1) at (0,1){\small $e_1$};
        \node[inner sep=1] (12) at (1,1){\small $e_{1,2}$};
        \node[inner sep=1] (2) at (0,0){\small $e_2$};
        \node[inner sep=1] (0) at (1,0){\small $\ZERO$};
        \node[inner sep=1] (1x) at (2,0){\small $1*$};
        \path[thick,->]
        (1) edge (12)
        (12) edge[bend left=35] (1)
        (2) edge (0)
        (0) edge (1x)
        ;
    \end{tikzpicture}
    \end{array}
    &&
    \A(\P^1_{1})~
    \begin{array}{c}
    \begin{tikzpicture}
    	\draw[white] (-0.3,-0.3) rectangle ({\x+\z+0.3},{\x+0.3});
        \node[inner sep=1,blue] (1) at (\x,0){\small $e_1$};
        \node[inner sep=1] (2) at (0,\x){\small $e_2$};
        \node[inner sep=1] (0) at (0,0){\small $\ZERO$};
        \node[inner sep=1,blue] (12) at (\x,\x){\small $e_{1,2}$};
        \path[thick,->]
        (1) edge[<->,blue] (12)
        (2) edge (0)
		(0) edge (1)
        ;
    \end{tikzpicture}
    \end{array}
    \\
   	\P^1_{2}
    \begin{array}{c}
    \begin{tikzpicture}
    	\draw[white] (-0.5,-0.5) rectangle (3.5,1.5);
        \node[inner sep=1] (2) at (0,1){\small $e_2$};
        \node[inner sep=1] (0) at (1,1){\small $\ZERO$};
        \node[inner sep=1] (12) at (2,1){\small $e_{1,2}$};
        \node[inner sep=1] (1x) at (3,0){\small $1*$};
        \node[inner sep=1] (13) at (0,0){\small $e_{1,3}$};
        \node[inner sep=1] (1) at (1,0){\small $e_1$};
        \node[inner sep=1] (3) at (2,0){\small $e_3$};
        \path[thick,->]
        (2) edge (0)
        (0) edge (12)
        (12) edge[bend left=35] (2)
        (13) edge (1)
        (1) edge (3)
        (3) edge (1x)
        ;
    \end{tikzpicture}
    \end{array}
    &&
    \A(\P^1_{2})~
    \begin{array}{c}
    \begin{tikzpicture}
    	\draw[white] (-0.3,-0.3) rectangle ({\x+\z+0.3},{\x+0.3});
        \node[inner sep=1,blue] (2) at (0,\x){\small $e_2$};
        \node[inner sep=1,blue] (0) at (0,0){\small $\ZERO$};
        \node[inner sep=1] (12) at (\x,\x){\small $e_{1,2}$};
        \node[inner sep=1,blue] (13) at ({\x+\z},{0+\zz}){\small $e_{1,3}$};
        \node[inner sep=1,blue] (1) at (\x,0){\small $e_1$};
        \node[inner sep=1] (3) at (\z,\zz){\small $e_3$};
		\node[inner sep=1,white] (23) at (\z,{\x+\zz}){\small $e_{2,3}$};
        \path[thick,->]
        (2) edge[<->,blue] (0)
        (0) edge[<->,blue] (1)
        (12) edge (2)
        (13) edge[<->,blue] (1)
        (3) edge (13)
        ;
    \end{tikzpicture}
    \end{array}
    \\
   	\P^1_{3}
    \begin{array}{c}
    \begin{tikzpicture}
    	\draw[white] (-0.5,-0.5) rectangle (3.5,1.5);
        \node[inner sep=1] (2) at (0,1){\small $e_2$};
        \node[inner sep=1] (0) at (1,1){\small $\ZERO$};
        \node[inner sep=1] (12) at (2,1){\small $e_{1,2}$};
        \node[inner sep=1] (23) at (3,1){\small $e_{2,3}$};
        \node[inner sep=1] (1x) at (3,0){\small $1*$};
        \node[inner sep=1] (13) at (0,0){\small $e_{1,3}$};
        \node[inner sep=1] (1) at (1,0){\small $e_1$};
        \node[inner sep=1] (3) at (2,0){\small $e_3$};
        \path[thick,->]
        (2) edge (0)
        (0) edge (12)
        (12) edge (23)
		(23) edge[bend left=35] (2)
        (13) edge (1)
        (1) edge (3)
        (3) edge (1x)
        ;
    \end{tikzpicture}
    \end{array}
    &&
    \A(\P^1_{3})~
    \begin{array}{c}
    \begin{tikzpicture}
    	\draw[white] (-0.3,-0.3) rectangle ({\x+\z+0.3},{\x+0.3});
        \node[inner sep=1,blue] (2) at (0,\x){\small $e_2$};
        \node[inner sep=1,blue] (0) at (0,0){\small $\ZERO$};
        \node[inner sep=1] (12) at (\x,\x){\small $e_{1,2}$};
        \node[inner sep=1,blue] (13) at ({\x+\z},{0+\zz}){\small $e_{1,3}$};
		\node[inner sep=1] (23) at (\z,{\x+\zz}){\small $e_{2,3}$};
        \node[inner sep=1,blue] (1) at (\x,0){\small $e_1$};
        \node[inner sep=1] (3) at (\z,\zz){\small $e_3$};
        \path[thick,->]
        (2) edge[<->,blue] (0)
        (0) edge[<->,blue] (1)
        (12) edge (2)
        (13) edge[<->,blue] (1)
        (3) edge (13)
        (23) edge (2)
        ;
    \end{tikzpicture}
    \end{array}
    \\
   	\P^1_{4}
    \begin{array}{c}
    \begin{tikzpicture}
    	\draw[white] (-0.5,-0.5) rectangle (3.5,1.5);
        \node[inner sep=1] (2) at (0,1){\small $e_2$};
        \node[inner sep=1] (0) at (1,1){\small $\ZERO$};
        \node[inner sep=1] (12) at (2,1){\small $e_{1,2}$};
        \node[inner sep=1] (123) at (3,1){\small $e_{1,2,3}$};
        \node[inner sep=1] (23) at (4,1){\small $e_{2,3}$};        
        \node[inner sep=1] (13) at (0,0){\small $e_{1,3}$};
        \node[inner sep=1] (1) at (1,0){\small $e_1$};
        \node[inner sep=1] (3) at (2,0){\small $e_3$};
        \node[inner sep=1] (1x) at (3,0){\small $1*$};
        \path[thick,->]
        (2) edge (0)
        (0) edge (12)
        (12) edge (123)
        (123) edge (23)
        (23) edge[bend left=30] (2)
        (13) edge (1)
        (1) edge (3)
        (3) edge (1x)
        ;
    \end{tikzpicture}
    \end{array}
    &&
    \A(\P^1_{4})~
    \begin{array}{c}
    \begin{tikzpicture}
    	\draw[white] (-0.3,-0.3) rectangle ({\x+\z+0.3},{\x+0.3});
        \node[inner sep=1,blue] (2) at (0,\x){$e_2$};
        \node[inner sep=1,blue] (0) at (0,0){\small $\ZERO$};
        \node[inner sep=1] (12) at (\x,\x){\small $e_{1,2}$};
        \node[inner sep=1,blue] (13) at ({\x+\z},{0+\zz}){\small $e_{1,3}$};
        \node[inner sep=1] (123) at ({\x+\z},{\x+\zz}){\small $e_{1,2,3}$};
		\node[inner sep=1] (23) at (\z,{\x+\zz}){\small $e_{2,3}$};
        \node[inner sep=1,blue] (1) at (\x,0){\small $e_1$};
        \node[inner sep=1] (3) at (\z,\zz){\small $e_3$};
        \path[thick,->]
        (2) edge[<->,blue] (0)
        (0) edge[<->,blue] (1)
        (12) edge (123)
        (13) edge[<->,blue] (1)
        (3) edge (13)
        (23) edge (2)
        (123) edge (23)
        ;
    \end{tikzpicture}
    \end{array}
    \\
	\P^1_5
    \begin{array}{c}
    \begin{tikzpicture}
        \draw[white] (-0.5,-0.5) rectangle (3.5,1.5);
        \node[inner sep=1] (2) at (0,0.5){\small $e_2$};
        \node[inner sep=1] (0) at (1,0.5){\small $\ZERO$};
        \node[inner sep=1] (12) at (2,0.5){\small $e_{1,2}$};
        \node[inner sep=1] (13) at (3,0.5){\small $e_{1,3}$};
        \node[inner sep=1] (1) at (4,0.5){\small $e_1$};
        \node[inner sep=1] (3) at (5,0.5){\small $e_3$};
        \node[inner sep=1] (1x) at (6,0.5){\small $1*$};
        \path[thick,->]
        (2) edge (0)
        (0) edge (12)
        (12) edge (13)
        (13) edge (1)
        (1) edge (3)
        (3) edge (1x)
        ;
    \end{tikzpicture}
	\end{array}
	&&
    \A(\P^1_5)~
    \begin{array}{c}
    \begin{tikzpicture}
    	\draw[white] (-0.3,-0.3) rectangle ({\x+\z+0.3},{\x+0.3});
        \node[inner sep=1,blue] (2) at (0,\x){\small $e_2$};
        \node[inner sep=1,blue] (0) at (0,0){\small $\ZERO$};
        \node[inner sep=1] (12) at (\x,\x){\small $e_{1,2}$};
        \node[inner sep=1,blue] (13) at ({\x+\z},{0+\zz}){\small $e_{1,3}$};
        \node[inner sep=1,blue] (1) at (\x,0){\small $e_1$};
        \node[inner sep=1] (3) at (\z,\zz){\small $e_3$};
		\node[inner sep=1,white] (23) at (\z,{\x+\zz}){\small $e_{2,3}$};
        \path[thick,->]
        (2) edge[<->,blue] (0)
        (0) edge[<->,blue] (1)
        (12) edge (1)
        (13) edge[<->,blue] (1)
        (3) edge (13)
        ;
    \end{tikzpicture}
    \end{array}
\end{array}
\]
{\caption{\label{fig:1-open_patterns} $1$-open patterns.}}
\end{figure}
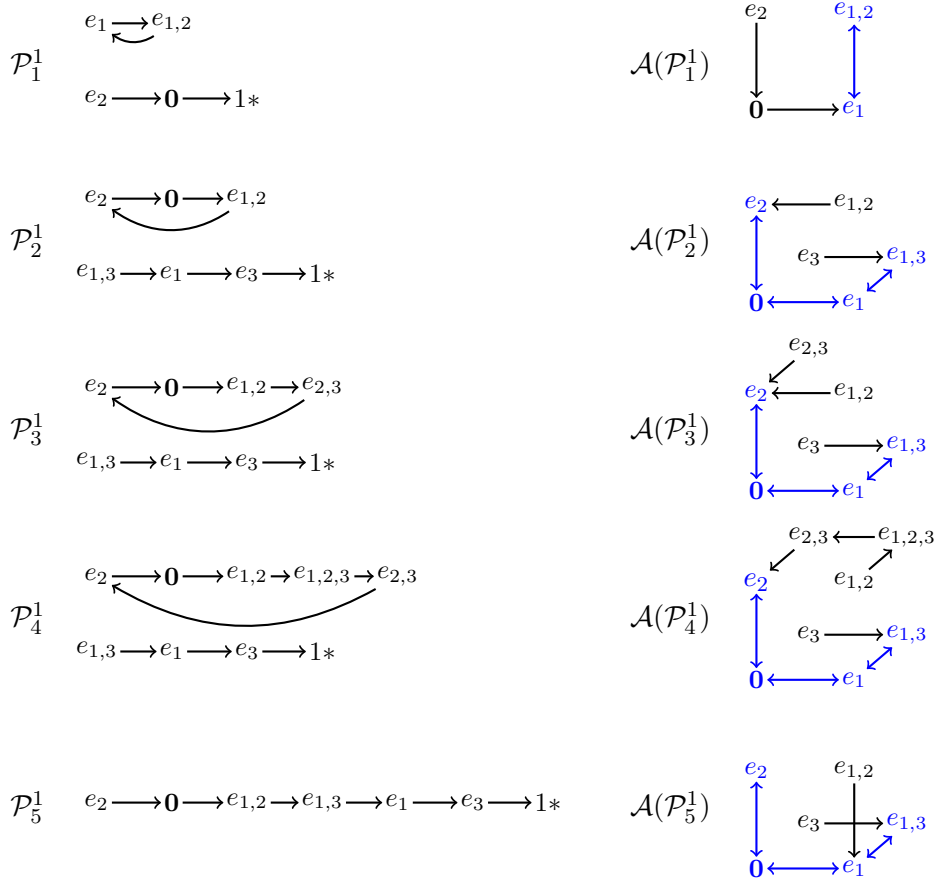

\medskip
The following is straightforward to check, and show that patterns produce small attractors. 

\begin{lemma}\label{lem:pattern}
If $f\in F(n)$ contains a pattern $\P$ then $A(\P)$ is an attractor of $\A(f)$ of size $\leq 4$ and $\A(f)$ has an almost decreasing path from any configuration in $X(\P)$ to $A(\P)$. 
\end{lemma}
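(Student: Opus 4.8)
The statement bundles three claims about $\A(f)$: that $A(\P)$ is an attractor, that $|A(\P)|\le 4$, and that every configuration of $X(\P)$ reaches $A(\P)$ by an almost decreasing path. The plan is to derive all three from the single fact, recorded just before the statement, that $f$ containing $\P$ forces $\A(\P)\subseteq\A(f)$. Two of the three are then immediate. The bound $|A(\P)|\le 4$ is read directly from Figures~\ref{fig:closed_patterns}, \ref{fig:0-open_patterns} and~\ref{fig:1-open_patterns}. Likewise, $\A(\P)$ already contains an almost decreasing path from each $x\in X(\P)$ to $A(\P)$ (noted when the patterns were introduced), and since $\A(\P)\subseteq\A(f)$ the very same path is present in $\A(f)$. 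So the real content is to promote ``$A(\P)$ is an attractor of $\A(\P)$'' to ``$A(\P)$ is an attractor of $\A(f)$''.

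The delicate point is that $\A(f)$ may carry out-arcs that $\A(\P)$ does not, so closedness of $A(\P)$ must be argued rather than inherited. First I would check that $f$ is \emph{completely} determined on $A(\P)$. In a closed pattern every displayed vertex is the tail of an arc of $\P$, and in an $a$-open pattern the only vertex at which $\P$ prescribes merely the coordinate $f_1$ is the special configuration, which by inspection never belongs to $A(\P)$. Since $\S(f)$ has exactly one out-arc per vertex, containing $\P$ pins $f(x)$ to the value drawn in $\P$ for every $x\in A(\P)$. Hence the out-neighbourhood $\A_f(x)=\{x+e_i : f_i(x)\ne x_i\}$ equals the out-neighbourhood of $x$ in $\A(\P)$, and by inspection these neighbours all lie in $A(\P)$. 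Therefore $\A(f)$ has no arc from $A(\P)$ to $\overline{A(\P)}$.

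For inclusion-minimality, note that the blue subgraph $A(\P)$ is strongly connected in each figure, and its arcs lie in $\A(\P)\subseteq\A(f)$. If some nonempty $Y\subsetneq A(\P)$ had no arc to its complement, then for $y\in Y$ and $z\in A(\P)\setminus Y$ a directed path from $y$ to $z$ inside $A(\P)$ would have to leave $Y$, exhibiting an arc from $Y$ to $\overline{Y}$ --- a contradiction. So $A(\P)$ is inclusion-minimal among nonempty sets with no outgoing arc, i.e.\ an attractor of $\A(f)$. Together with the size bound and the transferred paths, this proves the lemma.

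I expect the only obstacle to be the finite bookkeeping over the sixteen patterns, and within it the single load-bearing verification: that the out-arcs drawn from each attractor vertex in $\A(\P)$ are \emph{all} of its out-arcs in $\A(f)$. This is exactly the conjunction of the two facts above --- no attractor vertex is a special configuration, and each prescribed image keeps every single-coordinate flip inside $A(\P)$ --- and once these are confirmed case by case the remaining assertions follow mechanically.
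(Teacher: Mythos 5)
Your proposal is correct and matches the paper's intent: the paper offers no written proof (the lemma is prefaced only by ``The following is straightforward to check''), and your argument is exactly the verification being left implicit. You correctly isolate the one load-bearing point --- since $\S(f)$ has a single out-arc per vertex, the pattern pins $f$ completely on $A(\P)$ (no attractor vertex is a special configuration), so $\A(f)$ acquires no out-arcs at attractor vertices beyond those of $\A(\P)$, all of which stay inside $A(\P)$; together with the strong connectivity of the blue subgraph this yields closedness and inclusion-minimality, and the size bound and almost decreasing paths transfer from $\A(\P)\subseteq\A(f)$ as you say.
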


We say that $f$ \EM{properly contains} a pattern $\P$ if $f$ contains $\P$ and $X=X(\P)$ satisfies the condition of Lemma~\ref{decreasing_lemma}; since $X$ is always a down set, this condition is: $f(\overline{X})\neq\overline{X}$ and, for all $1<\ell<n$, $\overline{X}$ contains at least ${n-1\choose \ell-1}+1$ configurations of weight $\ell$. We now show that proper containment of patters is unavoidable (this corresponds to the first step in the sketch of the proof of Theorem~\ref{thm:A4} given above):

\begin{lemma}\label{lem:unavoidable}
Let $f\in F(n)$ without fixed point, and suppose that $f^2\neq\id$ and $n\geq 5$. There exists $g\sim f$ such that $g$ properly contains a closed or $0$-open pattern $\P$. 
\end{lemma}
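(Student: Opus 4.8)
The plan is to read off the structure of $\S(f)$---a disjoint union of limit cycles, all of length $\geq 2$ since $f$ has no fixed point, with in-trees hanging off them---and, case by case, to exhibit a relabelling $\pi$ so that $g=\pi\circ f\circ\pi^{-1}$ contains one of the eleven closed or $0$-open patterns on the prescribed low-weight configurations. Once $\pi$ is specified, checking that the pattern is contained in $\S(g)$ is routine (each arc $x\to y$ just asserts $g(x)=y$, each special arc $x\to 0*$ just asserts $g_1(x)=0$), and the weight half of proper containment is automatic for $n\geq 5$: every $X(\P)$ meets only weights $\leq 3$, with at most three configurations of weight $2$ and at most one of weight $3$, so $\binom{n}{\ell}-|X(\P)\cap\{w=\ell\}|\geq\binom{n-1}{\ell-1}+1$ for all $1<\ell<n$. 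The real content is therefore (i) guaranteeing that some pattern can always be placed, and (ii) arranging the remaining requirement $g(\bar X)\neq\bar X$.

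The organising dichotomy is whether $f$ is a permutation. For a bijection $g$, every closed pattern is disqualified: $\P_3$ and $\P_5$ force $\ZERO$ to have in-degree $2$, while for $\P_1,\P_2,\P_4,\P_6$ the set $X(\P)$ is a union of cycles, hence $g$-invariant, so $g(\bar X)=\bar X$. Thus for permutations I must use a $0$-open pattern and route the image of its special configuration (the vertex $e_1$, with $g_1(e_1)=0$) to a fresh weight-respecting configuration of $\bar X$ with first bit $0$; since $e_1\in X$ maps into $\bar X$, this simultaneously forces $g(\bar X)\neq\bar X$. Because $f^2\neq\id$ and $f$ has no fixed point, a permutation $f$ has a cycle of length $\geq 3$, which is exactly the resource needed to source the directed chain $\cdots\to e_1\to 0*$ (a chain of $k$ arcs needs $k+1$ consecutive cycle vertices plus a successor of first bit $0$). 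I then split on the multiset of cycle lengths: a single cycle of length $\geq 6$ feeds $\P^0_5$ (place six consecutive vertices as its path and send the seventh to a first-bit-$0$ vertex of $\bar X$), while short cycles of lengths $2,3,4,5$ are matched to the cycle parts of $\P^0_1,\P^0_2,\P^0_3,\P^0_4$, the chain being drawn from consecutive vertices of a second cycle. Here the parity identity ``the odd cycles have even total length, hence come in an even number'' is what guarantees a second cycle of length $\geq 3$ to source the chain whenever the first cycle is odd.

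When $f$ is not a permutation it is non-injective, so some configuration $c\notin\mathrm{img}(f)$; placing $c$ in $\bar X$ (possible since $|X(\P)|$ is bounded) gives $c\notin g(\bar X)$ and hence $g(\bar X)\neq\bar X$ for free, so the closed patterns become available. I would select a limit cycle and separate pure cycles from cycles carrying an incoming tail: a pure cycle of length $2,4,6$ goes on $\P_1,\P_2,\P_6$; a cycle bearing a tail of the appropriate length goes on $\P_3$ (length $3$) or $\P_5$ (length $5$); and a sufficiently long transient path again goes on $\P^0_5$, with $0$-open patterns mopping up the remaining tail configurations. The persistent subtlety is that each pattern realises the synchronous cycle along a walk of the cube, so every synchronous step is one or two bit-flips; since the cube is bipartite, odd-length cycles are forced to use a distance-$2$ synchronous step (such as $\ZERO\to e_{1,2}$ in $\P_3$), and it is precisely these verified adjacencies that make $\A(\P)$ a subgraph of $\A(g)$ with attractor $A(\P)$ of size $\leq 4$.

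The main obstacle I anticipate is the exhaustiveness of this analysis: one must cover every cycle-length profile and every tail configuration with only eleven rigid patterns while respecting three coupled constraints---the Hamming-adjacency requirements for $\A(\P)\subseteq\A(g)$, the in-degree constraints (a permutation admits no tail, which is what pushes it onto the open patterns), and the proper-containment condition $g(\bar X)\neq\bar X$. The hypotheses enter exactly here: $f^2\neq\id$ supplies either non-injectivity or a cycle of length $\geq 3$ to source the chain---without it the permutation case collapses, and indeed $f^2=\id$ is a genuine exception handled separately by Lemma~\ref{lem:f^2=id}---while $n\geq 5$ provides both the room in the cube to place fresh first-bit-$0$ targets and the weight margins needed for proper containment.
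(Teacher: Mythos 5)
Your high-level plan coincides with the paper's: reduce proper containment to containment plus $g(\bar X)\neq\bar X$ via the weight count (which is exactly the paper's opening paragraph), then place patterns by a case analysis on limit-cycle lengths, matching short cycles of lengths $2,3,4,5$ to the cycle parts of $\P^0_1,\dots,\P^0_4$ and long cycles to $\P^0_5$. But there are concrete gaps precisely at the exhaustiveness you flag as the main obstacle. First, the parity identity does not do the job you assign to it. When the shortest cycle has length $3$, your own counting rule says the chain $e_{1,3}\to e_3\to e_1\to 0*$ of $\P^0_2$ needs a source cycle of length at least $4$ (three placed vertices plus a successor of first bit $0$); parity only yields a second odd cycle, which may again have length $3$, and a $3$-cycle cannot source this chain: on a $3$-cycle the successor wraps around, forcing $g(e_1)=e_{1,3}$, whose first bit is $1$. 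The actual rescue is divisibility, which is how the paper's case analysis on the minimum and maximum cycle lengths $(\ell,L)$ proceeds: a permutation of $\B^n$ cannot consist solely of $3$-cycles since $3\nmid 2^n$, so $\ell=3$ forces $L\geq 4$ for permutations. The same divisibility point ($6\nmid 2^n$) repairs your permutation dichotomy, which as stated omits permutations with several cycles all of length $\geq 6$: a $6$-cycle does contain $\P^0_5$, but then $X(\P^0_5)$ is the whole cycle, $g(X)=X$, and proper containment fails, so one needs a cycle of length $\geq 7$ --- which exists because not all lengths can equal $6$; this is exactly the paper's split between $\ell=L=6$ (non-permutation, $\P_6$) and $\ell\geq 6$, $L\geq 7$ ($\P^0_5$).

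Second, in the non-permutation case your mechanism for the open patterns breaks. The implication ``$e_1\in X$ maps into $\bar X$, hence $g(\bar X)\neq\bar X$'' is valid only for bijections (where $g(\bar X)=\bar X\iff g(X)=X$); for general $g$ one needs the special configuration to be \emph{periodic}, so that $g(\bar X)\subseteq\bar X$ would trap its forward orbit in $\bar X$ while the orbit must return to $e_1\in X$. This is why the paper draws every open-pattern chain from inside a limit cycle, never from a transient path as you propose for $\P^0_5$ and for ``mopping up tail configurations''. With a transient placement you would have to fall back on your non-image argument, and then you must check that some non-image configuration can be kept in $\bar X$, even though the path start placed at $e_2$ may itself be the unique non-image of $g$. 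Relatedly, your closed-pattern menu omits $\P_4$ (two disjoint $3$-cycles), and your non-permutation menu omits open patterns whose cycle part sits on a short limit cycle while the chain comes from a longer limit cycle --- which the paper uses irrespective of bijectivity: e.g.\ a non-permutation whose limit cycles are a pure $3$-cycle and a $7$-cycle, with only short tails, is handled by $\P^0_2$ and matches nothing on your list. None of these holes looks fatal --- each is closed by the paper's explicit $(\ell,L)$ case analysis --- but as written your sketch neither covers all cycle-length profiles nor correctly certifies $g(\bar X)\neq\bar X$ outside the permutation case.
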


\begin{proof}
Let $\P$ be any pattern and $X=X(\P)$. Firstly, for $4\leq \ell<n$, since $X$ contains no configuration of weight $\ell$, $\overline{X}$ contains ${n\choose \ell}>{n-1\choose \ell-1}$ configurations of weight $\ell$. Secondly, since $X$ contains at most one configuration of weight $3$, $\overline{X}$ contains least ${n\choose 3}-1$ configurations of weight $3$; this exceeds ${n-1\choose 2}$ since $n\geq 5$. Thirdly, since $X$ contains at most three configurations of weight $2$, $\overline{X}$ contains at least ${n\choose 2}-3$ configurations of weight $2$; this exceeds ${n-1\choose 1}$ since $n\geq 5$. Consequently, $g$ properly contains $\P$ iff $g$ contains $\P$ and $g(\overline{X})\neq\overline{X}$. So we only have to prove that there exists $g\sim f$ such that $g$ contains a closed or $0$-open pattern $\P$ with $g(\overline{X})\neq\overline{X}$, where $X$ always means $X(\P)$. We proceed case by case, denoting $L^-$ and $L^+$ the minimum and maximum length of a limit cycle of $f$:
\begin{itemize}
\item
$L^-=2$. If $L^+\geq 3$ then, considering a cycle of length $2$ and a path of length $2$ in the cycle of length $L^+$, we deduce that there exists $g\sim f$ containing $\P=\P^0_1$, and $g(\overline{X})\neq\overline{X}$ since the image ($0*$) of the special configuration ($e_1$) is in $\overline{X}$. If $L^+=2$ then $f$ is not a permutation since otherwise $f^2=\id$. Furthermore, there exists $g\sim f$ containing $\P=\P_1$, and $g(\overline{X})\neq\overline{X}$ since $g$ is not a permutation \footnote{Here we can remark that, even if $\P_1$ is contained in $\P^0_1$, both are needed to ensure that $g(\overline{X})\neq\overline{X}$.}. 
\item
$L^-=3$. If $L^+\geq 4$ then, considering a cycle of length $3$ and a path of length $3$ in the cycle of length $L^+$, we deduce that there exists $g\sim f$ containing $\P=\P^0_2$, and $g(\overline{X})\neq\overline{X}$ since the image of the special configuration is in $\overline{X}$. If $L^+=3$ then $f$ is not a permutation because the number of configurations is a power of 2,
and hence not divisible by 3. We then have two cases. If $f$ has at least two limit cycles, then there exists $g\sim f$ containing $\P=\P_4$, and $g(\overline{X})\neq\overline{X}$ since $g$ is not a permutation. If $f$ has a unique limit cycle then, since $f$ is not a permutation, there exists $g\sim f$ containing $\P=\P_3$, and $g(\overline{X})\neq\overline{X}$ since all the periodic configurations of $g$ are in $X$. 
\item
$L^-=4$. If $f$ has at least two limit cycles then, considering a cycle of length $4$ and a path of length $3$ in a second cycle, we deduce that there exists $g\sim f$ containing $\P=\P^0_3$, and $g(\overline{X})\neq\overline{X}$ since the image of the special configuration is in $\overline{X}$. If $f$ has a unique limit cycle then $f$ is not a permutation. Furthermore, there exists $g\sim f$ containing $\P=\P_2$, and $g(\overline{X})\neq\overline{X}$ since $g$ is not a permutation \footnote{Here again we can remark that, even if $\P_2$ is contained in $\P^0_3$, both are needed to ensure that $g(\overline{X})\neq\overline{X}$.}.
\item
$L^-=5$. If $f$ has at least two limit cycles then, considering a cycle of length $5$ and a path of length $3$ in a second cycle, we deduce that there exists $g\sim f$ containing $\P=\P^0_4$, and $g(\overline{X})\neq\overline{X}$ since the image of the special configuration is in $\overline{X}$. If $f$ has a unique limit cycle then $f$ is not a permutation. We deduce that there exists $g\sim f$ containing $\P_5$, and $g(\overline{X})\neq\overline{X}$ since all the periodic configurations of $g$ are in $X$. 
\item
$L^-=6$ and $L^+=6$. Then $f$ is not a permutation because the number of configurations is a power of 2,
and hence not divisible by 6. Furthermore, there exists $g\sim f$ containing $\P=\P_6$, and $g(\overline{X})\neq\overline{X}$ since $g$ is not a permutation.
\item
$L^-\geq 6$ and $L^+\geq 7$. Considering a path of length $6$ in a cycle of length $L^+$, we deduce that there exists $g\sim f$ containing $\P=\P^0_5$, and $g(\overline{X})\neq\overline{X}$ since the image of the special configuration is in~$\overline{X}$. 
\end{itemize}
\end{proof}

The proof Theorem~\ref{thm:A4} is now an easy consequence Lemma \ref{lem:unavoidable} and the following lemma, which is proved assuming Lemma~\ref{decreasing_lemma}.

\begin{lemma}\label{lem:pattern_converging}
Let $g\in F(n)$ without fixed point. Suppose that $g$ properly contains a closed or $0$-pattern $\P$. There exists $h\sim g$ such that $\A(h)$ has an attractor $A$ of size $\leq 4$ and an almost decreasing path from any configuration to $A$.
\end{lemma}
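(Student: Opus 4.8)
The plan is to run the scheme announced before the statement: use Lemma~\ref{decreasing_lemma} to force convergence by decreasing paths toward $X=X(\P)$, and use Lemma~\ref{lem:pattern} to supply a small attractor inside $X$. First I would set $X=X(\P)$ and $A=A(\P)$, observe that $X$ is a down set and that, since $g$ properly contains $\P$, the set $X$ satisfies the hypotheses of Lemma~\ref{decreasing_lemma}, and apply that lemma to obtain a robustly $X$-converging $h\sim_X g$. As $g$ has no fixed point and $h\sim g$, neither does $h$, so $\FP(h)=\emptyset$ and $X$-convergence says that $\A(h)$ has a decreasing path from every configuration to $X$. The fact I will use repeatedly is that $h\sim_X g$ forces $h(x)=g(x)$ for every $x\in X$ with $g(x)\in X$.

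In the ``generic'' situation $h$ itself contains $\P$. Indeed, if $\P$ is a closed pattern then every arc of $\P$ joins two configurations of $X$ and has its head in $X$, so all these arcs survive and $h$ contains $\P$; Lemma~\ref{lem:pattern} then gives that $A$ is an attractor of $\A(h)$ of size $\le 4$ with an almost decreasing path from any configuration of $X$ to $A$. Prepending the decreasing path from an arbitrary configuration to $X$, which introduces no increasing arc, produces an almost decreasing path from any configuration to $A$; this settles all closed patterns.

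For a $0$-open pattern the unique arc of $\P$ whose head leaves $X$ is the special arc $e_1\to 0*$, i.e. $g_1(e_1)=0$, which produces the asynchronous arc $e_1\to\ZERO\in A$. Now $h\sim_X g$ only yields $h(e_1)\notin X$, so I would split on $h_1(e_1)$. If $h_1(e_1)=0$ the special arc is realized, $h$ contains $\P$, and I conclude as for closed patterns. If $h_1(e_1)=1$ the arc $e_1\to\ZERO$ is lost; but $e_1\notin A$ and no configuration of the pattern points to $e_1$, so $A$ is still an attractor of $\A(h)$ and every configuration of $X$ except $e_1$ still reaches $A$ along an almost decreasing path --- only the access from $e_1$ must be repaired.

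The repair is a permutation $\pi$ of $X$ acting as the identity on $\bar X$; then $h'=\pi\circ h\circ\pi^{-1}$ is again $X$-converging by robustness of $h$, and it is enough to choose $\pi$ so that $h'$ contains a realized pattern $\P'$ (closed or $1$-open), whereupon Lemma~\ref{lem:pattern} and the same concatenation finish the proof. The guiding idea is to relabel $X$ so that the forced increasing arc leaving the defective configuration becomes an increasing coordinate-$1$ arc landing in the new attractor, the single increasing arc being absorbed by ``almost decreasing''. For $\P^0_1$, where $X=\{\ZERO,e_1,e_2,e_{1,2}\}$ is stable under flipping coordinate $1$, taking $\pi=(\ZERO\leftrightarrow e_1)\circ(e_2\leftrightarrow e_{1,2})$ turns $h$ into a function containing the realized pattern $\P^1_1$, whose attractor $\{e_1,e_{1,2}\}$ has size $2$; the patterns $\P^0_2,\dots,\P^0_5$ are handled by analogous relabelings internal to their (no longer coordinate-$1$-symmetric) vertex sets. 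I expect this repair to be the crux: for each $0$-open pattern one must write down an explicit $\pi$ and verify that $\A(h')$ carries the announced attractor with almost decreasing access from all of $X$, uniformly in the only partially known value $h(e_1)$ (we know just that its first coordinate is $1$). The degenerate case $f^2=\id$, where no suitable pattern exists, does not occur here: it is excluded when Lemma~\ref{lem:unavoidable} is applied and is dealt with separately in Lemma~\ref{lem:f^2=id}.
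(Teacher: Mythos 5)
Your skeleton coincides with the paper's proof: take $X=X(\P)$, invoke Lemma~\ref{decreasing_lemma} to get a robustly $X$-converging $h\sim_X g$, note that $h\sim_X g$ preserves every arc of $\P$ whose head stays in $X$ (so closed patterns, and the case $h_1(e_1)=0$, are immediate via Lemma~\ref{lem:pattern} plus concatenation of a decreasing path into $X$ with the almost decreasing path inside it), and repair the case $h_1(e_1)=1$ by conjugating with a permutation supported on $X$, robustness guaranteeing the conjugate is still $X$-converging. Your treatment of $\P^0_1$ is exactly the paper's correction step (Figure~\ref{fig:correction}), including the key trick that $h(e_1)\notin X$ and $\pi=\id$ on $\bar X$ force $h'(\pi(e_1))=h(e_1)$, so $h_1(e_1)=1$ realizes the special arc of the target $1$-open pattern no matter which configuration $h(e_1)$ actually is. The gap is that you leave $\P^0_2,\dots,\P^0_5$ as ``analogous relabelings \dots\ one must write down an explicit $\pi$ and verify'' --- this is precisely the step that constitutes the lemma's content, and the paper closes it uniformly rather than case by case: for each $k$ the non-special parts of $\P^0_k$ and $\P^1_k$ are isomorphic digraphs, and the special configuration is intrinsically characterized as the unique vertex of out-degree zero in the induced pattern, so a single abstract isomorphism extends to the required permutation $\pi$ fixing $\bar X$ and automatically sends the special configuration of $\P$ to that of $\P'$; your trick for $\P^0_1$ then finishes all five cases at once, dissolving the ``uniformity in the partially known value $h(e_1)$'' worry you flag.

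One concrete obstruction that your plan (a permutation of $X$ making $h'$ literally contain $\P^1_k$) must confront, and which your remark about vertex sets being ``no longer coordinate-$1$-symmetric'' senses but does not resolve: for $k\in\{2,5\}$ one has $X(\P^1_k)\neq X(\P^0_k)$ (e.g.\ $e_{2,3}\in X(\P^0_2)$ whereas $e_{1,2}\in X(\P^1_2)$), so no permutation acting as the identity on $\bar X$ can make $h'$ contain $\P^1_k$ with its literal labels --- the arcs of $\P^1_2$ through $e_{1,2}\in\bar X$ are beyond your control. The repair must target instead the isomorphic copy of $\P^1_k$ carried on $X$ itself; since that copy is the image of $\P^1_k$ under a permutation of coordinates (e.g.\ $1\leftrightarrow 3$ for $k=2$), which preserves weight and hence the notions of attractor, decreasing arc and almost decreasing path, the conclusion of Lemma~\ref{lem:pattern} transfers to it verbatim. (The paper's own wording ``$\P'[X]$'' glosses over the same point for $k\in\{2,5\}$, but this is how it is meant to be read.) So: right approach, right mechanism, but the crux --- the uniform repair argument covering all $0$-open patterns, including the relabeling subtlety --- is missing from your write-up.
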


\begin{proof}[{\bf Proof of Lemma~\ref{lem:pattern_converging} (assuming Lemma~\ref{decreasing_lemma})}]
Let $X=X(\P)$. Since $g$ properly contains $\P$, by Lemma~\ref{decreasing_lemma}, there exists a robustly $X$-converging function $h\sim_X g$. If $h$ contains $\P$ then, by Lemma \ref{lem:pattern}, $A(\P)$ is an attractor of $\A(h)$ of size $\leq 4$ and $\A(h)$ has an almost decreasing path from any configuration in $X$ to $A(\P)$, so we are done. 

\medskip
So suppose that $h$ does not contain $\P$. Since $g$ contains $\P$ and $h\sim_X g$, we deduce that $\P$ is a $0$-open pattern, say $\P=\P^0_k$, and that $\S(h)$ contains all the arcs of $\P$ except its special arc $x\to 0*$, that is: $\S(h)$ contains the induced subgraph $\P[X]$ but $h_1(x)=1\neq g_1(x)$; since $x\in X$ and $h(x)\neq g(x)$, we have $h(x)\not\in X$. Let $\P'=\P^1_k$. Since $\P[X]$ and $\P'[X]$ are isomorphic, there exists a permutation $\pi$, acting as the identity on $\overline{X}$, such that $h'=\pi\circ h\circ \pi^{-1}$ contains $\P'[X]$.  Since $x$ is the unique vertex of out-degree zero in $\P[X]$, $\pi(x)$ is the unique vertex of out-degree zero in $\P'[X]$, that is, $\pi(x)$ is the special configuration of $\P'$. Since $x\in X$ and $h(x)\not\in X$,  we have $h'(\pi(x))=h(x)$, thus $h'_1(\pi(x))=h_1(x)=1$ and thus $\S(h')$ contains the special arc $\pi(x)\to 1*$ of $\P'$. Hence, $h'$ contains $\P'$ (see Figure \ref{fig:correction} for an illustration). Since $h$ is robustly $X$-converging and $\pi$ acts as the identity on $\overline{X}$, the function $h'$ is $X$-converging: $\A(h')$ has a decreasing path from any configuration to $X$. By Lemma \ref{lem:pattern}, $A(\P')$ is an attractor of $\A(h')$ of size $\leq 4$ and $\A(h')$ has an almost decreasing path from any configuration in $X$ to $A(\P')$, so we are done. 
\end{proof}

\begin{figure}[h]
\[
	\begin{array}{ccccc}
	\begin{array}{c}
    \begin{tikzpicture}
        \node[inner sep=1] (1) at (1,0){\small $e_1$};
        \node[inner sep=1] (12) at (0,0){\small $e_{1,2}$};
        \node[inner sep=1] (2) at (1,1){\small $e_2$};
        \node[inner sep=1] (0) at (0,1){\small $\ZERO$};
        \node[inner sep=1] (0x) at (2,0){\small $0*$};
        \path[thick,->]
        (0) edge (2)
        (2) edge[bend left=35] (0)
        (12) edge (1)
        (1) edge (0x)
        ;
    \end{tikzpicture}
    \end{array}
    &\begin{array}{c}\longrightarrow\end{array}&
    \begin{array}{c}
    \begin{tikzpicture}
        \node[inner sep=1] (1) at (1,0){\small $e_1$};
        \node[inner sep=1] (12) at (0,0){\small $e_{1,2}$};
        \node[inner sep=1] (2) at (1,1){\small $e_2$};
        \node[inner sep=1] (0) at (0,1){\small $\ZERO$};
        \node[inner sep=1] (0x) at (2,0){\small $1*$};
        \path[thick,->]
        (0) edge (2)
        (2) edge[bend left=35] (0)
        (12) edge (1)
        (1) edge (0x)
        ;
    \end{tikzpicture}
    \end{array}
    &\begin{array}{c}\stackrel{\pi}{\longrightarrow}\end{array}&
	\begin{array}{c}
	\begin{tikzpicture}
 	  
        \node[inner sep=1] (1) at (0,1){\small $e_1$};
        \node[inner sep=1] (12) at (1,1){\small $e_{1,2}$};
        \node[inner sep=1] (2) at (0,0){\small $e_2$};
        \node[inner sep=1] (0) at (1,0){\small $\ZERO$};
        \node[inner sep=1] (1x) at (2,0){\small $1*$};
        \path[thick,->]
        (1) edge (12)
        (12) edge[bend left=35] (1)
        (2) edge (0)
        (0) edge (1x)
        ;
    \end{tikzpicture}
    \end{array}
    \\[7mm]
    g && h && h'
    \end{array}
\]
{\caption{\label{fig:correction}Correction step in the proof of Lemma~\ref{lem:pattern_converging} for $\P=\P^0_1$.
}}
\end{figure}

It remains to treat the case $f^2=\id$, which is easy. 

\begin{lemma}\label{lem:f^2=id}
Let $f\in F(n)$ with no fixed point and $f^2= \id$. There exists $h\sim f$ such that $\A(h)$ has an attractor $A$ of size $\leq 4$ and an almost decreasing path from any configuration to $A$.
\end{lemma}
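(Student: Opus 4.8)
The plan is to exploit the fact that, since $f$ is a fixed-point-free involution, $\S(f)$ is a disjoint union of $2^{n-1}$ directed $2$-cycles; hence $h\sim f$ holds if and only if $h$ is itself a fixed-point-free involution, so I have complete freedom in choosing $h$ and will build one by hand. For an involution $h$, each pair $\{x,h(x)\}$ has a difference vector $v=h(x)+x\ne\ZERO$, and in $\A(h)$ the out-neighbours of $x$ are exactly the $x+e_i$ with $i\in\mathrm{supp}(v)$. The elementary observation I would record first is that a configuration $x$ has \emph{no} decreasing arc in $\A(h)$ if and only if $h(x)>x$, i.e. iff $x$ is the smaller element of a $\leq$-comparable pair. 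Thus, to force asynchronous descent, I want an involution whose $2$-cycles are as often as possible $\leq$-incomparable pairs, together with one designated comparable pair $\{\ZERO,e_1\}$ serving as the target $A=\{\ZERO,e_1\}$; being a distance-$1$ pair this is an isolated $2$-cycle of $\A(h)$, hence the attractor $A(\P_1)$ of size $2\le4$ (cf. Lemma~\ref{lem:pattern}, noting $h$ contains $\P_1$).

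The natural candidate is the complementation involution $x\mapsto\bar x$, whose only comparable pair is $\{\ZERO,\ONE\}$ (all $\{x,\bar x\}$ with $x\ne\ZERO,\ONE$ being incomparable). Since $\ONE$ is $\leq$-above every configuration, no involution can avoid a comparable pair through $\ONE$, so exactly one unavoidable ``problem vertex'' must remain. I would therefore modify complementation on the complement-closed six-element set $\{\ZERO,\ONE,e_1,\bar e_1,e_2,\bar e_2\}$ (distinct for $n\ge3$), re-pairing it as $\{\ZERO,e_1\}$, $\{\ONE,e_2\}$ and $\{\bar e_1,\bar e_2\}$, and keeping $h(x)=\bar x$ on the complement-closed remainder so that $h$ stays a well-defined fixed-point-free involution. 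The crucial design choices are that $\ONE$ is paired with $e_2$ at distance $d(\ONE,e_2)=n-1\ge2$ (so $\{\ONE,e_2\}$ is \emph{not} a distance-$1$ pair, which would otherwise create a second isolated $2$-cycle attractor), and that the leftover pair $\{\bar e_1,\bar e_2\}$ is incomparable. A short verification then gives that the only $x$ with $h(x)>x$ — equivalently the only vertices of $\A(h)$ lacking a decreasing arc — are $\ZERO$ and $e_2$.

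With this in hand the conclusion is quick. Every vertex other than $\ZERO$ and $e_2$ has a decreasing arc, and since decreasing arcs strictly lower the weight, any maximal decreasing path is finite and must terminate at a sink, i.e. at $\ZERO$ or at $e_2$. A path reaching $\ZERO\in A$ is already a decreasing path to $A$. For the single problem vertex $e_2$ I would exhibit the almost decreasing path $e_2\to e_{1,2}\to e_1\to\ZERO$: its only non-decreasing arc is the first step $e_2\to e_{1,2}$, the intermediate vertex $e_{1,2}$ lies in the complemented remainder and so admits the decreasing arc removing its bit $2$, and $e_1\to\ZERO$ is the forced arc of the attractor pair. Concatenating a maximal decreasing path with this tail produces, from any configuration, a path to $A$ using at most one increasing arc in total, which is the desired almost decreasing path. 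This settles $n\ge3$ (covering the range $n\ge5$ relevant to Theorem~\ref{thm:A4}); the cases $n\le2$ are immediate by inspection.

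The step I expect to be the main obstacle is the bookkeeping around the forced comparable pair at $\ONE$: one must pair $\ONE$ at distance at least $2$ to avoid spawning a second attractor, keep the edited set closed under complementation so that $h$ remains a fixed-point-free involution, and simultaneously arrange that the resulting unique problem vertex $e_2$ still reaches $A$ by an almost decreasing path. Checking that these constraints are jointly satisfiable — and in particular that no further comparable pair, hence no further sink, is introduced — is the technical heart of the argument; everything else reduces to the weight-monotonicity of decreasing arcs.
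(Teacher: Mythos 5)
Your proposal is correct and is essentially the paper's own proof: you construct the same involution $h$, re-pairing $\{\ZERO,e_1\}$, $\{\ONE,e_2\}$, $\{\bar e_1,\bar e_2\}$ and setting $h(x)=\bar x$ elsewhere, identify the same two sinks $\ZERO$ and $e_2$, and conclude by the same weight-decreasing argument. The only difference is cosmetic: your correction path $e_2\to e_{1,2}\to e_1\to\ZERO$ replaces the paper's $e_2\to e_{2,3}\to e_3\to\ZERO$, and both are valid for $n\geq 3$.
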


\begin{proof}
Since $f$ has no fixed point and $f^2 = \id$, $\S(f)$ consists of exactly $2^{n-2}$ cycles of length~$2$. For $n=2$, we consider the negation $h$, that is $h(x)=\overline{x}$ for all $x\in\B^2$; then $h\sim f$ and $\A(h)$ is strongly connected so we are done. So suppose that $n\geq 3$.  Let $X=\B^n\setminus \{\ZERO, e_1, e_2, \overline{e_1}, \overline{e_2}, \ONE\}$ and let $h \in F(n)$ defined by:
\[
\begin{array}{lll}
h(\ZERO) &=& e_1\\
h(e_1) &=&\ZERO
\end{array}
\quad
\begin{array}{lll}
h(\ONE) &=& e_2\\
h(e_2) &=& \ONE
\end{array}
\quad
\begin{array}{lll}
h(\overline{e_1}) &=& \overline{e_2}\\
h(\overline{e_2}) &=& \overline{e_1}
\end{array}
\quad
h(x) = \overline{x}\textrm{ for all }x\in X.
\]
One easily check that $\S(h)$ consists of $2^{n-2}$ cycles of length $2$, so $h \sim f$, and that $\{\ZERO, e_1\}$ is an attractor of $\A(h)$. If $x\not\in \{\ZERO, e_2\}$, then there exists $i\in [n]$ with $h_i(x)<x_i$ and thus $\A(h)$ has the decreasing arc $x\to x+e_i$. We deduce $\A(h)$ has a decreasing path from any configuration to $\{\ZERO,e_2\}$. Since $\A(h)$ has the almost decreasing path $e_2\to e_{2,3}\to e_3\to\ZERO$, $\A(h)$ has an almost decreasing path from any configuration to $\ZERO$. 
\end{proof}

\subsection{Decreasing Lemma}\label{sec:decreasing_lemma}

In this subsection, we prove Lemma \ref{decreasing_lemma}, used several times above. Let us first give a rough description of the proof. Let $f\in F(n)$ and let $X\subseteq\B^n$ be a down set with $f(\overline{X})\neq\overline{X}$. First we consider a total order $\preceq$ on $\overline{X}$, called a good order, which is a kind of topological sort of the subgraph of $\S(f)$ induced by $\overline{X}$. Roughly, each cycle corresponds to an interval, and $\preceq$ is a topological sort on the acyclic part; see Figure \ref{fig:decreasing_lemma}(a). The important point is that the cycles are not intertwined, and that the maximal element is in the acyclic part (the acyclic part is non-empty because $f(\overline{X})\neq \overline{X}$). We then consider a permutation $\pi$ of $\B^n$, acting as the identity on $X$, such that $x\preceq y$ implies $w(\pi(x))\leq w(\pi(y))$; such a permutation is called monotone. Setting $h=\pi\circ f\circ\pi^{-1}$, and taking $x\in \overline{X}$ with $h(x)\prec x$, we have $w(h(x))\leq w(x)$ by monotonicity, and thus $\A(h)$ has a decreasing arc starting from $x$. Also, if $f(x)\in X$ then there is a decreasing transition starting from $x$ since $X$ is a down set. Actually, if $x$ is not a fixed point and there is no decreasing arc starting from $x$, that is $x<f(x)$, then $x\prec f(x)$ and thus $x$ is in a cycle; see Figure \ref{fig:decreasing_lemma}(b). We can however fix this problem by swapping $h(x)$ and a configuration $y\in \overline{X}$ with the same weight; for that  we need enough elements in $\overline{X}$, hence the counting condition in the statement of Lemma \ref{decreasing_lemma}. This swap keeps the monotonicity and adds a decreasing transition starting from $x$; see Figure \ref{fig:decreasing_lemma}(b). Since cycles are not intertwined, with several such corrections, we eventually obtain a permutation $\pi$, acting as the identity on $X$, such that $h=\pi\circ f\circ\pi^{-1}$ is $X$-converging; and because $X$ is a down set and $\pi$ acts as the identity on $X$, $h$ is robustly $X$-converging and $h\sim_X f$.

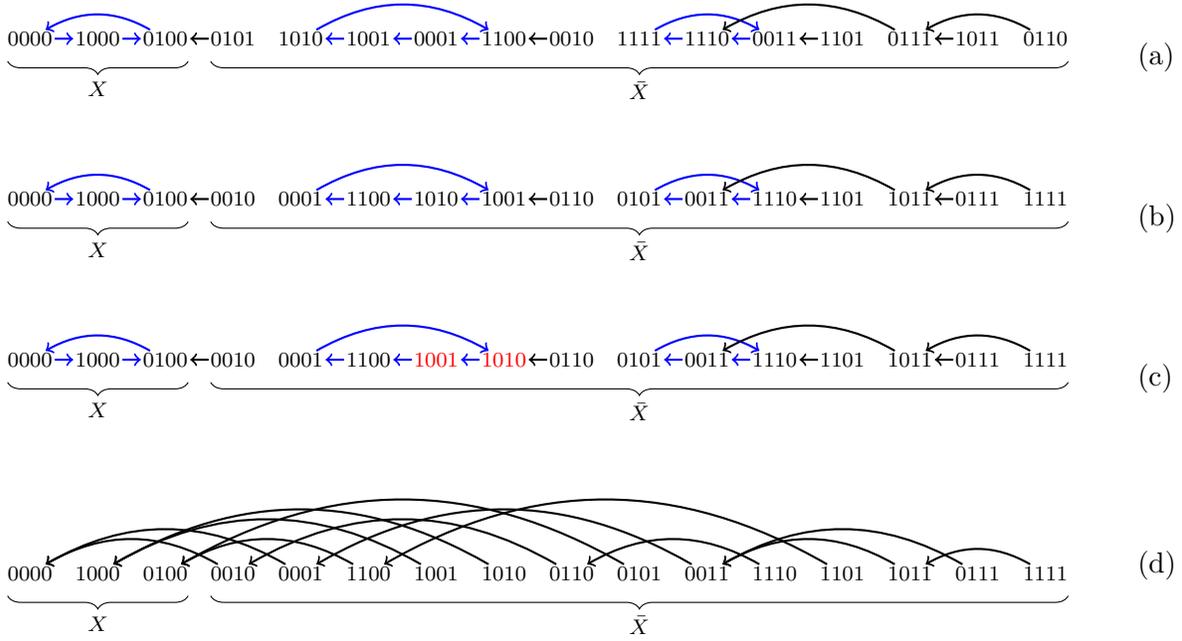
\begin{figure}
\[
\begin{array}{cl}
	\begin{array}{c}
    \begin{tikzpicture}
		\def\x{0.9}
	    \node[inner sep=0.4,outer sep=0.4] (0) at ({0*\x},0){\scriptsize $0000$};
        \node[inner sep=0.4,outer sep=0.4] (1) at ({1*\x},0){\scriptsize $1000$};
        \node[inner sep=0.4,outer sep=0.4] (2) at ({2*\x},0){\scriptsize $0100$};
        \node[inner sep=0.4,outer sep=0.4] (3) at ({3*\x},0){\scriptsize $0101$};
        \node[inner sep=0.4,outer sep=0.4] (4) at ({4*\x},0){\scriptsize $1010$};
        \node[inner sep=0.4,outer sep=0.4] (12) at ({5*\x},0){\scriptsize $1001$};
        \node[inner sep=0.4,outer sep=0.4] (13) at ({6*\x},0){\scriptsize $0001$};
        \node[inner sep=0.4,outer sep=0.4] (14) at ({7*\x},0){\scriptsize $1100$};
        \node[inner sep=0.4,outer sep=0.4] (23) at ({8*\x},0){\scriptsize $0010$};
        \node[inner sep=0.4,outer sep=0.4] (24) at ({9*\x},0){\scriptsize $1111$};
        \node[inner sep=0.4,outer sep=0.4] (34) at ({10*\x},0){\scriptsize $1110$};
        \node[inner sep=0.4,outer sep=0.4] (123) at ({11*\x},0){\scriptsize $0011$};
        \node[inner sep=0.4,outer sep=0.4] (124) at ({12*\x},0){\scriptsize $1101$};
		\node[inner sep=0.4,outer sep=0.4] (134) at ({13*\x},0){\scriptsize $0111$};
        \node[inner sep=0.4,outer sep=0.4] (234) at ({14*\x},0){\scriptsize $1011$};
        \node[inner sep=0.4,outer sep=0.4] (1234) at ({15*\x},0){\scriptsize $0110$};

		\draw [decorate,decoration={brace,amplitude=5pt,mirror}]
  		({0*\x-0.3},-0.3) -- ({2*\x+0.3},-0.3) node[midway,below,yshift=-4]{\scriptsize $X$};
        \draw [decorate,decoration={brace,amplitude=5pt,mirror}]
  		({3*\x-0.3},-0.3) -- ({15*\x+0.3},-0.3) node[midway,below,yshift=-4]{\scriptsize $\overline{X}$};
		        
        \path[thick,->,draw,black]
        (2) edge[blue,bend right=30] (0)
        (0) edge[blue] (1)
        (1) edge[blue] (2)
        (3) edge (2)
        (4) edge[blue,bend left=30] (14)
        (14) edge[blue] (13)
        (13) edge[blue] (12)
        (12) edge[blue] (4)
        (23) edge (14) 
        (24) edge[blue,bend left=30] (123)
        (123) edge[blue] (34)
        (34) edge[blue] (24)
        (124) edge (123)
        (134) edge[bend right=30] (34)
        (234) edge (134)
        (1234) edge[bend right=30] (134)
		;
    \end{tikzpicture}
    \end{array}
&\quad\textrm{(a)}
\\~\\
	\begin{array}{c}
    \begin{tikzpicture}
		\def\x{0.9}
	    \node[inner sep=0.4,outer sep=0.4] (0) at ({0*\x},0){\scriptsize $0000$};
        \node[inner sep=0.4,outer sep=0.4] (1) at ({1*\x},0){\scriptsize $1000$};
        \node[inner sep=0.4,outer sep=0.4] (2) at ({2*\x},0){\scriptsize $0100$};
        \node[inner sep=0.4,outer sep=0.4] (3) at ({3*\x},0){\scriptsize $0010$};
        \node[inner sep=0.4,outer sep=0.4] (4) at ({4*\x},0){\scriptsize $0001$};
        \node[inner sep=0.4,outer sep=0.4] (12) at ({5*\x},0){\scriptsize $1100$};
        \node[inner sep=0.4,outer sep=0.4] (13) at ({6*\x},0){\scriptsize $1010$};
        \node[inner sep=0.4,outer sep=0.4] (14) at ({7*\x},0){\scriptsize $1001$};
        \node[inner sep=0.4,outer sep=0.4] (23) at ({8*\x},0){\scriptsize $0110$};
        \node[inner sep=0.4,outer sep=0.4] (24) at ({9*\x},0){\scriptsize $0101$};
        \node[inner sep=0.4,outer sep=0.4] (34) at ({10*\x},0){\scriptsize $0011$};
        \node[inner sep=0.4,outer sep=0.4] (123) at ({11*\x},0){\scriptsize $1110$};
        \node[inner sep=0.4,outer sep=0.4] (124) at ({12*\x},0){\scriptsize $1101$};
		\node[inner sep=0.4,outer sep=0.4] (134) at ({13*\x},0){\scriptsize $1011$};
        \node[inner sep=0.4,outer sep=0.4] (234) at ({14*\x},0){\scriptsize $0111$};
        \node[inner sep=0.4,outer sep=0.4] (1234) at ({15*\x},0){\scriptsize $1111$};  
        
		\draw [decorate,decoration={brace,amplitude=5pt,mirror}]
  		({0*\x-0.3},-0.3) -- ({2*\x+0.3},-0.3) node[midway,below,yshift=-4]{\scriptsize $X$};
        \draw [decorate,decoration={brace,amplitude=5pt,mirror}]
  		({3*\x-0.3},-0.3) -- ({15*\x+0.3},-0.3) node[midway,below,yshift=-4]{\scriptsize $\overline{X}$};
        
        \path[thick,->,draw,black]
        (2) edge[blue,bend right=30] (0)
        (0) edge[blue] (1)
        (1) edge[blue] (2)
        (3) edge (2)
        (4) edge[blue,bend left=30] (14)
        (14) edge[blue] (13)
        (13) edge[blue] (12)
        (12) edge[blue] (4)
        (23) edge (14) 
        (24) edge[blue,bend left=30] (123)
        (123) edge[blue] (34)
        (34) edge[blue] (24)
        (124) edge (123)
        (134) edge[bend right=30] (34)
        (234) edge (134)
        (1234) edge[bend right=30] (134)
		;
    \end{tikzpicture}
    \end{array}
&\quad\textrm{(b)}
\\~\\
	\begin{array}{c}
    \begin{tikzpicture}
		\def\x{0.9}
	    \node[inner sep=0.4,outer sep=0.4] (0) at ({0*\x},0){\scriptsize $0000$};
        \node[inner sep=0.4,outer sep=0.4] (1) at ({1*\x},0){\scriptsize $1000$};
        \node[inner sep=0.4,outer sep=0.4] (2) at ({2*\x},0){\scriptsize $0100$};
        \node[inner sep=0.4,outer sep=0.4] (3) at ({3*\x},0){\scriptsize $0010$};
        \node[inner sep=0.4,outer sep=0.4] (4) at ({4*\x},0){\scriptsize $0001$};
        \node[inner sep=0.4,outer sep=0.4] (12) at ({5*\x},0){\scriptsize $1100$};
        \node[inner sep=0.4,outer sep=0.4,red] (13) at ({6*\x},0){\scriptsize $1001$};
        \node[inner sep=0.4,outer sep=0.4,red] (14) at ({7*\x},0){\scriptsize $1010$};
        \node[inner sep=0.4,outer sep=0.4] (23) at ({8*\x},0){\scriptsize $0110$};
        \node[inner sep=0.4,outer sep=0.4] (24) at ({9*\x},0){\scriptsize $0101$};
        \node[inner sep=0.4,outer sep=0.4] (34) at ({10*\x},0){\scriptsize $0011$};
        \node[inner sep=0.4,outer sep=0.4] (123) at ({11*\x},0){\scriptsize $1110$};
        \node[inner sep=0.4,outer sep=0.4] (124) at ({12*\x},0){\scriptsize $1101$};
		\node[inner sep=0.4,outer sep=0.4] (134) at ({13*\x},0){\scriptsize $1011$};
        \node[inner sep=0.4,outer sep=0.4] (234) at ({14*\x},0){\scriptsize $0111$};
        \node[inner sep=0.4,outer sep=0.4] (1234) at ({15*\x},0){\scriptsize $1111$};  
        
		\draw [decorate,decoration={brace,amplitude=5pt,mirror}]
  		({0*\x-0.3},-0.3) -- ({2*\x+0.3},-0.3) node[midway,below,yshift=-4]{\scriptsize $X$};
        \draw [decorate,decoration={brace,amplitude=5pt,mirror}]
  		({3*\x-0.3},-0.3) -- ({15*\x+0.3},-0.3) node[midway,below,yshift=-4]{\scriptsize $\overline{X}$};
        
        \path[thick,->,draw,black]
        (2) edge[blue,bend right=30] (0)
        (0) edge[blue] (1)
        (1) edge[blue] (2)
        (3) edge (2)
        (4) edge[blue,bend left=30] (14)
        (14) edge[blue] (13)
        (13) edge[blue] (12)
        (12) edge[blue] (4)
        (23) edge (14) 
        (24) edge[blue,bend left=30] (123)
        (123) edge[blue] (34)
        (34) edge[blue] (24)
        (124) edge (123)
        (134) edge[bend right=30] (34)
        (234) edge (134)
        (1234) edge[bend right=30] (134)
		;
    \end{tikzpicture}
    \end{array}
&\quad\textrm{(c)}
\\~\\
	\begin{array}{c}
    \begin{tikzpicture}
		\def\x{0.9}
	    \node[inner sep=0.4,outer sep=0.4] (0) at ({0*\x},0){\scriptsize $0000$};
        \node[inner sep=0.4,outer sep=0.4] (1) at ({1*\x},0){\scriptsize $1000$};
        \node[inner sep=0.4,outer sep=0.4] (2) at ({2*\x},0){\scriptsize $0100$};
        \node[inner sep=0.4,outer sep=0.4] (3) at ({3*\x},0){\scriptsize $0010$};
        \node[inner sep=0.4,outer sep=0.4] (4) at ({4*\x},0){\scriptsize $0001$};
        \node[inner sep=0.4,outer sep=0.4] (12) at ({5*\x},0){\scriptsize $1100$};
        \node[inner sep=0.4,outer sep=0.4] (13) at ({6*\x},0){\scriptsize $1001$};
        \node[inner sep=0.4,outer sep=0.4] (14) at ({7*\x},0){\scriptsize $1010$};
        \node[inner sep=0.4,outer sep=0.4] (23) at ({8*\x},0){\scriptsize $0110$};
        \node[inner sep=0.4,outer sep=0.4] (24) at ({9*\x},0){\scriptsize $0101$};
        \node[inner sep=0.4,outer sep=0.4] (34) at ({10*\x},0){\scriptsize $0011$};
        \node[inner sep=0.4,outer sep=0.4] (123) at ({11*\x},0){\scriptsize $1110$};
        \node[inner sep=0.4,outer sep=0.4] (124) at ({12*\x},0){\scriptsize $1101$};
		\node[inner sep=0.4,outer sep=0.4] (134) at ({13*\x},0){\scriptsize $1011$};
        \node[inner sep=0.4,outer sep=0.4] (234) at ({14*\x},0){\scriptsize $0111$};
        \node[inner sep=0.4,outer sep=0.4] (1234) at ({15*\x},0){\scriptsize $1111$};  
        
		\draw [decorate,decoration={brace,amplitude=5pt,mirror}]
  		({0*\x-0.3},-0.3) -- ({2*\x+0.3},-0.3) node[midway,below,yshift=-4]{\scriptsize $X$};
        \draw [decorate,decoration={brace,amplitude=5pt,mirror}]
  		({3*\x-0.3},-0.3) -- ({15*\x+0.3},-0.3) node[midway,below,yshift=-4]{\scriptsize $\overline{X}$};
        
        \path[thick,->,draw,black]
        (3) edge[bend right=30] (0)
        (4) edge[bend right=30] (0)
        (12) edge[bend right=30] (2)
		(14) edge[bend right=30] (1)
		(13) edge[bend right=30] (1)
		(23) edge[bend right=30] (3)
		(24) edge[bend right=30] (2)
		(34) edge[bend right=30] (4)
		(123) edge[bend right=30] (23)
		(124) edge[bend right=30] (12)
		(134) edge[bend right=30] (34)
		(234) edge[bend right=30] (34)
		(1234) edge[bend right=30] (134)
		;
    \end{tikzpicture}
    \end{array}
&\quad\textrm{(d)}
\end{array}
\]
{\caption{\label{fig:decreasing_lemma}
(a) Down set $X\subseteq \B^4$ and function $f\in F(4)$ represented with a good order on $\overline{X}$, from left to right. (b) Function $h\sim f$ obtained with a monotone permutation: configurations in $\overline{X}$ have non-decreasing weight. $h$ is not $X$-converging: there is no decreasing transition starting from $0001$ since $0001<h(0001)=1001$. (c) Function $h'\sim h$ obtained by swapping $1001$ and $1010$. $h'$ is $X$-converging since there is no configurations $x\in \overline{X}$ with $x<h'(x)$. (d) Subgraph of $\A(h')$, showing that $h'$ is $X$-converging. 
}}
\end{figure}

\medskip
We now proceed with the details. Let $f\in F(n)$ and $Y \subseteq \B^n$. A \EM{good order} of $f$ on $Y$ is a total order $\preceq$ on $Y$ such that
\[
x\prec y\preceq f(x)~\Rightarrow~ f(y)\prec y.
\]

\begin{lemma}\label{lem:good}
Let $f\in F(n)$ and $Y \subseteq \B^n$. Then, $f$ has a good order on $Y$ and, for every $x\in Y\setminus f(Y)$, $f$ has a good order on $Y$ whose maximal element is $x$.  
\end{lemma}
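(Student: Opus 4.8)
The plan is to prove both assertions simultaneously by induction on $|Y|$, with the inductive statement: for every $f\in F(n)$ and every $Y\subseteq\B^n$, $f$ admits a good order on $Y$, and moreover for every $x\in Y\setminus f(Y)$ it admits a good order on $Y$ whose maximum is $x$. The base case $|Y|\leq 1$ is immediate, since the defining implication $x\prec y\preceq f(x)\Rightarrow f(y)\prec y$ is then vacuous (there are no two distinct comparable elements).

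For the second assertion I would peel off the prescribed maximum. Given $x\in Y\setminus f(Y)$, set $Y'=Y\setminus\{x\}$, take a good order $\preceq'$ on $Y'$ from the induction hypothesis, and extend it to $\preceq$ on $Y$ by declaring $x$ to be the top element. The key is that $x\notin f(Y)$, so no element of $Y$ is mapped to $x$ by $f$. Hence whenever the premise $u\prec v\preceq f(u)$ holds we have $f(u)\in Y$ with $f(u)\neq x$, so $f(u)\prec x$; this forces $u,v,f(u)$ all into $Y'$, and the condition reduces verbatim to the goodness of $\preceq'$, which gives $f(v)\prec v$. This same reasoning cleanly absorbs the delicate cases where $f(u)$ leaves $Y$ (premise vacuous) or $f(v)$ leaves $Y$ (cannot happen under the premise, since $f(v)\neq x$ and $f(v)$ would have to lie in $Y'$).

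The first assertion then follows at once when $Y\setminus f(Y)\neq\emptyset$, by applying the second with any such $x$. The genuinely separate case is $Y\setminus f(Y)=\emptyset$: then $f$ restricts to a permutation of $Y$, the subgraph of $\S(f)$ induced by $Y$ is a disjoint union of cycles, and there is no source to place on top. Here I would build the order explicitly. On a single cycle $a_0\to a_1\to\cdots\to a_{k-1}\to a_0$ I put $a_0$ at the bottom and $f(a_0)=a_1$ at the top, with the rest in reverse-$f$ order, i.e. $a_0\prec a_{k-1}\prec a_{k-2}\prec\cdots\prec a_1$. A direct check gives $f(a_j)\prec a_j$ for every $j\geq 1$, so the only element with $f(x)\succ x$ is $x=a_0$; there the premise confines $y$ to $\{a_1,\dots,a_{k-1}\}$, each of which satisfies $f(y)\prec y$, and for $x=a_j$ with $j\geq 1$ the premise is impossible. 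For several cycles I would lay their blocks end to end in any order; since each cycle is $f$-invariant and occupies a contiguous interval, the unique ascending arc inside each block stays within that block, so the premise for any $x$ keeps $y$ in $x$'s own cycle and the single-cycle check applies.

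I expect the main obstacle to be precisely this permutation case, because the clean ``remove a source and make it the maximum'' reduction that drives the rest of the induction has no counterpart there: one must instead find the correct intra-cycle ordering (minimum mapping straight to the maximum, everything else decreasing) and verify that concatenating the cycle-blocks creates no bad triple. A secondary point requiring care throughout is the semantics of the defining implication when $f(x)$ or $f(y)$ leaves $Y$; the comparisons $y\preceq f(x)$ and $f(y)\prec y$ must be read as false whenever their right-hand side lies outside $Y$, and each step above is arranged so that under the premise every relevant image in fact lies in $Y$.
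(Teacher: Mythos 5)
Your proof is correct, including the two places where care is genuinely needed: the convention that $y\preceq f(x)$ and $f(y)\prec y$ are false when the image leaves $Y$ (your peel-off step does deliver $f(v)\in Y'$ via the induction hypothesis, since $f(v)\neq x$ is guaranteed by $x\notin f(Y)$), and the permutation core case. The route, however, is organized differently from the paper's. The paper gives a one-shot construction: it takes the weakly connected components $S_1,\dots,S_r$ of the subgraph of $\S(f)$ induced by $Y$, breaks the unique cycle of each cyclic component by deleting one arc $x\to f(x)$, obtains in-trees $T_i$ with a unique sink $x^i$, takes a topological order of each $T_i$ (so $f(z)\prec_i z$ for all $z\neq x^i$), and concatenates the blocks; the prescribed maximum $x\in Y\setminus f(Y)$ is handled by putting its component last and choosing the topological order with $x$ on top, which is possible precisely because $x$ is a source. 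You instead run a simultaneous induction on $|Y|$: any source is peeled to the top (this is, in effect, performing the paper's topological sort one element at a time), and the induction bottoms out exactly when $Y\subseteq f(Y)$, forcing $f(Y)=Y$ and $f|_Y$ a permutation, where you give an explicit per-cycle order --- which, amusingly, coincides with the paper's order on a cycle after deleting the arc out of $a_0$ --- and concatenate cycle blocks. What each buys: the paper's version makes the global picture explicit (cycles occupy disjoint intervals, acyclic parts are topologically sorted), which is the picture its Figure for the decreasing lemma relies on; yours dispenses entirely with components and hanging trees, localizing the verification to two small checks (the peel step and the single-cycle order) at the modest cost of an induction, and it makes explicit the out-of-$Y$ semantics that the paper's definition leaves implicit but its construction and later use in the decreasing lemma tacitly satisfy.
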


\begin{proof}
Let $S_1,\dots,S_r$ be the connected components of the subgraph of $\S(f)$ induced by $Y$. For $1\leq i\leq r$, let $T_i$ be defined as follows. If $S_i$ is acyclic then $T_i=S_i$. Otherwise, $S_i$ has a unique cycle; we take any $x$ in this cycle and let $T_i$ be obtained from $S_i$ by deleting the arc from $x$ to $f(x)$. In both cases, $T_i$ is a tree with a unique vertex of out-degree zero, denoted $x^i$. Hence, $T_i$ has a topological order $\preceq_i$, that is, a total order on the vertex set of $T_i$ with $f(x)\prec_i x$ for all $x$ in $T_i$ distinct from $x^i$. Let $\preceq$ be the total order on $Y$ defined as follows: for all $x,y\in Y$, let $1\leq i,j\leq r$ such that $x$ is in $T_i$ and $y$ in $T_j$; then $x\preceq y$ if and only if either $i<j$ or $i=j$ and $x\preceq_i y$. Then $\preceq$ is a good order of $f$ on $Y$. Indeed, suppose that $x\prec y\preceq f(x)$. Since $x$ and $f(x)$ are in the same component, say $S_i$, we have $x\prec_i f(x)$ thus $x=x^i$. Let $S_j$ be the connected component containing $y$. Since $x\prec y\preceq f(x)$ we have $i\leq j\leq i$. Thus, $y$ and $f(y)$ are in $S_i$ and since $y\neq x=x^i$ we have $f(y)\prec_i y$ and thus $f(y)\prec y$. Furthermore, given $x\in Y\setminus f(Y)$ we can re-order the connected components so that $x$ is in $S_r$. Since $x$ has in-degree zero in $T_r$, we can choose $\preceq_r$ so that $x$ is its maximal element. Then $x$ is the maximal element of $\preceq$.  
\end{proof}

\medskip
We need another lemma, giving a sufficient condition to obtain the conclusion of Lemma~\ref{decreasing_lemma}. 

\begin{lemma}\label{lem:converging_to_robust}
Let $f\in F(n)$ and $X \subseteq \B^n$ be a non-empty down set. Suppose that there exists a permutation $\pi$ of $\B^n$, acting as the identity on $X$, such that $h=\pi\circ f\circ\pi^{-1}$ is $X$-converging. Then $h$ is robustly $X$-converging  and $h\sim_X f$.
\end{lemma}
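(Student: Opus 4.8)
The plan splits the two conclusions. The relation $h\sim_X f$ is essentially bookkeeping: conjugacy gives $\S(h)\sim\S(f)$, hence $h\sim f$; and since $\pi$ is the identity on $X$, it fixes $X$ pointwise and permutes $\bar X$, so for every $x\in X$ we have $\pi^{-1}(x)=x$ and thus $h(x)=\pi(f(x))$. If $f(x)\in X$ then $h(x)=\pi(f(x))=f(x)\in X$, while if $f(x)\in\bar X$ then $h(x)=\pi(f(x))\in\bar X$; these are exactly the two clauses defining $h\sim_X f$.

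The substance lies in the robustness claim. I would fix an arbitrary permutation $\rho$ equal to the identity on $\bar X$, set $h'=\rho\circ h\circ\rho^{-1}$, and show that $h'$ is $X$-converging. The preliminary step is to pin down the target set: a short computation gives $\FP(h')=\rho(\FP(h))$, and since $\rho$ fixes $\bar X$ pointwise we get $\FP(h')\cap\bar X=\FP(h)\cap\bar X$, whence $X\cup\FP(h')=X\cup\FP(h)=:T$. So $h$ and $h'$ share the very same target. I would also record that for $x\in\bar X$ one has $h'(x)=\rho(h(x))$, which lies in $\bar X$ exactly when $h(x)$ does, and in that case coincides with $h(x)$.

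The core is an induction on the weight $w(x)$ showing that $\A(h')$ has a decreasing path from $x$ to $T$. The base case $w(x)=0$ holds since $\ZERO\in X\subseteq T$ ($X$ being a nonempty down set). For the step, if $x\in T$ there is nothing to prove; otherwise $x\in\bar X$, and it suffices to exhibit a single decreasing arc $x\to x'$ in $\A(h')$ with $w(x')<w(x)$, the tail then coming from the induction hypothesis. If $h'(x)\in\bar X$, then $h'(x)=h(x)$, so the out-arcs of $x$ agree in $\A(h)$ and $\A(h')$; as $h$ is $X$-converging while $x\notin T$, the first (necessarily decreasing) arc of an $h$-decreasing path out of $x$ serves. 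If $h'(x)\in X$, the down-set property finishes the job: the absence of a decreasing arc would mean $h'(x)\geq x$, and then $x\leq h'(x)\in X$ would force $x\in X$, contradicting $x\in\bar X$.

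I expect the main obstacle to be conceptual rather than computational: recognizing that a permutation supported on $X$ cannot damage the descent toward $X$ coming from outside, and isolating the two genuinely different mechanisms—arcs landing in $\bar X$ are literally unchanged, while arcs landing in $X$ are forced downward by the down-set condition—into a single clean weight induction, after first reducing both $h$ and $h'$ to the common target $T$.
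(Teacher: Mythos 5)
Your proposal is correct and follows essentially the same route as the paper: the paper also reduces robustness to showing that each non-fixed configuration $x\in\bar X$ admits a decreasing arc in $\A(h')$, and its key step is exactly your observation that $\sigma$ (your $\rho$) fixes $\bar X$ pointwise, so that $x<h'(x)$ forces $h'(x)\in\bar X$ (upper-set property) and hence $h'(x)=h(x)$, contradicting $X$-convergence of $h$. Your explicit weight induction with the common target $T=X\cup\FP(h)=X\cup\FP(h')$ just unfolds what the paper compresses into a short contradiction argument, with no substantive difference in the underlying idea.
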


\begin{proof}
The fact that $h\sim_X f$ simply follows from the fact that $\pi$ acts as the identity on $X$. So it remains to prove that $h$ is robustly $X$-converging. Let $\sigma$ be any permutation of $\B^n$ acting as the identity on $\overline{X}$, and let $h'=\sigma\circ h\circ\sigma^{-1}$. We have to prove that $h'$ is $X$-converging, which is equivalent to prove that $\A(h')$ has a decreasing arc starting from each configuration in $\overline{X}\setminus \FP(h')$. Suppose, for a contradiction, that there exists $x\in \overline{X}\setminus \FP(h')$ with $x< h'(x)$. Since $x\in \overline{X}$ and since $\overline{X}$ is an upper set, we have $h'(x)\in \overline{X}$. Since $\sigma$ acts as the identity on $\overline{X}$ we obtain $h(x)=h'(x)$. Thus, $x<h(x)$, and this  contradicts the fact that $h$ is $X$-converging. So $h'$ is $X$-converging, and thus $h$ is robustly $X$-converging. 
\end{proof}

We are now ready to prove Lemma \ref{decreasing_lemma}, which we restate. 

\setcounter{lemma}{8} 
\begin{lemma}
Let $f\in F(n)$ and let $X\subseteq \B^n$ be a non-empty down set with $f(\overline{X})\neq \overline{X}$. Suppose that, for all $1<\ell<n$, $\overline{X}$ contains at least ${n-1\choose \ell-1}+1$ configurations of weight $\ell$. Then there exists a robustly $X$-converging function $h\sim_X f$. 
\end{lemma}
\setcounter{lemma}{15} 

\begin{proof}
Let $Y=\overline{X}$ which is an upper set since $X$ is a down set. Since $f(Y)\neq Y$, there exists $a\in Y\setminus f(Y)$. By Lemma~\ref{lem:good}, there exists a good order $\preceq$ of $f$ on $Y$ whose maximal element is $a$. Given a permutation $\pi$ of $\B^n$, we denote by $\preceq_\pi$ the total order on $\pi(Y)$ defined as follows: for all $x,y\in Y$, 
\[
\pi(x)\preceq_{\pi} \pi(y)~\iff ~ x\preceq y.
\]

\medskip
Let $\Pi$ be the set of permutations $\pi$ of $\B^n$ such that $\pi$ acts as the identity on $X$ and $\preceq_\pi$ is {\em monotone}, that is
\[
x\preceq_\pi y~\Rightarrow~ w(x)\leq w(y).
\]
It is obvious that $\Pi\neq\emptyset$. Furthermore, for all $\pi\in \Pi$ we have $\pi(Y)=Y$ (since $\pi$ acts as the identity on $X$), so $\preceq_\pi$ is a monotone total order on $Y$. In addition, $\preceq_\pi$ is a good order of $h=\pi\circ f\circ\pi^{-1}$. Indeed, let $x,y\in Y$ and suppose that $\pi(x)\prec_\pi \pi(y)\preceq_\pi h(\pi(x))$, which is equivalent to $x\prec y\preceq f(x)$. Since $\preceq$ is a good order of $f$, we have $f(y)\prec y$, and thus $h(\pi(y))=\pi(f(y))\prec_\pi \pi(y)$. 

\medskip
Given $\pi\in\Pi$, we set $f^\pi=\pi\circ f\circ\pi^{-1}$ and 
\[
Z_\pi=\{x\in Y\mid x<f^\pi(x)\}.
\]

\medskip
Suppose that there exists $\pi\in\Pi$ such that $Z_\pi=\emptyset$, and let $h=f^\pi$. Since $Z_\pi=\emptyset$, for all $x\in Y$, we have $x\not< h(x)$ thus either $h(x)=x$ or $h_i(x)<x_i$ for some $i\in [n]$, that is, $\A(h)$ has a decreasing arc starting from $x$. Thus, $h$ is $X$-converging. By Lemma~\ref{lem:converging_to_robust}, $h$ is robustly $X$-converging and $h\sim_X f$ so the lemma holds. 

\medskip
Consequently, it is sufficient to prove that $Z_\pi=\emptyset$ for some $\pi\in\Pi$. Suppose, by contradiction, that $Z_\pi\neq\emptyset$ for all $\pi\in\Pi$. Given $\pi\in\Pi$, we set 
\[
w_\pi=\min_{x\in Z_\pi} w(x).
\] 
Let $\pi\in\Pi$ which maximizes $w_\pi$ and $h=f^\pi$. Let $x\in Z_\pi$ with $w(x)=w_\pi$. Since $x<h(x)$ and $Y$ is an upper set, we have $h(x)\in Y$. Since $a$ is the maximal element of $\preceq$, and since $\ONE$ is the maximal element of $\preceq_\pi$, we have $\pi(a)=\ONE$. Thus if $h(x)=\ONE$ we have 
\[
a=\pi^{-1}(\ONE)=\pi^{-1}(h(x))=\pi^{-1}(\pi(f(\pi^{-1}(x))))=f(\pi^{-1}(x)), 
\]
and since $x\in Y$, we have $\pi^{-1}(x)\in Y$ so $a\in f(Y)$ and this contradicts our choice of $a$. Consequently, $h(x)<\ONE$. Furthermore, since $|Y|<2^n$ (because $X$ is non-empty) and since $Y$ is an upper set, we have $\ZERO\not\in Y$. Thus, 
\[
\ZERO<x<h(x)<\ONE.
\]

\smallskip
Let us prove that there exists $y\in Y$ such that $w(y)=w(h(x))$ and $x\not\leq y$. Let $k=w(x)$ and $\ell=w(h(x))$. By the above inequality, we have $1<\ell<n$. Let $A$ be the set of $y\in \B^n$ with $w(y)=\ell$, and let $B$ be the set of $y\in A$ such that $x\leq y$. We have $|B|={n-k\choose \ell-k}$. By hypothesis, $|Y\cap A|>{n-1\choose \ell-1}\geq |B|$. Thus, there indeed exists $y\in (Y\cap A)\setminus B$ such that $w(y)=w(h(x))$ and $x\not\leq y$.

\smallskip
Let $\sigma=(h(x)\leftrightarrow y)$. Since $y,h(x)\in Y$,  $\sigma$ acts as the identity on $X$. Furthermore, it preserves the weight, and thus $\sigma\circ\pi\in \Pi$. Note that $\sigma=\sigma^{-1}$ and thus $f^{\sigma\circ\pi}=\sigma\circ h\circ\sigma$. 

\smallskip
We will prove that $w_{\sigma\circ\pi}>w_{\pi}$, which is the desired contradiction. First, we have $x\not\in Z_{\sigma\circ\pi}$ since $x\not\leq y$ and 
\[
f^{\sigma\circ\pi}(x)=\sigma(h(\sigma(x)))=\sigma(h(x))=y.
\] 
Let any $z\in Z_{\sigma\circ\pi}$; so $z\neq x$. Suppose, by contradiction, that $w(z)\leq w(x)$. Then $w(z)<w(y)=w(h(x))$, thus $\sigma(z)=z$ and $f^{\sigma\circ\pi}(z)=\sigma(h(z))$. Since $\sigma$ preserves the weight and since $z<f^{\sigma\circ\pi}(z)$ we have $w(z)<w(h(z))$. Since $w(z)\leq w(x)<w(h(x))$ and $z\neq x$ we obtain, by monotonicity,
\[
z\prec_\pi h(z),\quad z\prec_\pi h(x),\quad x\prec_\pi h(x).
\]
Since $\preceq_\pi$ is a good order for $h$, we cannot have $x\prec_\pi z\prec_\pi h(x)$ and this forces $z\prec_\pi x$. Similarly, we cannot have $z\prec_\pi x\prec_\pi h(z)$ and this forces $h(z)\preceq_\pi x$. Thus, since $\preceq_\pi$ is monotone, 
\[
w(z)<w(h(z))\leq w(x).
\]
Consequently, $f^{\sigma\circ\pi}(z)=\sigma(h(z))=h(z)$. Thus, $z<h(z)$, but then $z\in Z_\pi$ and since $w(z)<w(x)$ this contradicts the choice of $x$. Thus, $w(z)>w(x)$ for all $z\in Z_{\sigma\circ\pi}$. But then $w_{\sigma\circ\pi}>w_\pi$, which is the desired contradiction.
\end{proof}

\section{Many small attractors}\label{sec:many_att}

In this section we prove Theorem \ref{thm:many_att}, that taking $f\in F(n)$ uniformly at random, the probability that there exists $h\sim f$ such that $\A(h)$ has at least $0.046\cdot 2^n$ attractors, each of size $\leq 4$, tends to $1$ as $n\to\infty$. We start by showing that if the synchronous dynamics has many limit cycles of length $\leq 2$ and many vertex-disjoint paths of length $2$, then the asynchronous dynamics can have many small attractors.

\begin{lemma}\label{lem:C1C2P2}
Let $f \in F(n)$ and suppose that there exists a collection of $d\leq 2^{n-3}$ vertex-disjoint subgraphs of $\S(f)$, each isomorphic to $C_1$, $C_2$, or $2P_2$. There exists $h\sim f$ such that $\A(h)$ has at least $d$ attractors, each of size $\leq 4$.  
\end{lemma}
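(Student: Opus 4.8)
The plan is to treat each of the $d$ subgraphs as a local gadget placed in its own $3$-dimensional subcube of $Q_n$, and to build a single permutation $\pi$ of $\B^n$ so that $h=\pi\circ f\circ\pi^{-1}$ realizes, inside each subcube, a set of at most $4$ configurations from which $\A(h)$ has no outgoing arc. Fixing the last $n-3$ coordinates partitions $\B^n$ into $2^{n-3}$ pairwise-disjoint $3$-subcubes; since $d\le 2^{n-3}$ (and $d\ge 1$ forces $n\ge 3$), I can assign a distinct subcube to each of the $d$ subgraphs. Because the subgraphs are vertex-disjoint in $\S(f)$, their vertex sets are disjoint, so I may define $\pi$ by sending, for each subgraph, its vertices to prescribed positions of the associated subcube (detailed below), and then extend $\pi$ to a bijection of $\B^n$ arbitrarily on the remaining configurations. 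With this $\pi$ one automatically has $h\sim f$, and whenever $u\to v$ is an arc of the subgraph in $\S(f)$ one gets $h(\pi(u))=\pi(f(u))=\pi(v)$, so the prescribed arcs are realized by $h$ inside the corresponding subcube.

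Next I would specify the three gadgets, working in local coordinates inside a subcube (so ``$e_i$'' refers to the free coordinates $1,2,3$, the last $n-3$ coordinates being fixed). For a $C_1$, i.e.\ a fixed point $p$, set $\pi(p)=\ZERO$; then $h(\ZERO)=\ZERO$ and $A=\{\ZERO\}$ is an attractor of size $1$. For a $C_2$, i.e.\ a $2$-cycle $p\leftrightarrow q$, set $\pi(p)=\ZERO$, $\pi(q)=e_1$; then $h(\ZERO)=e_1$, $h(e_1)=\ZERO$, and since $\ZERO,e_1$ are adjacent, $A=\{\ZERO,e_1\}$ has no outgoing arc and is an attractor of size $2$. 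For a $2P_2$, say $a_1\to b_1\to c_1$ and $a_2\to b_2\to c_2$, set
\[
\pi(a_1)=e_2,\quad \pi(b_1)=\ZERO,\quad \pi(c_1)=e_{1,2},\qquad
\pi(a_2)=e_{1,3},\quad \pi(b_2)=e_1,\quad \pi(c_2)=e_3 .
\]
These six images are distinct, and the realized arcs of $h$ are $e_2\to\ZERO$, $\ZERO\to e_{1,2}$, $e_{1,3}\to e_1$, $e_1\to e_3$. I claim $A=\{\ZERO,e_1,e_2,e_{1,3}\}$ is an attractor of size $4$.

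The heart of the argument is checking that $A$ is closed in $\A(h)$, i.e.\ that no arc leaves $A$; this is the step that makes the $2P_2$ gadget work and is the part I expect to require the most care. From $h(\ZERO)=e_{1,2}$ the only arcs out of $\ZERO$ are $\ZERO\to e_1$ and $\ZERO\to e_2$, both inside $A$; from $h(e_1)=e_3$ the only arcs out of $e_1$ are $e_1\to\ZERO$ and $e_1\to e_{1,3}$, both inside $A$; from $h(e_2)=\ZERO$ the only arc out of $e_2$ is $e_2\to\ZERO\in A$; and from $h(e_{1,3})=e_1$ the only arc out of $e_{1,3}$ is $e_{1,3}\to e_1\in A$. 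Thus every arc of $\A(h)$ leaving a vertex of $A$ stays in $A$ (in particular the images of the two path-endpoints $c_1,c_2$, which lie outside $A$, are irrelevant), and these same arcs show that $A$ is strongly connected. The key point is that the path-middles $b_1,b_2$ are placed so that their images $c_1=e_{1,2}$ and $c_2=e_3$ are the diagonal vertices whose two single-coordinate flips both land back in $A$.

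Finally I would assemble the count. Each gadget produces a set $A_i$ of size $\le 4$ with no outgoing arc in $\A(h)$, and each $A_i$ is contained in its own subcube; since the subcubes are pairwise disjoint, the $A_i$ are pairwise disjoint. A non-empty set with no outgoing arc contains at least one attractor, of size $\le |A_i|\le 4$, and disjointness of the $A_i$ then yields at least $d$ distinct attractors of $\A(h)$, each of size $\le 4$, as required. The only genuinely delicate points are the explicit $2P_2$ placement and its closedness verification above; the partition into subcubes, the extension of $\pi$ to a bijection, and the invariance $h\sim f$ are routine.
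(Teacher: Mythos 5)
Your proposal is correct and follows essentially the same approach as the paper: the paper also places each gadget in its own $3$-dimensional subcube (via an injection onto configurations with the first three coordinates zero) and uses exactly your local assignment for $2P_2$, namely $e_2\to\ZERO\to e_{1,2}$ and $e_{1,3}\to e_1\to e_3$, yielding the attractors $\{\ZERO\}$, $\{\ZERO,e_1\}$ and $\{\ZERO,e_1,e_2,e_{1,3}\}$ translated into each subcube. Your closedness verification matches the paper's (implicit) check, so there is nothing to correct.
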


\begin{proof}
Let $A\subseteq \FP(f)$ and let $B$ be a set of periodic points of $f$ of period $2$ with $f(B)\cap B=\emptyset$, that is, no two configurations in $B$ belong to the same limit cycle of length $2$. Let $P^1,\dots,P^{2c}$ be vertex-disjoint paths of length $2$ contained in $\S(f)\setminus (A\cup B\cup f(B))$, and let $x^k,y^k,z^k$ be the vertices of $P^k$ in order. Let $d=|A|+|B|+c$ and suppose that $d\leq 2^{n-3}$; see Figure \ref{fig:2P2}(a) for an illustration. We have to prove that there exists $h\sim f$ such that $\A(h)$ has at least $d$ attractors. 

\medskip
Since $d\leq 2^{n-3}$ there is an injection $\phi$ from $A\cup B\cup [c]$ to the set of configurations $x\in\B^n$ with $x_1=x_2=x_3=0$. Let $h=\pi\circ f\circ\pi^{-1}$ where $\pi$ is any permutation of $\B^n$ such that:
\begin{itemize}
\item $\pi(a)=\phi(a)$ for all $a\in A$,
\item $\pi(b)=\phi(b)$ and $\pi(f(b))=\phi(b)+e_1$ for all $b\in B$,
\item $\pi(x^k)=\phi(k)+e_2$, $\pi(y^k)=\phi(k)$ and $\pi(z^k)=\phi(k)+e_{1,2}$ for all odd $k\in [2c]$,
\item $\pi(x^k)=\phi(k)+e_{1,3}$, $\pi(y^k)=\phi(k)+e_1$ and $\pi(z^k)=\phi(k)+e_3$ for all even $k\in[2c]$.
\end{itemize}
See Figure \ref{fig:2P2}(b) for an illustration. Then, for any $a\in A$, $b\in B$ and $k\in [c]$,  $\{\phi(a)\}$, $\{\ZERO,e_1\}+\phi(b)$  and $\{\ZERO,e_1,e_2,e_{1,3}\}+\phi(k)$ are attractors of $\A(h)$; see Figure \ref{fig:2P2}(c) for an illustration. Thus, $\A(h)$ has at least $d$ attractors.   
\end{proof}

\begin{figure}[t]
\[
\tag{a}
\begin{array}{c}
    \begin{tikzpicture}
 		\draw [draw=white] (-0.5,-0.2) rectangle (13.6,2.9);
		\node[inner sep=1,outer sep=1] (a1) at (0.00,1){\small$a^1$};
        \node[inner sep=1,outer sep=1] (a2) at (1.00,1){\small$a^2$};
        \node[inner sep=1,outer sep=1] (b1) at (2.00,1){\small$b^1$};
        \node[inner sep=1,outer sep=1] (B1) at (3.50,1){\small$b'^1$};
        \node[inner sep=1,outer sep=1] (b2) at (4.75,1){\small$b^2$};
        \node[inner sep=1,outer sep=1] (B2) at (6.25,1){\small$b'^2$};
        \node[inner sep=1,outer sep=1] (x1) at (8.00,2){\small$x^1$};
		\node[inner sep=1,outer sep=1] (y1) at (8.00,1){\small$y^1$};
		\node[inner sep=1,outer sep=1] (z1) at (8.00,0){\small$z^1$};
		\node[inner sep=1,outer sep=1] (x2) at (9.50,2){\small$x^2$};
		\node[inner sep=1,outer sep=1] (y2) at (9.50,1){\small$y^2$};
		\node[inner sep=1,outer sep=1] (z2) at (9.50,0){\small$z^2$};
		\node[inner sep=1,outer sep=1] (x3) at (11.0,2){\small$x^3$};
		\node[inner sep=1,outer sep=1] (y3) at (11.0,1){\small$y^3$};
		\node[inner sep=1,outer sep=1] (z3) at (11.0,0){\small$z^3$};
		\node[inner sep=1,outer sep=1] (x4) at (12.5,2){\small$x^4$};
		\node[inner sep=1,outer sep=1] (y4) at (12.5,1){\small$y^4$};
		\node[inner sep=1,outer sep=1] (z4) at (12.5,0){\small$z^4$};
		\node[inner sep=1,outer sep=1] (B) at (8,2.7){\small$P_1$};
		\node[inner sep=1,outer sep=1] (B) at (9.5,2.7){\small$P_2$};
		\node[inner sep=1,outer sep=1] (B) at (11,2.7){\small$P_3$};
		\node[inner sep=1,outer sep=1] (B) at (12.5,2.7){\small$P_4$};
        \path[thick,->,draw,black]
        (b1) edge[bend left=40] (B1)
		(B1) edge[bend left=40] (b1)
		(b2) edge[bend left=40] (B2)
		(B2) edge[bend left=40] (b2)
		(x1) edge (y1)
		(y1) edge (z1)
		(x2) edge (y2)
		(y2) edge (z2)
		(x3) edge (y3)
		(y3) edge (z3)
		(x4) edge (y4)
		(y4) edge (z4)
        ;
        \draw[->,thick] (a1.-112) .. controls ({0-0.5},{1-0.7}) and ({0+0.5},{1-0.7}) .. (a1.-68);
        \draw[->,thick] (a2.-112) .. controls ({1-0.5},{1-0.7}) and ({1+0.5},{1-0.7}) .. (a2.-68);
    \end{tikzpicture}
\end{array}
\]
\[
\tag{b}
\begin{array}{c}
    \begin{tikzpicture}
 		\draw [draw=white] (-0.5,-0.2) rectangle (13.6,2.2);
		\node[inner sep=1,outer sep=1] (a1) at (0.00,1){\small$x^1$};
        \node[inner sep=1,outer sep=1] (a2) at (1.00,1){\small$x^2$};
        \node[inner sep=1,outer sep=1] (b1) at (2.00,1){\small$x^3$};
        \node[inner sep=1,outer sep=1] (B1) at (3.50,1){\small$x^3+e_1$};
        \node[inner sep=1,outer sep=1] (b2) at (4.75,1){\small$x^4$};
        \node[inner sep=1,outer sep=1] (B2) at (6.25,1){\small$x^4+e_1$};
        \node[inner sep=1,outer sep=1] (x1) at (8.00,2){\small$x^5+e_2$};
		\node[inner sep=1,outer sep=1] (y1) at (8.00,1){\small$x^5$};
		\node[inner sep=1,outer sep=1] (z1) at (8.00,0){\small$x^5+e_{1,2}$};
		\node[inner sep=1,outer sep=1] (x2) at (9.50,2){\small$x^5+e_{1,3}$};
		\node[inner sep=1,outer sep=1] (y2) at (9.50,1){\small$x^5+e_1$};
		\node[inner sep=1,outer sep=1] (z2) at (9.50,0){\small$x^5+e_3$};
		\node[inner sep=1,outer sep=1] (x3) at (11.0,2){\small$x^6+e_2$};
		\node[inner sep=1,outer sep=1] (y3) at (11.0,1){\small$x^6$};
		\node[inner sep=1,outer sep=1] (z3) at (11.0,0){\small$x^6+e_{1,2}$};
		\node[inner sep=1,outer sep=1] (x4) at (12.5,2){\small$x^6+e_{1,3}$};
		\node[inner sep=1,outer sep=1] (y4) at (12.5,1){\small$x^6+e_1$};
		\node[inner sep=1,outer sep=1] (z4) at (12.5,0){\small$x^6+e_3$};
        \path[thick,->,draw,black]
        (b1) edge[bend left=40] (B1)
		(B1) edge[bend left=40] (b1)
		(b2) edge[bend left=40] (B2)
		(B2) edge[bend left=40] (b2)
		(x1) edge (y1)
		(y1) edge (z1)
		(x2) edge (y2)
		(y2) edge (z2)
		(x3) edge (y3)
		(y3) edge (z3)
		(x4) edge (y4)
		(y4) edge (z4)
        ;
        \draw[->,thick] (a1.-112) .. controls ({0-0.5},{1-0.7}) and ({0+0.5},{1-0.7}) .. (a1.-68);
        \draw[->,thick] (a2.-112) .. controls ({1-0.5},{1-0.7}) and ({1+0.5},{1-0.7}) .. (a2.-68);
    \end{tikzpicture}
\end{array}
\]
\[
\tag{c}
\begin{array}{c}
    \begin{tikzpicture}
    \def\x{1.3}
    \def\z{0.6}
	\def\zz{0.9}
 		\draw [draw=white] (-0.5,0.8) rectangle (13.6,2.7);
		\node[inner sep=1,outer sep=1] (a1) at (0.00,{1+\x/2}){\small$x^1$};
        \node[inner sep=1,outer sep=1] (a2) at (1.00,{1+\x/2}){\small$x^2$};
        \node[inner sep=1,outer sep=1] (b1) at (2.00,{1+\x/2}){\small$x^3$};
        \node[inner sep=1,outer sep=1] (B1) at (3.50,{1+\x/2}){\small$x^3+e_1$};
        \node[inner sep=1,outer sep=1] (b2) at (4.75,{1+\x/2}){\small$x^4$};
        \node[inner sep=1,outer sep=1] (B2) at (6.25,{1+\x/2}){\small$x^4+e_1$};
        \path[thick,<->,draw,blue]
        (b1) edge (B1)
		(b2) edge (B2)
        ;
		\node[inner sep=1] (2) at ({8},{1+\x}){\small $x^5+e_2$};
		\node[inner sep=1] (0) at ({8},1){\small $x^5$};
		\node[inner sep=1] (13) at ({8+\x+\z},{1+\zz}){\small $x^5+e_{1,3}$};
		\node[inner sep=1] (1) at ({8+\x},1){\small $x^5+e_1$};
		\path[thick,<->,blue]
		(2) edge (0)
		(0) edge (1)
		(1) edge (13)
		;
		\node[inner sep=1] (2) at ({11},{1+\x}){\small $x^6+e_2$};
		\node[inner sep=1] (0) at ({11},1){\small $x^6$};
		\node[inner sep=1] (13) at ({11+\x+\z},{1+\zz}){\small $x^6+e_{1,3}$};
		\node[inner sep=1] (1) at ({11+\x},1){\small $x^6+e_1$};
		\path[thick,<->,blue]
		(2) edge (0)
		(0) edge (1)
		(1) edge (13)
		;
    \end{tikzpicture}
\end{array}
\]
{\caption{\label{fig:2P2} Illustration for Lemma \ref{lem:C1C2P2}. (a) A set $A=\{a^1,a^2\}$ of fixed points, a set $B=\{b^1,b^2\}$ of periodic configurations of period two, with $f(B)=\{b'^1,b'^2\}$ disjoint from $B$, and a collection of $4$ disjoint paths of length $2$, with notations as described in the lemma; hence $c=2$ and $d=6$. (b) Relabelling of the configurations according to the injection $\phi$ that sends $a^1,a^2,b^1,b^2,1,2$ on $x^1,x^2,x^3,x^4,x^5,x^6$, respectively, where $x^i_1=x^i_2=x^i_3=0$ for $1\leq i\leq 6$. (c) Resulting $d$ attractors in the asynchronous graph.}}
\end{figure}

We now prove that if $f^2$ has many images, then $\S(f)$ has a large collection of vertex-disjoint subgraphs isomorphic to $C_1$, $C_2$ or $2P_2$, and thus, by the preceding lemma, the asynchronous dynamics can have many small attractors.  

\begin{lemma}\label{lem:many_att}
Let $f \in F(n)$ and let $d$ be the number of images of $f^2$. There exists $h\sim f$ such that $\A(h)$ has at least $\lfloor d/10\rfloor$ attractors, each of size $\leq 4$. 
\end{lemma}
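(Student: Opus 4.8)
The plan is to reduce to Lemma~\ref{lem:C1C2P2} by exhibiting, inside $\S(f)$, a large packing of vertex-disjoint subgraphs isomorphic to $C_1$, $C_2$ or $2P_2$. Concretely, I will take all fixed points (as copies of $C_1$) and all limit cycles of length $2$ (as copies of $C_2$), and then pack as many vertex-disjoint paths of length $2$ as possible among the remaining configurations, grouping them in pairs to build copies of $2P_2$. Writing $p_1=\fp(f)$, $p_2$ for the number of $2$-cycles, $V'=\B^n\setminus(\FP(f)\cup\{\text{configurations on a $2$-cycle}\})$, and $\nu$ for the size of a maximal packing of vertex-disjoint $P_2$'s all of whose vertices lie in $V'$, the collection produced has $N=p_1+p_2+\lfloor \nu/2\rfloor$ members, all pairwise vertex-disjoint. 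Since $\lfloor d/10\rfloor\leq 2^n/10<2^{n-3}$, any $\lfloor d/10\rfloor$ of them form an admissible input for Lemma~\ref{lem:C1C2P2}, so it suffices to prove $N\geq \lfloor d/10\rfloor$.

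The bridge between $d$ and these quantities is a structural observation on the image $f^2(\B^n)$. I will first show that every $y\in f^2(\B^n)$ is a fixed point, lies on a $2$-cycle, or is the endpoint of a genuine path of length $2$, i.e. three distinct configurations $a\to b\to y$ in $\S(f)$: writing $y=f(f(x))$ with $a=x$ and $b=f(x)$, the only ways for $a,b,y$ not to be distinct are $a=b$ or $b=y$ (forcing $y\in\FP(f)$) or $a=y\neq b$ (forcing $a,b$ to be a $2$-cycle). Letting $D$ be the set of images of $f^2$ that are neither fixed points nor on a $2$-cycle, this yields $d\leq p_1+2p_2+|D|$, and every $y\in D$ is the endpoint of a genuine $P_2$. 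A short case analysis further shows that for such a $y$ the three vertices of any witnessing path lie in $V'$ (e.g. if its middle vertex were fixed or on a $2$-cycle, then $y$ itself would be), so these witnessing paths are exactly the objects the packing defining $\nu$ competes with.

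The heart of the argument is the inequality $|D|\leq 5\nu$, obtained by a charging argument against a maximal packing $\mathcal{P}=\{a_i\to b_i\to c_i\}_{i\leq\nu}$ with vertex set $U$. By maximality, the witnessing path of each $y\in D$ must meet $U$ in some vertex $u$, and since $u$ is the start, middle or end of that path we have $y\in\{u,f(u),f^2(u)\}$. Charging $y$ to the triple of $\mathcal{P}$ containing $u$, every configuration charged to $\{a_i,b_i,c_i\}$ lies in $\{a_i,b_i,c_i,f(c_i),f^2(c_i)\}$, a set of size at most $5$; hence $|D|\leq 5\nu$. Combining, $d\leq p_1+2p_2+5\nu$, and a direct check gives $N\geq\lfloor d/10\rfloor$: when $p_1=p_2=0$ we have $\nu\geq d/5$, whence $\lfloor\nu/2\rfloor\geq\lfloor d/10\rfloor$; otherwise $10N\geq 10(p_1+p_2)+5\nu-5\geq d+(9p_1+8p_2-5)\geq d$, so $N\geq\lfloor d/10\rfloor$ as well.

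The main obstacle is pinning down the constant so that it comes out as exactly $10$: the factor $5$ in $|D|\leq 5\nu$ and the factor $2$ lost when pairing the $P_2$'s into copies of $2P_2$ multiply to the required $10$, so the charging bound really has to be the sharp ``at most $5$ per triple'' (the cruder ``$3$ per used vertex'', giving only $|D|\leq 9\nu$, is not enough). Some care is also needed to keep the packed $P_2$'s genuinely vertex-disjoint from the chosen fixed points and $2$-cycles; restricting the packing to $V'$ is precisely what guarantees this and what makes the witnessing paths of $D$ usable against that packing in the charging step.
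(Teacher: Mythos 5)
Your proof is correct and essentially follows the paper's own route: both reduce to Lemma~\ref{lem:C1C2P2} by combining fixed points (as $C_1$), $2$-cycles (as $C_2$), and genuine length-$2$ paths ending at the non-periodic images of $f^2$, paired into copies of $2P_2$, with the same sharp $5$-to-$1$ loss per path. The only cosmetic difference is in how that loss is established: the paper greedily selects images $x$ and deletes the five candidates $\tilde x, f(\tilde x), x, f(x), f^2(x)$ at each step, while you charge the elements of $D$ against a maximal packing via $y\in\{u,f(u),f^2(u)\}$ --- the two arguments are interchangeable and yield the same count.
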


\begin{proof}
Since $\lfloor d/10\rfloor\leq 2^n/10\leq 2^{n-3}$, by Lemma~\ref{lem:C1C2P2}, it is sufficient to prove that there exists a collection of at least $\lfloor d/10\rfloor$ vertex-disjoint subgraphs of $\S(f)$, each isomorphic to $C_1$, $C_2$ or $2P_2$. Let $A$ be the fixed points of $f$, let $B$ be the set of periodic points of $f$ of period $2$ (which has even size), and let $\S'=\S(f)\setminus (A\cup B)$. It is sufficient to prove that $\S'$ has at least $\lfloor d/10\rfloor -|A|-|B|/2$ vertex-disjoint subgraphs isomorphic to $2P_2$ or, equivalently, $\lfloor d/5\rfloor -2|A|-|B|$ vertex-disjoint paths of length $2$. 

\medskip
Let $X$ be the image set of $f^2$ and for each $x\in X$, let $\tilde x$ be a pre-image of $x$ by $f^2$. Let $X^0=X\setminus (A\cup B)$. Since $\FP(f^2)=A\cup B$, for each $x\in X^0$ we have $\tilde x\neq x$ and thus $\tilde x\to f(\tilde x)\to x$ is a path of $\S'$ of length $2$, denoted $P_x$. Given $y\in X^0$, the paths $P_x$ and $P_y$ are vertex-disjoint if and only if $y\not\in\{\tilde x,f(\tilde x),x,f(x),f^2(x)\}$. Thus, we find a collection of vertex-disjoint paths of length $2$ in $\S'$ by selecting $x\in X^0$, removing $\{\tilde x,f(\tilde x),x,f(x),f^2(x)\}$ from the set $X^0$, and repeating the process until the set is empty. More precisely, let us write $|X^0|=5k+r$ where $0\leq r<5$ is an integer, and, for $1\leq i\leq k$, consider a sequence $x^1,\dots,x^k$ of configurations in $X^0$, and a sequence $X^1,\dots,X^k$ of subsets of $X^0$ such that, for $i=1,\dots,k$:
\begin{itemize}
\item
$x^i\in X^{i-1}$,
\item
$X^i=X^{i-1}\setminus\{\tilde x^i,f(\tilde x^i),x^i,f(x^i),f^2(x^i)\}$. 
\end{itemize}
These sequences exist since $|X^{i-1}|\geq |X^0|-5(i-1)\geq |X^0|-5(k-1)=5-r\geq 1$. By the argument above, the paths $P_{x^1},\dots,P_{x^k}$ are vertex-disjoint, and we are done since 
\[
k= \left\lfloor \frac{|X^0|}{5}\right\rfloor=\left\lfloor \frac{d-|A|-|B|}{5}\right\rfloor\geq \left\lfloor \frac{d}{5}\right\rfloor -2|A|-|B|.
\] 
\end{proof}

Taking $\phi:[N]\to [N]$ uniformly at random, it is proved in \cite{DS97} that the limit distribution of the number of images of $\phi^k$ is a Gaussian distribution with mean value $(1-\tau_k)N$, where $\tau_0=0$ and $\tau_{k+1}=e^{-1+\tau_k}$ (the mean was previously determined in \cite{FO89}), and with variance $c_k\cdot N$ for some explicit constant $c_k>0$ that only depends on $k$. Hence, taking $f\in F(n)$ uniformly at random, the limit distribution of the number of images of $f^2$ is a Gaussian distribution with mean value $(1-e^{-1+e^{-1}})2^n\simeq 0.468\cdot 2^n$ and variance $c_2\cdot 2^n$. Hence, for any $d<(1-e^{-1+e^{-1}})$, the probability that $f^2$ has at least $d\cdot 2^n$ images tends to $1$ as $n\to\infty$. Combining this with Lemma~\ref{lem:many_att}, we obtain Theorem~\ref{thm:many_att}.

\begin{remark}\label{rem:4}
The upper bound $4$ on attractors sizes in Theorems \ref{thm:small_att} and \ref{thm:many_att} is tight. Indeed, suppose that $f$ is a permutation without limit cycles of length $\leq 2$. Then it is clear that all the attractors of $\A(f)$ are of size $\geq 3$. If $\A(f)$ has an attractors $A$ of size $3$, then $A=\{x,x+e_i,x+e_j\}$ for some $x\in\B^n$ and distinct $i,j\in [n]$. But then, since $A$ is an attractor, we have $f(x+e_i)=f(x+e_j)=x$, a contradiction. Thus all the attractors of $\A(f)$ are of size $\geq 4$. 
\end{remark}

\section{One big attractor}\label{sec:strong}

For which $f\in F(n)$ does there exist $h\sim f$ such that $\A(h)$ has just one attractor, which is large? If $f$ has a fixed point, then it is clear that $\A(h)$ has an attractor of size one for every $h\sim f$. So we have to forbid fixed points. In this section, we show that this basic necessary condition is sufficient when $f$ is a permutation: if $f$ is a permutation without fixed point (derangement), then there exists $h\sim f$ such that $\A(h)$ is strongly connected, that is, $\A(h)$ has an attractor spanning the $2^n$ configurations; this is Theorem~\ref{thm:strong}. 

\medskip
The proof is based on the existence of a particular coloring of the synchronous graph. We need some definitions.  Given digraphs $G,H$, an \EM{$H$-coloring} of $G$ is a map $\phi:V(G)\to V(H)$ such that $\phi(u)\phi(v)\in E(H)$ for all $uv\in E(G)$; such a map is also often called a homomorphism from $G$ to $H$. An $H$-coloring $\phi$ of $G$ is \EM{balanced} if $|\phi^{-1}(u)|=|\phi^{-1}(v)|$ for any $u,v\in V(H)$, that is, color classes have the same size. Let $H_4$ be the following digraph:
\[
H_4
\qquad
\begin{array}{c}
\begin{tikzpicture}
\node[outer sep=1,inner sep=2] (00) at (0,0){$0$};
\node[outer sep=1,inner sep=2] (01) at (2,0){$1$};
\node[outer sep=1,inner sep=2] (10) at (0,2){$3$};
\node[outer sep=1,inner sep=2] (11) at (2,2){$2$};
\path[->,thick]
(00) edge (01)
(00) edge[bend right=10] (11)
(01) edge (11)
(01) edge[bend right=10] (10)
(11) edge (10)
(11) edge[bend right=10] (00)
(10) edge (00)
(10) edge[bend right=10] (01)
;
\end{tikzpicture}
\end{array}
\]

\medskip
We prove below that if $f$ is a derangement, that is if $\S(f)$ consists of a disjoint union of cycles of length $\geq 2$, then there exists a balanced $H_4$-coloring of $\S(f)$. For the purpose of induction, we prove the statement for a larger class of disjoint unions of cycles than just those on $2^n$ vertices.

\begin{lemma}\label{lem:coloring}
If $G$ is a digraph on $4n$ vertices that consists of a disjoint union of cycles, each of length $\geq 2$, then $G$ has a balanced $H_4$-coloring. 
\end{lemma}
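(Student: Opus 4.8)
The plan is to prove the statement by induction on $n$, exploiting a convenient reformulation of $H_4$. Reading off its arcs, from each color $i$ there are exactly the arcs $i\to i+1$ and $i\to i+2$, so a map $\phi:V(G)\to\{0,1,2,3\}$ is an $H_4$-coloring precisely when $\phi(v)-\phi(u)\equiv 1$ or $2\pmod 4$ along every arc $u\to v$, and it is balanced when each color is used $n$ times. Two features of $H_4$ will be used throughout: it is invariant under the rotation $i\mapsto i+1$, and besides the Hamiltonian cycle $0\to1\to2\to3\to0$ it contains the $2$-cycles $0\leftrightarrow 2$ and $1\leftrightarrow 3$. In this language a cycle of $G$ colored by $\phi$ is just a closed walk in $H_4$, and its length constrains the achievable color multiplicities: a $2$-cycle must take two steps $+2$, hence uses two colors differing by $2$; a $3$-cycle takes one $+2$ and two $+1$, hence uses three distinct colors and can be made to miss any one prescribed color; a $4$-cycle can use $0,1,2,3$; and a $5$-cycle can take three $+2$ and two $+1$ (total $8$) so as to use one prescribed color twice and the other three once. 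In each case ``any prescribed color'' follows from rotation invariance together with one explicit instance (e.g. the $5$-cycle walk $0\to2\to0\to1\to3\to0$ doubles the color $0$).

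The engine of the induction is an insertion lemma: for every $p\in\{0,1,2,3\}$ and every $e\in\{1,2\}$ there is a walk of length $5$ in $H_4$ from $p$ to $p+e$ whose four internal vertices are pairwise distinct. By rotation invariance it suffices to check $p=0$, where $0\to2\to0\to1\to3\to1$ works for $e=1$ and $0\to1\to2\to3\to0\to2$ works for $e=2$ (internal colors $2,0,1,3$ and $1,2,3,0$ respectively, each a rainbow). Granting this, whenever $G$ has a cycle of length $\ell\ge 6$ I would delete four consecutive vertices $a,b,c,d$ of it and reconnect its neighbours $u,v$ by an arc $u\to v$; the result is a disjoint union of cycles each of length $\ge 2$ on $4(n-1)$ vertices, which by induction has a balanced coloring $\phi$. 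Since $\phi(v)-\phi(u)\in\{1,2\}$, the insertion lemma supplies a rainbow walk realizing $a,b,c,d$ between $u$ and $v$; coloring them by it adds one vertex of each color and keeps the coloring balanced.

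It remains to feed the induction. I would use length-reducing moves, each removing a family of whole cycles (or, via the insertion lemma, four vertices of one long cycle) of total size divisible by $4$ and coloring the removed part with each color equally often: peel $4$ from a cycle of length $\ge 6$; delete a $4$-cycle (colored $0,1,2,3$); delete a pair of $2$-cycles (colored $\{0,2\}$ and $\{1,3\}$); delete four $3$-cycles (missing $0,1,2,3$ respectively, each color thrice); delete four $5$-cycles (doubling $0,1,2,3$ respectively, each color five times); delete one $3$-cycle together with one $5$-cycle (the $3$-cycle missing a color $d$, the $5$-cycle doubling $d$, each color twice). If none applies, then every cycle has length in $\{2,3,5\}$, there is at most one $2$-cycle, at most three $3$-cycles, at most three $5$-cycles, and no $3$-cycle and $5$-cycle coexist; writing $(a,b,c)$ for the numbers of $2$-, $3$-, $5$-cycles, the equation $2a+3b+5c=4n$ modulo $4$ then forces $(a,b,c)\in\{(0,0,0),(1,2,0),(1,0,2)\}$. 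The first is empty; for $(1,2,0)$ take the $2$-cycle $\{0,2\}$ and let the two $3$-cycles miss $0$ and $2$, giving each color twice; for $(1,0,2)$ take the $2$-cycle $\{0,2\}$ and let the two $5$-cycles double $1$ and $3$, giving each color thrice. This closes the base.

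The main obstacle is precisely this small-cycle regime: cycles of length $2,3,5$ cannot be shortened individually without violating either the ``length $\ge 2$'' requirement or divisibility by $4$, so one is forced to bundle several short cycles into balanced gadgets and to check that their lengths modulo $4$ always permit such a bundling, down to a finite list of irreducible configurations. Making this bookkeeping exhaustive, and verifying that each gadget and each base configuration is genuinely realizable as a family of closed walks in $H_4$ with the claimed multiplicities, is where the real work lies; by contrast the insertion lemma and the per-cycle multiplicity analysis become short once the structure of $H_4$ as ``steps of $+1$ or $+2$ modulo $4$'' is isolated.
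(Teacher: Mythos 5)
Your proof is correct, and I checked the concrete ingredients: the reformulation of $H_4$ as ``steps of $+1$ or $+2$ modulo $4$'' matches the paper's digraph; the two rainbow walks $0\to2\to0\to1\to3\to1$ and $0\to1\to2\to3\to0\to2$ are valid in $H_4$ with internal colors $\{0,1,2,3\}$; the per-cycle multiplicity analysis for lengths $2,3,4,5$ is right (a $2$-cycle forces two $+2$ steps, a $3$-cycle one $+2$ and two $+1$, a $5$-cycle three $+2$ and two $+1$); each bundling gadget is balanced; and the mod-$4$ enumeration of irreducible configurations $(a,b,c)\in\{(0,0,0),(1,2,0),(1,0,2)\}$ under the constraints $a\le 1$, $b,c\le 3$, $bc=0$ is exhaustive, with both nontrivial cases colored correctly.

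The strategy is close to the paper's in spirit but differs in one genuine way. The paper classifies cycles by length modulo $4$ and writes down, for every type and \emph{arbitrary} length, an explicit coloring that is balanced, $\{c\}$-exceeding, $\{c\}$-defecting, or $\{c,[c+2]\}$-exceeding; it then lists seven balanced-colorable families of whole cycles, proves by a counting contradiction that every nonempty $G$ contains one, and inducts on the number of cycles. You instead dispose of all long cycles first via the peel-and-insert move: shorten any cycle of length $\geq 6$ by four vertices, color the reduced graph by induction, and re-insert the deleted block using a rainbow walk of length $5$ --- an operation that modifies the graph rather than removing whole cycles, and which has no counterpart in the paper. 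This confines all explicit cycle colorings to lengths $2$ through $5$, which is arguably cleaner than the paper's general patterns (e.g.\ its type-$1$ coloring with the special tail $c,[c+2],c,[c+1],[c+3]$). Your short-cycle gadgets then essentially reproduce the paper's families restricted to short cycles (your $(1,2,0)$ and $(1,0,2)$ terminal cases are the paper's families of two type-$3$ or two type-$1$ cycles together with one type-$2$ cycle), and where the paper derives a contradiction from the absence of all seven families, you color the two surviving irreducible configurations directly --- a minor logical rearrangement. What the paper's route buys is that each cycle is colored in one shot with no surgery on the graph; what yours buys is a shorter list of explicit colorings and a more local, reusable insertion lemma.
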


\begin{proof}
In this proof, for every integer $a$ we set $[a]=a\mod 4$. By {\em coloring} we always mean $H_4$-coloring. We denote by $C=\{0,1,2,3\}$ the set of possible colors. Given $A\subseteq C$, we say that a function $\phi:V\to C$ is {\em $A$-exceeding} if $|\phi^{-1}(a)|=|\phi^{-1}(b)|+1$ for all $a\in A$ and $b\in C\setminus A$; and $\phi$ is {\em $A$-defecting} if it is ($C\setminus A$)-exceeding. 

\medskip
Let $C_\ell$ be a cycle of length $\ell\geq 2$ with vertices $v_0,\dots,v_{\ell-1}$ in order. We say that $C_\ell$ is of {\em type} $t$ if $t=[\ell]$. Let us prove that for any $c\in C$,  
\begin{enumerate}
\item \label{item:1}
if $C_\ell$ is of type $0$ then it has a balanced coloring;
\item \label{item:2}
if $C_\ell$ is of type $1$ then it has an $\{c\}$-exceeding coloring;
\item \label{item:3}
if $C_\ell$ is of type $3$ then it has an $\{c\}$-defecting coloring;
\item \label{item:4}
if $C_\ell$ is of type $2$ then it has an $\{c,[c+2]\}$-exceeding coloring. 
\end{enumerate}
Indeed, for the type $0$, color $v_i$ with $[i]$. 
For the type $1$, color $v_i$ with $[c+i]$ for $0\leq i< \ell-5$ and color $v_{\ell-5},v_{\ell-4},v_{\ell-3},v_{\ell-2},v_{\ell-1}$ with $c,[c+2],c,[c+1],[c+3]$ respectively.
For the type $3$, color $v_i$ with $[c+i]$ for $0\leq i< \ell-3$ and color $v_{\ell-3},v_{\ell-2},v_{\ell-1}$ with $[c+1],[c+2],[c+3]$ respectively.
 For the type $2$, color $v_i$ with $[c+i]$ for $0\leq i< \ell-2$ and color $v_{\ell-2},v_{\ell-1}$ with $c,[c+2]$ respectively. See Figure~\ref{fig:coloring} for an illustration.

\medskip
We deduce that the following digraphs have a balanced coloring:
\begin{enumerate}
\item \label{subgraph1}
a cycle of type $0$, as shown above;
\item \label{subgraph2}
a disjoint union of $4$ cycles of type $1$: for each $c\in C$, color one of the $4$ cycles with a $\{c\}$-exceeding coloring;
\item \label{subgraph3}
a disjoint union of $4$ cycles of type $3$: for each $c\in C$, color one of the $4$ cycles with a $\{c\}$-defecting coloring;
\item \label{subgraph4}
a disjoint union of $2$ cycles of type $2$: color one cycle with a $\{0,2\}$-exceeding coloring and the other one with a $\{1,3\}$-exceeding coloring;
\item \label{subgraph5}
a disjoint union of $1$ cycles of type $1$ and $1$ cycle of type $3$: color the first one with a $\{0\}$-exceeding coloring and the other one with a $\{0\}$-defecting coloring;
\item \label{subgraph6}
a disjoint union of $2$ cycles of type $1$ and $1$ cycle of type $2$: color the two first ones with a $\{0\}$-exceeding coloring and a $\{2\}$-exceeding coloring, and color the last with a $\{1,3\}$-exceeding coloring; 
\item \label{subgraph7}
a disjoint union of $2$ cycles of type $3$ and $1$ cycle of type $2$: color the two first ones with a $\{0\}$-defecting coloring and a $\{2\}$-defecting coloring, and color the last with a $\{0,2\}$-exceeding coloring.
\end{enumerate}

Let $G$ be a non-empty digraph on $4n$ vertices that consists of a disjoint union of cycles, each of length $\geq 2$.
Let us prove that $G$ contains, as a subgraph, one of the digraphs listed above. Suppose, for a contradiction, that this is false. Let $a_t$ be the number of cycles of type $t$ in $G$. Since $G$ does not contain the digraphs described in points \ref{subgraph1},\ref{subgraph2},\ref{subgraph3},\ref{subgraph4}, we have $a_0 = 0$, $a_1,a_3 \leq 3$, and $a_2 \leq 1$. Furthermore, since $G$ does not contain the digraphs described in point~\ref{subgraph5} either $a_1 = 0$ or $a_3=0$. If $a_1 = 0$ (\textit{resp.} $a_3 = 0$) then all the odd cycles of $G$ are of type $3$ (\textit{resp.} $1$) and therefore $a_3$ (\textit{resp.} $a_1$) is even (since $G$ has an even number of vertices). Therefore, $(a_1,a_3) \in \{ (0,0) , (2,0) , (0,2) \}$. Note that since $G$ contains $4n$ vertices, $[a_1+2a_2+3a_3] = 0$. Suppose that $(a_1,a_3) = (0,0)$ then $0 = [a_1+2a_2+3a_3] = [2a_2]$ and therefore $a_2=0$, so $G$ is empty, a contradiction. Now suppose that $(a_1,a_3) = (2,0)$ (\textit{resp.} $(a_1,a_3) = (0,2)$).  Since $G$ does not contain the digraphs described in points~\ref{subgraph6} (\textit{resp.} \ref{subgraph7}), we have $a_2=0$. Therefore,  $[a_1+2a_2+3a_3] = [a_1] = [2] = 2$ ( \textit{resp.} $[a_1+2a_2+3a_3] = [3a_3] = [6] = 2$) which is a contradiction. This proves that $G$ contains one of the listed digraphs. 

\medskip
Let us now prove, by induction on the number $m$ of cycles in $G$, that $G$ has a balanced coloring. 
If $m=1$ then $G$ is a cycle of type 0, which admits a balanced coloring. Suppose that $m>1$. As proved above, $G$ has a subgraph $G_1$, with $4n'$ vertices, which admits a balanced coloring. If $G=G_1$ we are done. Otherwise, let $G_2$ be the subgraph of $G$ obtained by deleting the cycles of $G_1$. Since $G_2$ has $4(n-n')$ vertices and at most $m-1$ cycles, by induction hypothesis, $G_2$ has a balanced coloring, and with the balanced coloring of $G_1$ we obtain a balanced coloring of~$G$.
\end{proof}

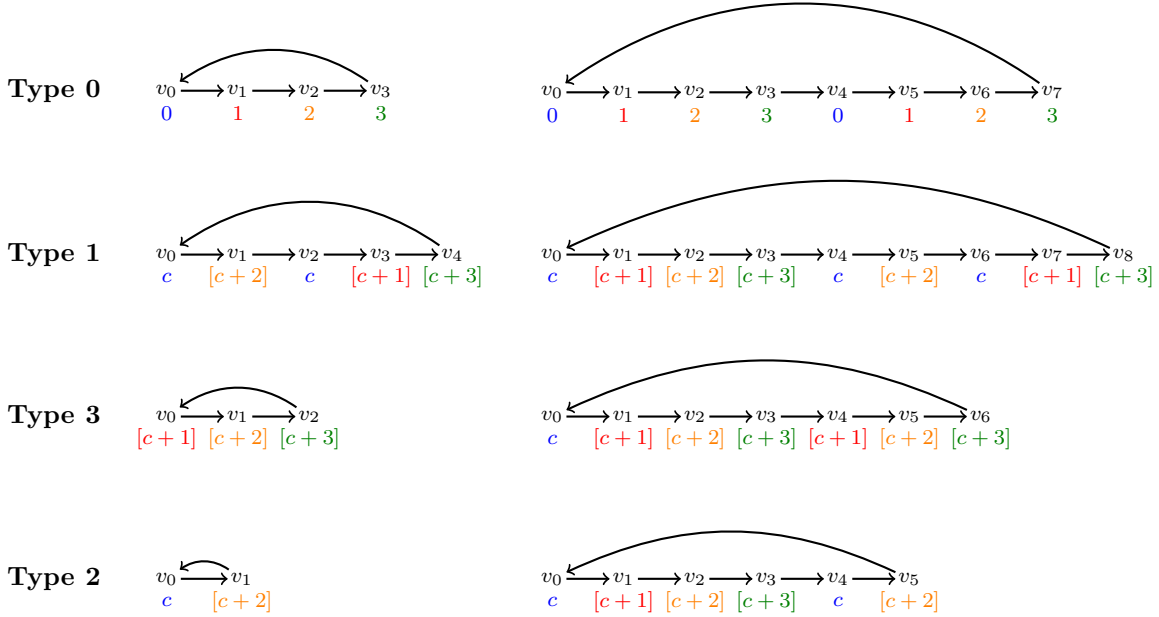
\begin{figure}[h]
\[
\def\x{0.95}
\begin{array}{ll}
	\begin{array}{c}
    \begin{tikzpicture}
 	   \draw[white] (-0.5,-0.5) rectangle (3.5,1.5);
	   
		\node[inner sep=1] (t0) at (-1.5,0){\textbf{\small Type 0}};
 	   	
        \node[inner sep=1] (v0) at ({0*\x},0){\scriptsize $v_0$};
        \node[inner sep=1] (v1) at ({1*\x},0){\scriptsize $v_1$};
        \node[inner sep=1] (v2) at ({2*\x},0){\scriptsize $v_2$};
        \node[inner sep=1] (v3) at ({3*\x},0){\scriptsize $v_3$};
        
        \node[inner sep=1,color=blue] (c1) at 			({0*\x},-0.3){\scriptsize $0$};
        \node[inner sep=1,color=red] (c2) at 			({1*\x},-0.3){\scriptsize $1$};
        \node[inner sep=1,color=orange] (c3) at 		({2*\x},-0.3){\scriptsize $2$};
        \node[inner sep=1,color=green!50!black] (c4) at ({3*\x},-0.3){\scriptsize $3$};
        \path[thick,->]
        (v0) edge (v1)
        (v1) edge (v2)
        (v2) edge (v3)
        (v3) edge[bend right=35] (v0)
        ;
    \end{tikzpicture}
    \end{array}
    &
    \begin{array}{c}
    \begin{tikzpicture}
 	   \draw[white] (0,-0.5) rectangle (3.5,1.5);
 	   	
        \node[inner sep=1] (v0) at ({0*\x},0){\scriptsize $v_0$};
        \node[inner sep=1] (v1) at ({1*\x},0){\scriptsize $v_1$};
        \node[inner sep=1] (v2) at ({2*\x},0){\scriptsize $v_2$};
        \node[inner sep=1] (v3) at ({3*\x},0){\scriptsize $v_3$};
        \node[inner sep=1] (v4) at ({4*\x},0){\scriptsize $v_4$};
        \node[inner sep=1] (v5) at ({5*\x},0){\scriptsize $v_5$};
        \node[inner sep=1] (v6) at ({6*\x},0){\scriptsize $v_6$};
        \node[inner sep=1] (v7) at ({7*\x},0){\scriptsize $v_7$};
 
        \node[inner sep=1,color=blue] (c1) at 			({0*\x},-0.3){\scriptsize $0$};
        \node[inner sep=1,color=red] (c2) at 			({1*\x},-0.3){\scriptsize $1$};
        \node[inner sep=1,color=orange] (c3) at 		({2*\x},-0.3){\scriptsize $2$};
        \node[inner sep=1,color=green!50!black] (c4) at ({3*\x},-0.3){\scriptsize $3$};
        \node[inner sep=1,color=blue] (c5) at 			({4*\x},-0.3){\scriptsize $0$};
        \node[inner sep=1,color=red] (c6) at 			({5*\x},-0.3){\scriptsize $1$};
        \node[inner sep=1,color=orange] (c7) at 		({6*\x},-0.3){\scriptsize $2$};
        \node[inner sep=1,color=green!50!black] (c8) at ({7*\x},-0.3){\scriptsize $3$};
        \path[thick,->]
        (v0) edge (v1)
        (v1) edge (v2)
        (v2) edge (v3)
        (v3) edge (v4)
        (v4) edge (v5)
        (v5) edge (v6)
        (v6) edge (v7)
        (v7) edge[bend right=35] (v0)
        ;
    \end{tikzpicture}
    \end{array}
    \\ 
	\begin{array}{c}
    \begin{tikzpicture}
 	   \draw[white] (-0.5,-0.5) rectangle (3.5,1.5);
	   
	   	\node[inner sep=1] (t1) at (-1.5,0){\textbf{\small Type 1}};
 	   	
        \node[inner sep=1] (v0) at ({0*\x},0){\scriptsize $v_0$};
        \node[inner sep=1] (v1) at ({1*\x},0){\scriptsize $v_1$};
        \node[inner sep=1] (v2) at ({2*\x},0){\scriptsize $v_2$};
        \node[inner sep=1] (v3) at ({3*\x},0){\scriptsize $v_3$};
        \node[inner sep=1] (v4) at ({4*\x},0){\scriptsize $v_4$};

        \node[inner sep=1,color=blue] (c0) at 				({0*\x},-0.3){\scriptsize $c$};
        \node[inner sep=1,color=orange] (c2) at 			({1*\x},-0.3){\scriptsize $[c+2]$};
        \node[inner sep=1,color=blue] (c3) at 				({2*\x},-0.3){\scriptsize $c$};
        \node[inner sep=1,color=red] (c4) at 				({3*\x},-0.3){\scriptsize $[c+1]$};
          \node[inner sep=1,color=green!50!black] (c4) at 	({4*\x},-0.3){\scriptsize $[c+3]$};

        \path[thick,->]
        (v0) edge (v1)
        (v1) edge (v2)
        (v2) edge (v3)
        (v3) edge (v4)
        (v4) edge[bend right=35] (v0)
        ;
    \end{tikzpicture}
    \end{array}
    &
    \begin{array}{c}
    \begin{tikzpicture}
 	   \draw[white] (0,-0.5) rectangle (3.5,1.5);
 	   	
        \node[inner sep=1] (v0) at ({0*\x},0){\scriptsize $v_0$};
        \node[inner sep=1] (v1) at ({1*\x},0){\scriptsize $v_1$};
        \node[inner sep=1] (v2) at ({2*\x},0){\scriptsize $v_2$};
        \node[inner sep=1] (v3) at ({3*\x},0){\scriptsize $v_3$};
        \node[inner sep=1] (v4) at ({4*\x},0){\scriptsize $v_4$};
        \node[inner sep=1] (v5) at ({5*\x},0){\scriptsize $v_5$};
        \node[inner sep=1] (v6) at ({6*\x},0){\scriptsize $v_6$};
        \node[inner sep=1] (v7) at ({7*\x},0){\scriptsize $v_7$};
        \node[inner sep=1] (v8) at ({8*\x},0){\scriptsize $v_8$};

        \node[inner sep=1,color=blue] (c1) at 				({0*\x},-0.3){\scriptsize $c$};
        \node[inner sep=1,color=red] (c2) at 				({1*\x},-0.3){\scriptsize $[c+1]$};
        \node[inner sep=1,color=orange] (c3) at 			({2*\x},-0.3){\scriptsize $[c+2]$};
        \node[inner sep=1,color=green!50!black] (c4) at 	({3*\x},-0.3){\scriptsize $[c+3]$};
        \node[inner sep=1,color=blue] (c0) at 				({4*\x},-0.3){\scriptsize $c$};
        \node[inner sep=1,color=orange] (c2) at 			({5*\x},-0.3){\scriptsize $[c+2]$};
        \node[inner sep=1,color=blue] (c3) at 				({6*\x},-0.3){\scriptsize $c$};
        \node[inner sep=1,color=red] (c4) at 				({7*\x},-0.3){\scriptsize $[c+1]$};
          \node[inner sep=1,color=green!50!black] (c4) at 	({8*\x},-0.3){\scriptsize $[c+3]$};
        \path[thick,->]
        (v0) edge (v1)
        (v1) edge (v2)
        (v2) edge (v3)
        (v3) edge (v4)
        (v4) edge (v5)
        (v5) edge (v6)
        (v6) edge (v7)
        (v7) edge (v8)
        (v8) edge[bend right=25] (v0)
        ;
    \end{tikzpicture}
    \end{array}
    \\
	\begin{array}{c}
    \begin{tikzpicture}
 	   \draw[white] (-0.5,-0.5) rectangle (3.5,1.5);
	   
		\node[inner sep=1] (t3) at (-1.5,0){\textbf{\small Type 3}};
 	   	
        \node[inner sep=1] (v0) at ({0*\x},0){\scriptsize $v_0$};
        \node[inner sep=1] (v1) at ({1*\x},0){\scriptsize $v_1$};
        \node[inner sep=1] (v2) at ({2*\x},0){\scriptsize $v_2$};

        \node[inner sep=1,color=red] (c0) at 			({0*\x},-0.3){\scriptsize $[c+1]$};
        \node[inner sep=1,color=orange] (c1) at 		({1*\x},-0.3){\scriptsize $[c+2]$};
        \node[inner sep=1,color=green!50!black] (c2) at ({2*\x},-0.3){\scriptsize $[c+3]$};

        \path[thick,->]
        (v0) edge (v1)
        (v1) edge (v2)
        (v2) edge[bend right=35] (v0)
        ;
    \end{tikzpicture}
    \end{array}
    &
    \begin{array}{c}
    \begin{tikzpicture}
 	   \draw[white] (0,-0.5) rectangle (3.5,1.5);
 	   	
        \node[inner sep=1] (v0) at ({0*\x},0){\scriptsize $v_0$};
        \node[inner sep=1] (v1) at ({1*\x},0){\scriptsize $v_1$};
        \node[inner sep=1] (v2) at ({2*\x},0){\scriptsize $v_2$};
        \node[inner sep=1] (v3) at ({3*\x},0){\scriptsize $v_3$};
        \node[inner sep=1] (v4) at ({4*\x},0){\scriptsize $v_4$};
        \node[inner sep=1] (v5) at ({5*\x},0){\scriptsize $v_5$};
        \node[inner sep=1] (v6) at ({6*\x},0){\scriptsize $v_6$};

        \node[inner sep=1,color=blue] (c0) at 			({0*\x},-0.3){\scriptsize $c$};
        \node[inner sep=1,color=red] (c1) at 			({1*\x},-0.3){\scriptsize $[c+1]$};
        \node[inner sep=1,color=orange] (c2) at 		({2*\x},-0.3){\scriptsize $[c+2]$};
        \node[inner sep=1,color=green!50!black] (c3) at ({3*\x},-0.3){\scriptsize $[c+3]$};
        \node[inner sep=1,color=red] (c4) at 			({4*\x},-0.3){\scriptsize $[c+1]$};
        \node[inner sep=1,color=orange] (c5) at 		({5*\x},-0.3){\scriptsize $[c+2]$};
        \node[inner sep=1,color=green!50!black] (c6) at ({6*\x},-0.3){\scriptsize $[c+3]$};
        \path[thick,->]
        (v0) edge (v1)
        (v1) edge (v2)
        (v2) edge (v3)
        (v3) edge (v4)
        (v4) edge (v5)
        (v5) edge (v6)
        (v6) edge[bend right=25] (v0)
        
        ;
    \end{tikzpicture}
    \end{array}
    \\ 
	\begin{array}{c}
    \begin{tikzpicture}
 	   \draw[white] (-0.5,-0.5) rectangle (3.5,1.5);
	   
		\node[inner sep=1] (t2) at (-1.5,0){\textbf{\small Type 2}};
 	   	
        \node[inner sep=1] (v0) at (0,0){\scriptsize $v_0$};
        \node[inner sep=1] (v1) at (1,0){\scriptsize $v_1$};

        \node[inner sep=1,color=blue] (c0) at (0,-0.3){\scriptsize $c$};
        \node[inner sep=1,color=orange] (c2) at (1,-0.3){\scriptsize $[c+2]$};

        \path[thick,->]
        (v0) edge (v1)
        (v1) edge[bend right=35] (v0)
        ;
    \end{tikzpicture}
    \end{array}
    &
    \begin{array}{c}
    \begin{tikzpicture}
 	   \draw[white] (0,-0.5) rectangle (3.5,1.5);
 	   	
        \node[inner sep=1] (v0) at ({0*\x},0){\scriptsize $v_0$};
        \node[inner sep=1] (v1) at ({1*\x},0){\scriptsize $v_1$};
        \node[inner sep=1] (v2) at ({2*\x},0){\scriptsize $v_2$};
        \node[inner sep=1] (v3) at ({3*\x},0){\scriptsize $v_3$};
        \node[inner sep=1] (v4) at ({4*\x},0){\scriptsize $v_4$};
        \node[inner sep=1] (v5) at ({5*\x},0){\scriptsize $v_5$};

        \node[inner sep=1,color=blue] (c1) at 			({0*\x},-0.3){\scriptsize $c$};
        \node[inner sep=1,color=red] (c2) at 			({1*\x},-0.3){\scriptsize $[c+1]$};
        \node[inner sep=1,color=orange] (c3) at 		({2*\x},-0.3){\scriptsize $[c+2]$};
        \node[inner sep=1,color=green!50!black] (c4) at ({3*\x},-0.3){\scriptsize $[c+3]$};
        \node[inner sep=1,color=blue] (c0) at 			({4*\x},-0.3){\scriptsize $c$};
        \node[inner sep=1,color=orange] (c2) at 		({5*\x},-0.3){\scriptsize $[c+2]$};
        \path[thick,->]
        (v0) edge (v1)
        (v1) edge (v2)
        (v2) edge (v3)
        (v3) edge (v4)
        (v4) edge (v5)
        (v5) edge[bend right=25] (v0)
        ;
    \end{tikzpicture}
    \end{array}
\end{array}
\]
{\caption{\label{fig:coloring} Examples of coloring for the four types of cycles.}}
\end{figure}

We can now prove the following strengthening of Theorem \ref{thm:strong}.

\begin{theorem}
Let $f\in F(n)$ be a derangement. There exists $h\sim f$ such that, for all $x,y\in\B^n$, $\A(h)$ has a path from $x$ to $y$ of length at most $d(x,y)+4$. 
\end{theorem}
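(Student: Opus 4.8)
The plan is to build $h$ from a balanced $H_4$-coloring of $\S(f)$ and then route short paths by hand. Assume $n\ge 2$ (for $n\le 1$ a derangement is the single transposition and $\A(h)$ is already a strongly connected $2$-cycle). Since $f$ is a derangement, $\S(f)$ is a disjoint union of cycles of length $\ge 2$ on $2^n$ vertices, and $4\mid 2^n$; hence Lemma~\ref{lem:coloring} provides a balanced $H_4$-coloring $\phi:\B^n\to\{0,1,2,3\}$ with each color class of size $2^{n-2}$. I identify the four colors with the four values of $(x_1,x_2)$ via the Gray map $g$ given by $g(0)=(0,0),\ g(1)=(1,0),\ g(2)=(1,1),\ g(3)=(0,1)$, chosen so that $g(c)$ and $g(c+1)$ differ in exactly one coordinate of $\{1,2\}$ while $g(c)$ and $g(c+2)$ differ in both (indices mod $4$). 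As each color class and each \emph{quadrant} $\{z:(z_1,z_2)=g(c)\}$ have the same size $2^{n-2}$, there is a permutation $\pi$ of $\B^n$ sending $\phi^{-1}(c)$ onto the quadrant $g(c)$ for each $c$; I set $h=\pi\circ f\circ\pi^{-1}$, so that $h\sim f$.

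The engine of the proof is a single invariant. Writing $c(z)=g^{-1}(z_1,z_2)$, the definition of $\pi$ gives $c(z)=\phi(\pi^{-1}(z))$ and $(h_1(z),h_2(z))=g\bigl(\phi(f(\pi^{-1}(z)))\bigr)$; since $\phi$ is an $H_4$-coloring and every arc of $H_4$ goes from $i$ to $i+1$ or $i+2$, this yields $(h_1(z),h_2(z))\in\{g(c(z)+1),\,g(c(z)+2)\}$ for every $z$. Consequently the coordinate $i\in\{1,2\}$ in which $g(c(z))$ and $g(c(z)+1)$ differ always satisfies $h_i(z)\ne z_i$, so $\A(h)$ contains the \emph{advance} arc $z\to z+e_i$ that moves $z$ to quadrant $c(z)+1$ while leaving coordinates $3,\dots,n$ untouched; when $h(z)$ is two quadrants ahead, \emph{both} of coordinates $1,2$ are flippable at $z$. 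In particular $\A(h)$ can, from any configuration and in at most three steps that fix all coordinates $\ge 3$, reach any prescribed quadrant of its column.

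To bound path lengths I would route from $x$ to $y$ greedily, measuring progress by $d(\cdot,y)$. A flip of a coordinate $i$ with $z_i\ne y_i$ is a \emph{progress} step (distance drops by one); any other flip is a \emph{waste} step (distance rises by one), so a route using $w$ waste steps has length $d(x,y)+2w$ and it suffices to keep $w\le 2$. Each full traversal of the quadrant cycle flips coordinate $1$ twice and coordinate $2$ twice, so ``$w\le 2$'' is exactly the statement that the route overshoots the direct quadrant motion by at most one extra lap. The coordinates $1,2$ are corrected last, simply by advancing to the quadrant $g^{-1}(y_1,y_2)$, which the advance arcs do without disturbing the already-corrected high coordinates and at the cost of at most one extra lap's worth of waste. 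Each high coordinate $j$ is corrected \emph{en route}: whenever $\A(h)$ reaches a configuration with $h_j(z)\ne z_j$ and $z_j\ne y_j$, flipping $e_j$ is a progress step that also crosses into the column $z'+e_j$.

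The delicate point, which I expect to be the main obstacle, is to guarantee that all the high-coordinate corrections can be fitted into at most one extra lap, i.e. that $w\le 2$ holds globally rather than per coordinate. A coordinate $j\ge 3$ is flippable only at the quadrants dictated by $h$, not chosen by us, and one must also rule out that some coordinate is globally frozen ($h_j\equiv \mathrm{id}_j$), which would disconnect $\A(h)$ along $j$; here the fact that $h$ is a bijection and that $\phi$ is \emph{balanced} forces the crossing arcs to exist and to be distributed over the quadrants. The real work is to exploit the remaining freedom in $\pi$ (its action inside each color class) so that, as one advances once around the quadrants, every high coordinate still needing correction can be flipped before a second extra lap is begun; this amortization of the column crossings against a single extra traversal of the quadrant cycle is where balancedness and the choice of $\pi$ are essential, and is considerably more involved than the construction and the advance invariant, which are routine.
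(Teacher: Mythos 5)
Your framework is exactly the paper's: a balanced $H_4$-coloring of $\S(f)$ via Lemma~\ref{lem:coloring}, a Gray-type identification of the four colour classes with the quadrants $\{z:(z_1,z_2)=ab\}$, the conjugation $h=\pi\circ f\circ\pi^{-1}$, and the observation that every column (fixed values of $z_3,\dots,z_n$) carries a directed $4$-cycle of ``advance'' arcs through the quadrants. But the step you flag yourself as ``the real work'' is a genuine gap, and your sketch of it would not go through as stated. With $\pi$ constrained only to send colour classes onto quadrants, nothing controls \emph{where} a high coordinate $j\geq 3$ is flippable: balancedness of $\phi$ at best guarantees that arcs crossing in direction $j$ exist somewhere in $\A(h)$, not that one is available in the particular column and quadrant your greedy route is passing through. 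A priori each high coordinate still needing correction could force its own detour, giving a bound like $d(x,y)+O(\delta(x,y))$ rather than $d(x,y)+4$; the claimed global budget of $w\leq 2$ waste steps is never established, and no amortization argument ``over one extra lap'' is supplied or evident.

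The missing idea — and the way the paper closes exactly this hole — is one further constraint on $\pi$, available precisely because $f$ is a derangement: for every $x$ with $\phi(x)=0$, choose $\pi(f(x))$ so that $\pi(f(x))_i\neq\pi(x)_i$ for \emph{all} $3\leq i\leq n$. This is consistent with the quadrant constraints since $f$ is injective: within each of the colour classes $1$ and $2$, the forced high parts are complements of pairwise distinct high parts, hence pairwise distinct, and the first two coordinates are already fixed by the quadrant. The payoff is that every $z\in X_{00}$ satisfies $h_i(z)\neq z_i$ for all $i\geq 3$, so $X_{00}$ becomes a universal hub in which \emph{every} high coordinate is simultaneously flippable, and such flips stay inside $X_{00}$. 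The route is then deterministic, with no en-route opportunism: advance around the column cycle from $x$ into $X_{00}$ (length $d_{F_4}(x_1x_2,00)$), fix the $\delta(x,y)$ disagreeing high coordinates one per step inside $X_{00}$ (an induction on $\delta(x,y)$), then advance to the quadrant of $y$ (length $d_{F_4}(00,y_1y_2)$); the total is $d_{F_4}(x_1x_2,00)+d_{F_4}(00,y_1y_2)+\delta(x,y)\leq d(x,y)+4$, which is exactly your $w\leq 2$ budget. Without some such strengthening of $\pi$ your proposal does not prove the theorem; with it, the ``considerably more involved'' amortization you anticipated disappears entirely.
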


\begin{proof}
If $n=2$, then $\S(f)$ is isomorphic to $2C_2$ or $C_4$; in the first case, take $h\in F(2)$ defined by $h(x)=\overline{x}$ and in the second case, take $h\in F(2)$ defined by $h(x)=(x_2,\overline{x_1})$. So suppose that $n\geq 3$. Then, $\S(f)$ is a digraph on $4\cdot 2^{n-2}$ vertices that consists of a disjoint union of cycles of length $\geq 2$. Hence, by Lemma~\ref{lem:coloring}, $\S(f)$ has a balanced $H_4$-coloring $\phi$. For all $a,b\in\B$, let $X_{ab}$ be the set of $x\in\B^n$ with $x_1=a$ and $x_2=b$. Let $h=\pi\circ f\circ\pi^{-1}$ where $\pi$ is any permutation of $\B^n$ such that, for all $x\in\B^n$, 
\begin{itemize}
\item
if $\phi(x)=0$ then $\pi(x)\in X_{00}$, 
\item
if $\phi(x)=1$ then $\pi(x)\in X_{10}$, 
\item
if $\phi(x)=2$ then $\pi(x)\in X_{11}$, 
\item
if $\phi(x)=3$ then $\pi(x)\in X_{01}$, 
\item
if $\phi(x)=0$ then $\pi(f(x))_i\neq\pi(x)_i$ for all $3\leq i\leq n$.
\end{itemize}
The first four points are possible since the $4$ colors classes and $\{X_{00},X_{01},X_{10},X_{11}\}$ are both balanced partitions of the configurations; and the last point is possible because $f$ is a derangement and the given constraints are independent of the previous ones.   

\medskip
Note that, for any $x\in X_{00}$, $\A(h)$ contains the cycle of length $4$ whose vertices are $x,x+e_1,x+e_{1,2},x+e_2$ in order. Indeed, since $x\in X_{00}$, setting $x'=\pi^{-1}(x)$, we have $\phi(x')=0$, so $\phi(f(x'))\in\{1,2\}$ (since $\phi$ is an $H_4$-coloring) thus $h(x)=\pi(f(x'))\in\{X_{10},X_{11}\}$. So $h_1(x)=1$ and thus $\A(h)$ has an arc from $x$ to $x+e_1$. We prove the presence of the three other arcs of the cycle similarly.

\medskip
Let $F_4$ be the cycle of length $4$ whose vertices are $00$, $10$, $01$, $11$ in order, and let $d_{F_4}(a,b)$ be the length of a shortest path from $a$ to $b$ in $F_4$. Note that, $d_{F_4}(a,00)+d_{F_4}(00,b)\leq d(a,b)+4$. Given $x,y\in\B^n$, we denote by $\delta(x,y)$ the number of $3\leq i\leq n$ with $x_i\neq y_i$ and set 
\[
D(x,y):=d_{F_4}(x_1x_2,00)+d_{F_4}(00,y_1y_2)+\delta(x,y)\leq d(x_1x_2,y_1y_2)+4+\delta(x,y)=d(x,y)+4.
\]

\medskip
We now prove that, for any $x,y\in\B^n$, $\A(h)$ has a path from $x$ to $y$ of length at most $D(x,y)$. We proceed by induction on $\delta(x,y)$. The argument above shows that if $\delta(x,y)=0$ then $\A(h)$ has a path from $x$ to $y$ contained in a cycle of length $4$ and this path is of length $d_{F_4}(x_1x_2,y_1y_2)\leq D(x,y)$. So suppose that $\delta(x,y)\geq 1$ and let $3\leq i\leq n$ with $x_i\neq y_i$. Let $z\in X_{00}$ with $z_j=x_j$ for all $3\leq j\leq n$. Since $\delta(x,z)=0$, by the same argument $\A(h)$ has a path from $x$ to $z$ of length $d_{F_4}(x_1x_2,00)$. Since $z\in X_{00}$, setting $z'=\pi^{-1}(z)$, we have  $\phi(z')=0$ and by the fifth point $h_i(z)=\pi(f(z'))_i\neq \pi(z')_i=z_i$. Thus, $\A(h)$ has an arc from $z$ to $z+e_i\in X_{00}$. Since $\delta(z+e_i,y)=\delta(z,y)-1=\delta(x,y)-1$, by induction, $\A(h)$ has a path from $z+e_i$ to $y$ of length at most $d_{F_4}(00,y_1y_2)+\delta(x,y)-1$. Combining, we obtain a path from $x$ to $y$ of length at most $D(x,y)$. 
\end{proof}

\section{Conclusion}\label{sec:conclusion}

In this paper, we study the two most classical dynamics derived from a Boolean network $f$, namely the synchronous dynamics $\S(f)$ and the asynchronous dynamics $\A(f)$. Since many dynamical properties of interest are invariant under isomorphism, we study these dynamics up to isomorphism. More precisely, we address the following question: what information about one dynamic can be inferred from the other when they are only known up to isomorphism ? We essentially show that the relation between the two dynamics is highly asymmetric: up to isomorphism, $\A(f)$ is much more informative about $\S(f)$ than the converse.

\medskip
Let us say that the synchronous reconstruction is possible when $\S(f)$ can be fully reconstructed from $\A(f)$; formally $\S(h)\sim \S(f)$ for every $h$ such that $\A(h)\sim\S(f)$. We prove that asynchronous reconstruction is possible with high probability under the uniform distribution (Theorem \ref{thm:A_to_S}). To this end, we introduce a sufficient condition for synchronous reconstruction: each edge of $Q_n$ is solid in $U\A(f)$, the undirected version of $\A(f)$. We do not know the complexity of testing this sufficient condition, and one may ask whether necessary conditions can be identified. Since this sufficient condition is somewhat difficult to analyse, we introduce a stronger and more concrete one: each edge of $Q_n$ has at least three staples in $U\A(f)$; this stronger condition can be easily checked in $O(n^22^{n-1})$, which is polynomial with respect to the size $N=2^n$ of the input asynchronous dynamics. We then prove that this stronger condition holds with high probability under the uniform distribution. In that case, each edge of $Q_n$ appears in $U\A(f)$ with probability $p=3/4$ and our proof extends to $p=0.72$. An open question is whether synchronous reconstruction remains possible with high probability for smaller values of $p$
or under other kinds of distributions.

\medskip
Conversely, the asynchronous reconstruction is never possible: for $n\geq 3$ and $f\neq\cst,\id$, there always exists $h$ such that $\S(h)\sim \S(f)$ but $\A(h)\not\sim \A(f)$ (Theorem \ref{thm:S_to_A}). Let us denote by $\tilde \A(f)$ the maximal set of pairwise non-isomorphic asynchronous dynamics $\A(h)$ such that $\S(h)\sim \S(f)$; hence the previous theorem says that $|\tilde\A(f)|\geq 2$. Our arguments to establish this are very local, as they only analyse a small portion of $\S(f)$. A more global analysis could significantly improve the lower bound of 2, possibly yielding a bound that increases with $n$. It is plausible that such a lower bound grows rapidly with $n$, which could be supported by showing that, with high probability, the size of $\tilde \A(f)$ is exponential in $n$.  

\medskip
We then consider the number of attractors in the asynchronous dynamics of $\tilde \A(f)$, that is, the set $\K(f)$ of integers $k$ such that $\tilde \A(f)$ has a member with $k$ attractors. It is straightforward that $\min \K(f)\geq \max(1,\fp(f))$ and we prove that this inequality is actually always an equality (Theorem~\ref{thm:small_att}). In the other direction, we prove that, under the uniform distribution, $\max \K(f)=\Omega(2^n)$ with high probability (Theorem~\ref{thm:many_att}). This provides a huge gap with $\min\K(f)$. It may be that, by combining the two proof techniques, one can show that there are many intermediate possibilities, that is, that $|\K(f)|$ is large, at least with high probability. This would emphasise that we cannot say much about the number of attractors of $\A(f)$ if we only know $\S(f)$ up to isomorphism. This could also provide a direction for proving that $|\tilde \A(f)|$ is large.

\medskip
We finally analyse the size of attractors in the asynchronous dynamics in $\tilde \A(f)$, that is, we study the set $\L(f)$ of integers $\ell$ such that $\tilde \A(f)$ has a member with an attractor of size $\ell$. We prove that $\min \L(f)\leq 4$ (Theorem~\ref{thm:small_att}), and that this lower bound is reached if $f$ is a permutation without cycle of length $\leq 2$ (Remark~\ref{rem:4}). We furthermore prove that $\max \L(f)=2^n$ if $f$ is a permutation without fixed point (Theorem \ref{thm:strong}). This results in a significant gap with $\min \K(f)$. It would be interesting to give a general lower bound on $\max \L(f)$ or to show that this maximum is exponential with high probability. Finally, for the same reasons as those discussed above, it would be interesting to analyse the size of $\L(f)$. 

\paragraph{Acknowledgments} This work has been partially funded by the HORIZON-MSCA-2022-
SE-01 project 101131549 ‘‘ACANCOS’’ project and the ANR-24-CE48-7504 ‘‘ALARICE’’ project.

\appendix 

\section{Theorem~\ref{thm:A4} for $n\leq 4$}\label{sec:n4}

For $n=1$ the theorem is obvious, and for $n=2$ this is an easy exercise. For $n=3$, this is not difficult, but annoying, and as, moreover, one can check this case easily by computer, we do not present the proof. The case $n=4$ is much more difficult to check by computer but, hopefully, there is a proof almost identical to the case $n\geq 5$. Indeed, one easily check that Lemma \ref{lem:unavoidable} holds for $n=4$ with $4$ exceptions.

\begin{lemma}\label{lem:unavoidable_n4}
Let $f\in F(4)$ without fixed point. Suppose that $f^2\neq\id$ and that $\S(f)$ is not isomorphic to $\P_5+2C_5$, $2C_5+C_6$, $C_4+2C_6$, or $4C_4$. There exists $g\sim f$ such that $g$ properly contains a closed or $0$-open pattern $\P$. 
\end{lemma}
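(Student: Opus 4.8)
The plan is to replay the proof of Lemma~\ref{lem:unavoidable} almost verbatim, isolating exactly where the hypothesis $n\geq 5$ was used and repairing those spots by hand for $n=4$. The only role of $n\geq 5$ was to guarantee, for a pattern $\P$ with $X=X(\P)$, the counting condition of Lemma~\ref{decreasing_lemma}: for $1<\ell<n$, $\bar X$ contains at least ${n-1\choose \ell-1}+1$ configurations of weight $\ell$. For $n=4$ this is only required at $\ell=2$ and $\ell=3$, where the bounds become ${3\choose 1}+1=4$ and ${3\choose 2}+1=4$; since there are only ${4\choose 2}=6$ and ${4\choose 3}=4$ configurations of these weights, the condition holds if and only if $X$ contains at most two configurations of weight $2$ and no configuration of weight $3$. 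First I would tabulate, for each of the sixteen patterns, the number of weight-$2$ and weight-$3$ configurations in $X(\P)$, and observe that every pattern meets these now-tight bounds except $\P^0_3$ (three of weight $2$) and $\P^0_4$ (three of weight $2$ and one of weight $3$).

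Consequently, every branch of the case analysis in Lemma~\ref{lem:unavoidable} producing a pattern other than $\P^0_3$ or $\P^0_4$ carries over unchanged, since the verification that $g(\bar X)\neq\bar X$ there does not depend on $n$. The only two broken branches are the one using $\P^0_3$, namely $\ell=4$ with at least two limit cycles, and the one using $\P^0_4$, namely $\ell=5$ with at least two limit cycles (with $\ell,L$ the minimum and maximum limit-cycle lengths). So the problem reduces to re-treating these two branches over the $2^4=16$ configurations and exhibiting a suitable pattern unless $\S(f)$ is one of the four listed graphs. A reformulation I would use throughout is that, for a fixed embedding of $\P$, one has $g(\bar X)\neq\bar X$ if and only if $\bar X$ is \emph{not} a union of complete limit cycles of $g$; in particular an open pattern always gives $g(\bar X)\neq\bar X$ (its $X$ splits one cycle), while a closed pattern does so exactly when $\bar X$ still contains a non-periodic configuration or a partially covered cycle.

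In each broken branch I would first dispose of $L\geq 7$: a limit cycle of length $\geq 7$ contains a path of length $6$, so $g$ can be taken to contain $\P^0_5$, which passes the $n=4$ bounds and, being open, satisfies $g(\bar X)\neq\bar X$. Assume then $L\leq 6$. If $f$ is not a permutation, I would plug a closed pattern on the minimum cycle: for $\ell=4$ take $\P_2$ on a $C_4$, and for $\ell=5$ take $\P_6$ on a $C_6$ if one exists, otherwise $\P_5$ on a $C_5$ together with a non-periodic configuration feeding into it; in each case a non-periodic configuration survives in $\bar X$, giving $g(\bar X)\neq\bar X$ — the sole failure being when the $16$ configurations form three $C_5$'s plus one non-periodic configuration feeding one of them, i.e.\ $\S(f)\sim\P_5+2C_5$, where $\bar X$ is forced to be two complete $C_5$'s. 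Finally, if $f$ is a permutation with $L\leq 6$, I would enumerate the relevant partitions of $16$: for $\ell=4$ into parts from $\{4,5,6\}$ containing a $4$, giving only $4C_4$ and $C_4+2C_6$; for $\ell=5$ into parts from $\{5,6\}$, giving only $2C_5+C_6$. In all these permutation cases every admissible pattern leaves $\bar X$ equal to a union of complete cycles, so none can be properly contained, and these are precisely the three remaining exceptions.

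The main obstacle is the bookkeeping of $g(\bar X)\neq\bar X$ in the non-permutation sub-cases: one must check that the embedding can always be chosen so that a non-periodic configuration (or a split cycle) remains in $\bar X$, and pin down the unique structure ($\P_5+2C_5$) where this is impossible. Once the equivalence ``$g(\bar X)=\bar X$ iff $\bar X$ is a union of complete cycles'' is established, this becomes a finite and transparent verification, and the permutation sub-cases reduce to the elementary integer partitions above; combined with the tight weight-count table of the first step, these four graphs are exactly the ones excluded in the statement.
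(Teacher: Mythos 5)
Your proposal is correct and is essentially the paper's own (unwritten) argument: the paper proves Lemma~\ref{lem:unavoidable_n4} only by asserting that "one easily checks" the case analysis of Lemma~\ref{lem:unavoidable} for $n=4$, and your verification carries out exactly that check — the counting condition of Lemma~\ref{decreasing_lemma} reduces for $n=4$ to "$X(\P)$ has at most two weight-$2$ configurations and none of weight $3$", which kills precisely the $\P^0_3$ and $\P^0_4$ branches ($\ell=4$ resp.\ $\ell=5$ with several limit cycles), and your repairs ($\P^0_5$ when $L\geq 7$; $\P_2$, $\P_6$, or $\P_5$ with a transient feeder in the non-permutation cases, via the correct equivalence $g(\bar X)=\bar X$ iff $\bar X$ is a union of complete limit cycles; integer partitions of $16$ in the permutation cases) isolate exactly the four excluded graphs $\P_5+2C_5$, $2C_5+C_6$, $C_4+2C_6$, $4C_4$. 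The only nitpick is cosmetic: since $X(\P^1_k)=X(\P^0_k)$, the patterns $\P^1_3$ and $\P^1_4$ also violate the $n=4$ weight counts, so "every pattern except $\P^0_3$ and $\P^0_4$" should read "every closed or $0$-open pattern" — immaterial here, as the lemma and the case analysis use only those.
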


Now we can prove, as in Section~\ref{sec:one_att}, that Theorem \ref{thm:A4} holds for $n=4$, with $4$ exceptions. Indeed, let $f\in F(4)$ without fixed point and suppose that $\S(f)$ is not isomorphic to $\P_5+2C_5$, $2C_5+C_6$, $C_4+2C_6$, or $4C_4$. If $f^2=\id$ then we are done by Lemma~\ref{lem:f^2=id}. Otherwise, by Lemma~\ref{lem:unavoidable_n4}, there exists $g\sim f$ such that $g$ properly contains a closed or $0$-open pattern $\P$, and we are done by Lemma \ref{lem:pattern_converging}. 

\medskip
It remains to treat the $4$ exceptions. This is done in Figure~\ref{fig:n4}. 

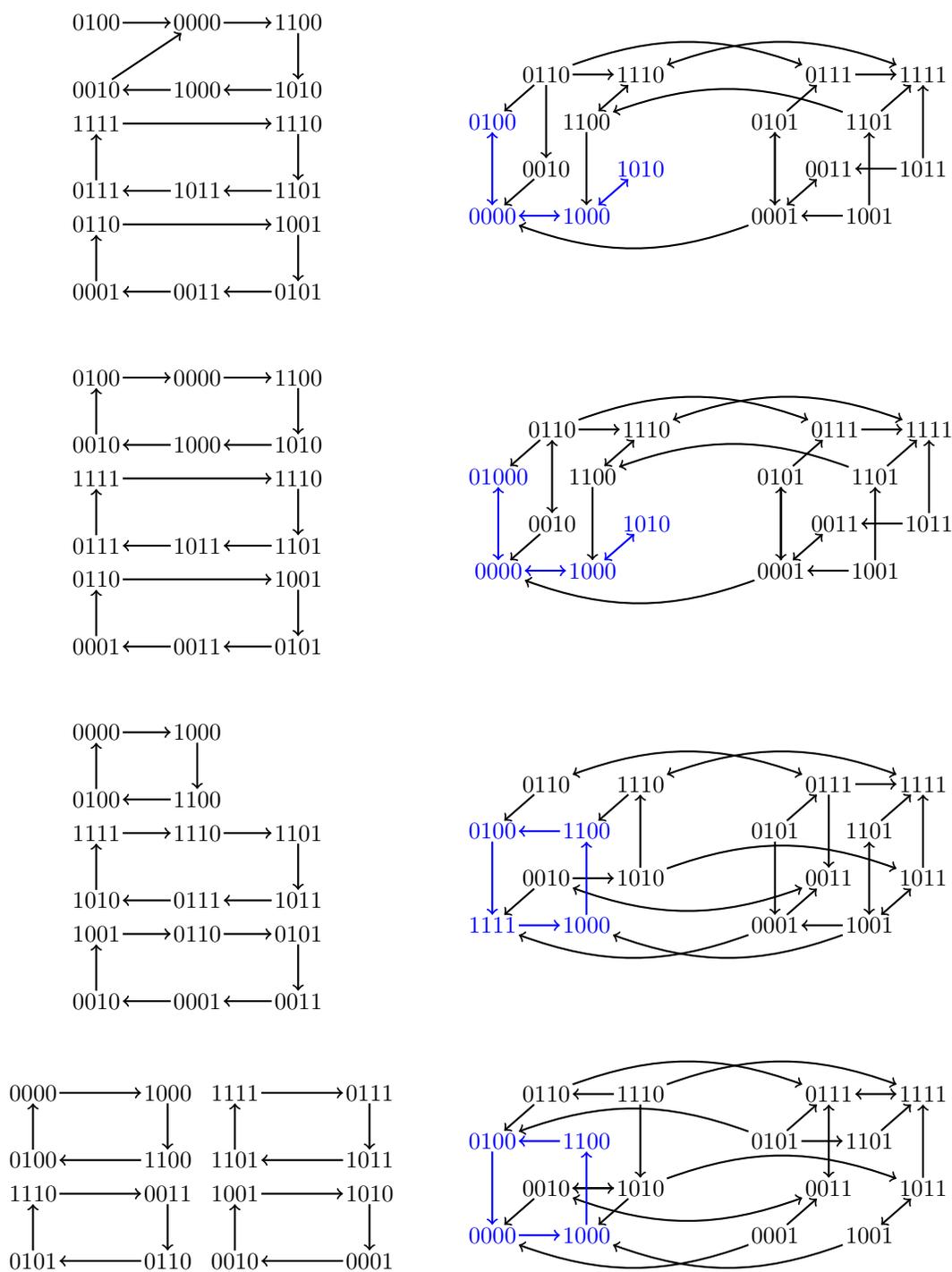
\begin{figure}[h]
\[
\def\x{1.4}
\def\z{0.8}
\def\zz{0.7}
\begin{array}{ccl}
    \begin{array}{c}
    \begin{tikzpicture}
    	\draw[white] (-0.5,-0.5) rectangle (3.5,1.5);
        \node[inner sep=1] (2) at (0,1){\small $0100$};
        \node[inner sep=1] (0) at (1.5,1){\small $0000$};
        \node[inner sep=1] (12) at (3,1){\small $1100$};
        \node[inner sep=1] (13) at (3,0){\small $1010$};
        \node[inner sep=1] (1) at (1.5,0){\small $1000$};
        \node[inner sep=1] (3) at (0,0){\small $0010$};

        \node[inner sep=1] (1234) at (0,-0.5){\small $1111$};
        \node[inner sep=1] (123) at (3,-0.5){\small $1110$};
        \node[inner sep=1] (124) at (3,-1.5){\small $1101$};
        \node[inner sep=1] (134) at (1.5,-1.5){\small $1011$};
        \node[inner sep=1] (234) at (0,-1.5){\small $0111$};

        \node[inner sep=1] (23) at (0,-2){\small $0110$};
        \node[inner sep=1] (14) at (3,-2){\small $1001$};
        \node[inner sep=1] (24) at (3,-3){\small $0101$};
        \node[inner sep=1] (34) at (1.5,-3){\small $0011$};
        \node[inner sep=1] (4) at (0,-3){\small $0001$};
        
        \path[thick,->]
   	  	(2) edge (0)
        (0) edge (12)
        (12) edge (13)
        (13) edge (1)
        (1) edge (3)
        (3) edge (0)
   	  	(1234) edge (123)
        (123) edge (124)
        (124) edge (134)
        (134) edge (234)
        (234) edge (1234)
   	  	(23) edge (14)
        (14) edge (24)
        (24) edge (34)
        (34) edge (4)
        (4) edge (23)
        ;
       \end{tikzpicture}
    \end{array}
    &&
    \begin{array}{c}
    \begin{tikzpicture}
    	\draw[white] (-0.3,-0.3) rectangle ({\x+\z+0.3},{\x+0.3});
        \node[inner sep=1,color=blue] (2) at (0,\x){\small $0100$};
        \node[inner sep=1,color=blue] (0) at (0,0){\small $0000$};
        \node[inner sep=1] (12) at (\x,\x){\small $1100$};
        \node[inner sep=1,color=blue] (13) at ({\x+\z},{0+\zz}){\small $1010$};
        \node[inner sep=1,color=blue] (1) at (\x,0){\small $1000$};
        \node[inner sep=1] (3) at (\z,\zz){\small $0010$};
        \node[inner sep=1] (123) at ({\x+\z},{\x+\zz}){\small $1110$};
        \node[inner sep=1] (23) at (\z,\x+\zz){\small $0110$};
        
        \node[inner sep=1] (24) at (3*\x,\x){\small $0101$};
        \node[inner sep=1] (4) at (3*\x,0){\small $0001$};
        \node[inner sep=1] (124) at (3*\x+\x,\x){\small $1101$};
        \node[inner sep=1] (134) at ({3*\x+\x+\z},{0+\zz}){\small $1011$};
        \node[inner sep=1] (14) at (3*\x+\x,0){\small $1001$};
        \node[inner sep=1] (34) at (3*\x+\z,\zz){\small $0011$};
        \node[inner sep=1] (1234) at ({3*\x+\x+\z},{\x+\zz}){\small $1111$};
        \node[inner sep=1] (234) at (3*\x+\z,\x+\zz){\small $0111$};
        
        \path[thick,->]
        (0) edge[<->,color=blue] (1)
        (0) edge[<->,color=blue] (2)
        (1) edge[<->,color=blue] (13)
        (3) edge (0)
		(4) edge[bend left=20] (0)
		(4) edge (24)
		(4) edge[<->] (34)
        (12) edge (1)
        (12) edge[<->] (123)
        (14) edge (4)
        (14) edge (124)
        (23) edge (123)
        (23) edge (3)
        (23) edge (2)
        (23) edge[bend left=20] (234)
        (24) edge (4)
        (24) edge (234)
        (123) edge[<->,bend left=20] (1234)
        (124) edge (1234)
        (124) edge[bend right=20] (12)
        (134) edge (1234)
        (134) edge (34)
        (234) edge (1234)
        ;
       \end{tikzpicture}
    \end{array}
	\\
	\\
    \begin{array}{c}
    \begin{tikzpicture}
    	\draw[white] (-0.5,-0.5) rectangle (3.5,1.5);
        \node[inner sep=1] (2) at (0,1){\small $0100$};
        \node[inner sep=1] (0) at (1.5,1){\small $0000$};
        \node[inner sep=1] (12) at (3,1){\small $1100$};
        \node[inner sep=1] (13) at (3,0){\small $1010$};
        \node[inner sep=1] (1) at (1.5,0){\small $1000$};
        \node[inner sep=1] (3) at (0,0){\small $0010$};

        \node[inner sep=1] (1234) at (0,-0.5){\small $1111$};
        \node[inner sep=1] (123) at (3,-0.5){\small $1110$};
        \node[inner sep=1] (124) at (3,-1.5){\small $1101$};
        \node[inner sep=1] (134) at (1.5,-1.5){\small $1011$};
        \node[inner sep=1] (234) at (0,-1.5){\small $0111$};

        \node[inner sep=1] (23) at (0,-2){\small $0110$};
        \node[inner sep=1] (14) at (3,-2){\small $1001$};
        \node[inner sep=1] (24) at (3,-3){\small $0101$};
        \node[inner sep=1] (34) at (1.5,-3){\small $0011$};
        \node[inner sep=1] (4) at (0,-3){\small $0001$};
        
        \path[thick,->]
   	  	(2) edge (0)
        (0) edge (12)
        (12) edge (13)
        (13) edge (1)
        (1) edge (3)
        (3) edge (2)
   	  	(1234) edge (123)
        (123) edge (124)
        (124) edge (134)
        (134) edge (234)
        (234) edge (1234)
   	  	(23) edge (14)
        (14) edge (24)
        (24) edge (34)
        (34) edge (4)
        (4) edge (23)
        ;
       \end{tikzpicture}
    \end{array}
    &&
    \begin{array}{c}
    \begin{tikzpicture}
    	\draw[white] (-0.3,-0.3) rectangle ({\x+\z+0.3},{\x+0.3});
        \node[inner sep=1,color=blue] (2) at (0,\x){\small $01000$};
        \node[inner sep=1,color=blue] (0) at (0,0){\small $0000$};
        \node[inner sep=1] (12) at (\x,\x){\small $1100$};
        \node[inner sep=1,color=blue] (13) at ({\x+\z},{0+\zz}){\small $1010$};
        \node[inner sep=1,color=blue] (1) at (\x,0){\small $1000$};
        \node[inner sep=1] (3) at (\z,\zz){\small $0010$};
        \node[inner sep=1] (123) at ({\x+\z},{\x+\zz}){\small $1110$};
        \node[inner sep=1] (23) at (\z,\x+\zz){\small $0110$};
        
        \node[inner sep=1] (24) at (3*\x,\x){\small $0101$};
        \node[inner sep=1] (4) at (3*\x,0){\small $0001$};
        \node[inner sep=1] (124) at (3*\x+\x,\x){\small $1101$};
        \node[inner sep=1] (134) at ({3*\x+\x+\z},{0+\zz}){\small $1011$};
        \node[inner sep=1] (14) at (3*\x+\x,0){\small $1001$};
        \node[inner sep=1] (34) at (3*\x+\z,\zz){\small $0011$};
        \node[inner sep=1] (1234) at ({3*\x+\x+\z},{\x+\zz}){\small $1111$};
        \node[inner sep=1] (234) at (3*\x+\z,\x+\zz){\small $0111$};
        
        \path[thick,->]
        (0) edge[<->,color=blue] (1)
        (0) edge[<->,color=blue] (2)
        (1) edge[<->,color=blue] (13)
        (3) edge (0)
        (3) edge[<->] (23)
		(4) edge[bend left=20] (0)
		(4) edge (24)
		(4) edge[<->] (34)
        (12) edge (1)
        (12) edge[<->] (123)
        (14) edge (4)
        (14) edge (124)
        (23) edge (123)
        (23) edge (2)
        (23) edge[bend left=20] (234)
        (24) edge (4)
        (24) edge (234)
        (123) edge[<->,bend left=20] (1234)
        (124) edge (1234)
        (124) edge[bend right=20] (12)
        (134) edge (1234)
        (134) edge (34)
        (234) edge (1234)
        ;
       \end{tikzpicture}
    \end{array}
	\\
	\\
    \begin{array}{c}
    \begin{tikzpicture}
    	\draw[white] (-0.5,-0.5) rectangle (3.5,1.5);
        \node[inner sep=1] (2) at (0,0){\small $0100$};
        \node[inner sep=1] (0) at (0,1){\small $0000$};
        \node[inner sep=1] (1) at (1.5,1){\small $1000$};
        \node[inner sep=1] (12) at (1.5,0){\small $1100$};

        \node[inner sep=1] (1234) at (0,-0.5){\small $1111$};
        \node[inner sep=1] (123) at (1.5,-0.5){\small $1110$};
        \node[inner sep=1] (124) at (3,-0.5){\small $1101$};
        \node[inner sep=1] (134) at (3,-1.5){\small $1011$};
        \node[inner sep=1] (234) at (1.5,-1.5){\small $0111$};
        \node[inner sep=1] (13) at (0,-1.5){\small $1010$};

        \node[inner sep=1] (14) at (0,-2){\small $1001$};
        \node[inner sep=1] (23) at (1.5,-2){\small $0110$};
        \node[inner sep=1] (24) at (3,-2){\small $0101$};
        \node[inner sep=1] (34) at (3,-3){\small $0011$};
        \node[inner sep=1] (4) at (1.5,-3){\small $0001$};
        \node[inner sep=1] (3) at (0,-3){\small $0010$};
        
        \path[thick,->]
   	  	(2) edge (0)
        (0) edge (1)
        (1) edge (12)
        (12) edge (2)
   	  	(1234) edge (123)
        (123) edge (124)
        (124) edge (134)
        (134) edge (234)
        (234) edge (13)
        (13) edge (1234)
        
   	  	(14) edge (23)
        (23) edge (24)
        (24) edge (34)
        (34) edge (4)
        (4) edge (3)
        (3) edge (14)
        ;
       \end{tikzpicture}
    \end{array}
    &&
    \begin{array}{c}
    \begin{tikzpicture}
    	\draw[white] (-0.3,-0.3) rectangle ({\x+\z+0.3},{\x+0.3});
        \node[inner sep=1,color=blue] (2) at (0,\x){\small $0100$};
        \node[inner sep=1,color=blue] (0) at (0,0){\small $1111$};
        \node[inner sep=1,color=blue] (12) at (\x,\x){\small $1100$};
        \node[inner sep=1,color=blue] (1) at (\x,0){\small $1000$};
        \node[inner sep=1] (13) at ({\x+\z},{0+\zz}){\small $1010$};
        \node[inner sep=1] (3) at (\z,\zz){\small $0010$};
        \node[inner sep=1] (123) at ({\x+\z},{\x+\zz}){\small $1110$};
        \node[inner sep=1] (23) at (\z,\x+\zz){\small $0110$};
        
        \node[inner sep=1] (24) at (3*\x,\x){\small $0101$};
        \node[inner sep=1] (4) at (3*\x,0){\small $0001$};
        \node[inner sep=1] (124) at (3*\x+\x,\x){\small $1101$};
        \node[inner sep=1] (134) at ({3*\x+\x+\z},{0+\zz}){\small $1011$};
        \node[inner sep=1] (14) at (3*\x+\x,0){\small $1001$};
        \node[inner sep=1] (34) at (3*\x+\z,\zz){\small $0011$};
        \node[inner sep=1] (1234) at ({3*\x+\x+\z},{\x+\zz}){\small $1111$};
        \node[inner sep=1] (234) at (3*\x+\z,\x+\zz){\small $0111$};
        
        \path[thick,->]
   	  	(123) edge[<->,bend left=20] (1234)
   	  	(123) edge (12)
   	  	(124) edge (1234)
   	  	(134) edge (1234)
   	  	(234) edge (1234)
   	  	(234) edge[<->,bend right=20] (23)
   	  	(234) edge (34)
   	  	(13) edge (123)
   	  	(13) edge[bend left=20] (134)
   	  	(14) edge[bend left=20] (1)
   	  	(14) edge (4)
   	  	(14) edge[<->] (124)
   	  	(14) edge[<->] (134)
   	  	(23) edge (2)
   	  	(24) edge (4)
   	  	(24) edge (234)
   	  	(4) edge[bend left=20] (0)
   		(4) edge (34)
        (3) edge (0)
        (3) edge (13)
        (3) edge[<->,bend right=20] (34)
		(0) edge[color=blue] (1)
        (1) edge[color=blue] (12)
   	  	(12) edge[color=blue] (2)
   	  	(2) edge[color=blue] (0)
        ;
       \end{tikzpicture}
    \end{array}
	\\
	\\
    \begin{array}{c}
    \begin{tikzpicture}
    	\draw[white] (-0.5,-0.5) rectangle (3.5,1.5);
        \node[inner sep=1] (2) at (0,0){\small $0100$};
        \node[inner sep=1] (0) at (0,1){\small $0000$};
        \node[inner sep=1] (1) at (2,1){\small $1000$};
        \node[inner sep=1] (12) at (2,0){\small $1100$};

        \node[inner sep=1] (124) at (3,0){\small $1101$};
        \node[inner sep=1] (1234) at (3,1){\small $1111$};
        \node[inner sep=1] (234) at (5,1){\small $0111$};
        \node[inner sep=1] (134) at (5,0){\small $1011$};

        \node[inner sep=1] (24) at (0,-1.5){\small $0101$};
        \node[inner sep=1] (123) at (0,-0.5){\small $1110$};
        \node[inner sep=1] (34) at (2,-0.5){\small $0011$};
        \node[inner sep=1] (23) at (2,-1.5){\small $0110$};

        \node[inner sep=1] (3) at (3,-1.5){\small $0010$};
        \node[inner sep=1] (14) at (3,-0.5){\small $1001$};
        \node[inner sep=1] (13) at (5,-0.5){\small $1010$};
        \node[inner sep=1] (4) at (5,-1.5){\small $0001$};

        \path[thick,->]
   	  	(2) edge (0)
        (0) edge (1)
        (1) edge (12)
        (12) edge (2)

   	  	(124) edge (1234)
        (1234) edge (234)
        (234) edge (134)
        (134) edge (124)
        
   	  	(123) edge (34)
        (34) edge (23)
        (23) edge (24)
        (24) edge (123)

   	  	(3) edge (14)
        (14) edge (13)
        (13) edge (4)
        (4) edge (3)        
        ;
       \end{tikzpicture}
    \end{array}
    &&
    \begin{array}{c}
    \begin{tikzpicture}
    	\draw[white] (-0.3,-0.3) rectangle ({\x+\z+0.3},{\x+0.3});
        \node[inner sep=1,color=blue] (2) at (0,\x){\small $0100$};
        \node[inner sep=1,color=blue] (0) at (0,0){\small $0000$};
        \node[inner sep=1,color=blue] (12) at (\x,\x){\small $1100$};
        \node[inner sep=1] (13) at ({\x+\z},{0+\zz}){\small $1010$};
        \node[inner sep=1,color=blue] (1) at (\x,0){\small $1000$};
        \node[inner sep=1] (3) at (\z,\zz){\small $0010$};
        \node[inner sep=1] (123) at ({\x+\z},{\x+\zz}){\small $1110$};
        \node[inner sep=1] (23) at (\z,\x+\zz){\small $0110$};
        
        \node[inner sep=1] (24) at (3*\x,\x){\small $0101$};
        \node[inner sep=1] (4) at (3*\x,0){\small $0001$};
        \node[inner sep=1] (124) at (3*\x+\x,\x){\small $1101$};
        \node[inner sep=1] (134) at ({3*\x+\x+\z},{0+\zz}){\small $1011$};
        \node[inner sep=1] (14) at (3*\x+\x,0){\small $1001$};
        \node[inner sep=1] (34) at (3*\x+\z,\zz){\small $0011$};
        \node[inner sep=1] (1234) at ({3*\x+\x+\z},{\x+\zz}){\small $1111$};
        \node[inner sep=1] (234) at (3*\x+\z,\x+\zz){\small $0111$};
        
        \path[thick,->]
   	  	(234) edge[<->] (1234)
   	  	(134) edge (1234) 
   	  	(134) edge[<->] (14)   	  	
   	  	(124) edge (1234)
   	  	(123) edge (23)
   	  	(123) edge (13)
   	  	(123) edge[bend left=20] (1234)
   	  	(34) edge[<->] (234)
   	  	(34) edge[<->,bend left=20] (3)
   	  	(23) edge[bend left=20] (234)
   	  	(23) edge (2)
   	  	(24) edge (124)
   	  	(24) edge (234)
   	  	(24) edge[bend right=20] (2)
   	  	(14) edge[bend left=20] (1)
   	  	(13) edge[bend left=20] (134)
   	  	(13) edge (3)
   	  	(13) edge (1)
   	  	(4) edge[bend left=20] (0)
   	  	(4) edge (34)
   	  	(3) edge (13)
   	  	(3) edge (0)
		(0) edge[color=blue] (1)
        (1) edge[color=blue] (12)
   	  	(12) edge[color=blue] (2)
   	  	(2) edge[color=blue] (0)
        ;
       \end{tikzpicture}
    \end{array}
\end{array}
\]
{\caption{\label{fig:n4}Functions in $F(4)$ isomorphic to $\P_5+2C_5$, $2C_5+C_6$, $C_4+2C_6$ and $4C_4$ (left), with their asynchronous graphs (right); each contains a unique attractor $A$ of size $4$ (in blue) and an almost decreasing path from every configuration to $A$.}}
\end{figure}

\bibliographystyle{plain}
\bibliography{BIB}

\end{document}